\documentclass[a4paper]{article}

\usepackage{amsmath,amsthm,amssymb,latexsym,indentfirst}
\RequirePackage{natbib}
\usepackage[a4paper,left=3cm, right=3cm, top=2cm,bottom=3cm,nomarginpar, headheight=23pt,includeheadfoot]{geometry}
\newtheorem{theorem}{Theorem}[section]
\newtheorem{proposition}[theorem]{Proposition}
\newtheorem{lemma}[theorem]{Lemma}
\newtheorem{corollary}[theorem]{Corollary}
\newtheorem{remark}[theorem]{Remark}
\newtheorem{definition}[theorem]{Definition}
\newtheorem{example}[theorem]{Example}
\newtheorem{assumption}[theorem]{Assumption}

\def \Int{\displaystyle\int}

\usepackage{xcolor}
\begin{document}

\title{On Optimal Investment for
a Behavioural Investor in Multiperiod Incomplete Market Models}
\author{Laurence Carassus\\ LPMA, Universit\'e Paris Diderot-Paris 7\\ LMR, URCA \and Mikl\'os R\'asonyi\\ University of Edinburgh
\thanks{M. R\'asonyi thanks
University Paris Diderot-Paris 7
for an invitation in 2010 during which part of this research was carried out and  he dedicates this paper to A. Brecz.}}

\date{\today}

\maketitle

\begin{abstract}
We study the optimal investment problem for a behavioral investor in
an incomplete discrete-time multiperiod financial market model. For
the first time in the literature, we provide easily verifiable and
interpretable conditions for well-posedness. Under
two different sets of assumptions we also establish the
existence of optimal strategies.
\end{abstract}
{\bf Keyword :} Optimisation, existence and well-posedness in
behavioral finance, ``$S$-shaped'' utility function, probability distortion,
Choquet integral.

\section{Introduction}
A classical optimization problem of mathematical finance is to find
the investment strategy that maximizes the expected von
Neumann-Morgenstern utility (\cite{vnm44}) of the portfolio value of some economic
agent, see e.g. Chapter 2 of \cite{fs}. In mathematical terms,
$Eu(X)$ needs to be maximized in $X$ where $u$ is a concave
increasing function and $X$ runs over possible values of admissible
portfolios. Note that the concavity of $u$ refers to the risk aversion of the economic agent.
Since 1947, this approach has been intensively used to model investor behaviour towards risk.
However, as shown by \cite{a53}, one of the fundamental axioms of the von
Neumann-Morgenstern theory is often violated empirically from the observed behaviour of agents.

{Based on experimentation, \cite{kt} introduced the cumulative
prospect theory, which provided a possible solution for the Allais
paradox.} First, this theory asserts that the problem's mental
representation is important: agents analyze their gains or losses
with respect to a given stochastic reference point $B$ rather than
to zero. Second, \cite{kt} assert that potential losses are taken
into account more than potential gains. So agents behave differently
on gains, i.e. on $(X-B)_+$ (where $X$, again, runs over possible
values of admissible portfolios) and on losses, i.e. on $-(X-B)_-$.
Third, agents overweight events with small probabilities (like
extreme events) and underweight the ones with large probabilities.
This can be translated into mathematics by the following
assumptions: investors use an ``$S$-shaped'' utility function $u$
(i.e. $u(x)=u_+(x)$, $x\geq 0$; $u(x)=-u_-(-x)$, $x<0$ where
$u_+,u_-:\mathbb{R}_+\to\mathbb{R}$ are concave and increasing.
{\cite{kt} assume also that $u_-$ is ``stronger" than
$u_+$: $u_-=2,25 u_+$. Next, the investors} distort the
probability measure by a transformation function of the cumulative
distributions: instead of expectations, Choquet integrals appear.
Furthermore, maximization of their objective function takes place
over the random variables of the form $X-B$.

That paper triggered an avalanche of subsequent investigations,
especially in the economics literature, see e.g. the references of \cite{jz} and \cite{cd}. But from the mathematical side the first
significant step ahead is due, quite recently, to \cite{jz}.
This late development, as pointed out in \cite{jz}, is explained by the presence of massively difficult
obstacles: the objective function is non-concave and the probability
distortions make it impossible to use dynamic programming and the related machinery
based on the Bellmann equation.

Up to now two types of models have been studied: complete continuous-time models or one-step models. Here, for the first time
in the literature (to the best of our knowledge) we propose
results in incomplete multiperiod discrete time models.


The existing studies in continuous time models heavily rely on completeness of the market (i.e. all ``reasonable'' random variables can be realized
by continuous trading): see for example \cite{jz} or \cite{cd}. They also make assumptions on the portfolio losses. \cite{cd} allow only portfolios whose
attainable wealth is bounded from below by $0$; in \cite{jz}  the portfolio
may admit losses, but this loss must be bounded from below by a constant
(which may depend, however, on the chosen strategy). Recall, however, that when the (concave) utility
function $u$ is defined on the whole real line, standard utility
maximisation problems usually admit optimal solutions that are \emph{not} bounded
from below, see \cite{sch}.
Note also the papers of \cite{p08} and \cite{cd11} which proposed explicit evaluations of the optimal solution for some specific utility functions.

It thus seems desirable to investigate models which are incomplete
and which allow portfolio losses that can be unbounded from below.
In this paper, we focus on discrete time models, which are
generically incomplete. In \cite{bg} and \cite{hz}, a single period
model is studied. This is the first
mathematical treatment of discrete-time multiperiod incomplete
models in the literature. We allow for a possibly stochastic
reference point $B$. More interestingly, we need no concavity or
even monotonicity assumptions on $u_+,u_-$: only their behavior at
infinity matters. Note that in \cite{jz} and \cite{cd} the functions
$u_+,u_-$ are assumed to be concave and the reference point is
easily incorporated: as the market is complete any stochastic
reference point can be replicated. This is no longer so in our
incomplete setting.

The issue of well-posedness is a recurrent theme in related papers
(see \cite{bg}, \cite{hz}, \cite{jz} and \cite{cd11}). To the best of our knowledge, our Theorem \ref{ajtothm}
below is the first positive result on well-posedness for discrete-time multiperiod
models.

In Theorem \ref{ajtothm} we manage to provide intuitive and easily
verifiable conditions which apply {to a broad class of
functions $u_+,u_-$ and of probability distortions (see Assumption
\ref{as1} and Remark \ref{rmqas1})
as soon as appropriate moment
conditions hold for the price process. We also provide examples
highlighting the kind of parameter restrictions which are necessary
for well-posedness in a multiperiod context. It turns out that
multiple trading periods exhibit phenomena which are absent in the
one-step case.

Existence of optimal strategies is fairly subtle in this setting as no dynamic
programming is possible and there is a lack of concavity and hence popular
compactness substitutes (such as the Koml\'os theorem) do not apply. More surprisingly, and in contrast to the usual maximization of expected utility, it turns out that the investor may increase her satisfaction by exploiting randomized trading strategies.
We provide two types of existence result. The first one (see Theorem \ref{maine} below) use ``relaxed'' strategies: we assume that the strategies are measurable with
respect to some information flow, which have a certain structure and, in particular, allow the use of an external source
of randomness (see Assumption \ref{as3}).  The second existence result (see Theorem
\ref{maine-new} below) is proved for ``pure'' strategies if the information filtration is
rich enough (see Assumption \ref{AAA} which is satisfied by classical incomplete models): there is no need for an external random source.

The standard (concave) utility maximisation machinery provides powerful
tools for risk management as well as for pricing in incomplete markets.
We hope that our present results are not only of theoretical interest but also contribute to the development of a similarly applicable framework for investors with behavioural criteria.



The paper is organized as follows: in section \ref{ketto} we introduce
notation and the market model; section
\ref{harom} presents examples pertinent to the well-posedness of the problem; section
\ref{negy} provides a sufficient condition for well-posedness in a multiperiod market; section \ref{beszur} discusses a relaxation of the set of trading
strategies based on an external random source;
section \ref{ot} proves the existence of optimal portfolios under appropriate conditions
using ``relaxed'' controls which exploit an external random source;
section \ref{secondrevision} proves an existence result for the set
of ordinary controls provided that the information filtration is rich enough; section
\ref{exaoa} exhibits examples showing that our assumptions are satisfied in
a broad class of market models; finally, section \ref{appendix} contains most
of the proofs as well as some auxiliary results.

\section{Market model description}\label{ketto}

Let $(\Omega,\mathcal{F},(\mathcal{F}_t)_{0\leq t\leq T},P)$ be a
discrete-time filtered probability space with time horizon
$T\in\mathbb{N}$. We will often need the set of $m$-dimensional
$\mathcal{F}_t$-measurable random variables, so we introduce the notation $\Xi_t^m$ for this
set.

Let $\mathcal{W}$ denote the set of $\mathbb{R}$-valued random
variables $Y$ such that $E\vert Y\vert^p <\infty$ for all $p>0$.
This family is clearly closed under addition, multiplication and
taking conditional expectation. The family of nonnegative elements in $\mathcal{W}$
is denoted by $\mathcal{W}^+$. With a slight abuse of
notation, for a $d$-dimensional random variable $Y$, we write
$Y\in\mathcal{W}$ when we indeed mean $\vert Y\vert\in\mathcal{W}$.
We will also need $\mathcal{W}^+_t:=\mathcal{W}^+\cap\Xi_t^1$.

When defining objects using an equality we will use the symbol $:=$
in the sequel. Let $\gamma>0$, $X$ be some random variable and
$A\in\mathcal{F}$ an event. We will use the following notations:
$$
P^{\gamma}(A|\mathcal{F}_{t}):=(P(A|\mathcal{F}_{t}))^{\gamma}\quad
E^{\gamma}(X|\mathcal{F}_{t}):=(E(X|\mathcal{F}_{t}))^{\gamma}.
$$

Let $\{S_t,\ 0\leq t\leq T\}$ be a $d$-dimensional adapted process
representing the (discounted) price
of $d$ securities in the financial market in consideration. The notation $\Delta
S_t:=S_t-S_{t-1}$ will often be used. Trading strategies are given
by $d$-dimensional processes $\{\theta_t,\ 1\leq t\leq T\}$ which
are supposed to be predictable (i.e. {$\theta_t \in \Xi_{t-1}^d$})
The class of all such strategies
is denoted by $\Phi$.


Trading is assumed to be self-financing, so the value  of a
portfolio strategy $\theta\in\Phi$ at time $0\leq t\leq T$ is
$$
X_t^{X_0,\theta}:=X_0+\sum_{j=1}^t \theta_j\Delta S_j,
$$
where $X_0$ is the initial capital of the agent in consideration and
the concatenation $xy$ of elements $x,y\in \mathbb{R}^d$ means
that we take their scalar product.

Consider the following technical condition (R). It says, roughly speaking, that there are
no redundant assets, even conditionally, see also Remark 9.1 of \cite{fs}.

\medskip

(R) {\em The support of the (regular)
conditional distribution of $\Delta S_t$ with respect to
$\mathcal{F}_{t-1}$ is not contained in any proper affine subspace
of $\mathbb{R}^d$, almost surely, for all $1\leq t\leq T$.}

\begin{remark}
\label{remdt} {\rm Dropping (R) and modifying Assumption \ref{marche}
in an appropriate way proofs go through but they get very messy.
In this case one should consider suitably defined projections of the strategies on
the affine hull figuring in condition (R). }
\end{remark}

\medskip

The following absence of arbitrage condition is standard, it is
equivalent to the existence of a risk-neutral measure in discrete
time markets with finite horizon, see e.g. \cite{dmw}.

\medskip

(NA) {\em If $X^{0,\theta}_T\geq 0$ a.s. for some $\theta\in\Phi$ then
$X^{0,\theta}_T=0$ a.s.}

\medskip

The next proposition is a trivial reformulation of Proposition 1.1
in \cite{cr}.

\begin{proposition}\label{bibi}  The condition {\rm (R) + (NA)} is equivalent to the existence
of $\mathcal{F}_t$-measurable random variables
$\kappa_t,\pi_t>0$, $0\leq t\leq T-1$ such that
\begin{equation*}\label{mi}
\mathrm{ess.}\inf_{\xi\in{\Xi}_t^d} P(\xi\Delta
S_{t+1}\leq -\kappa_t\vert\xi\vert\,| \mathcal{F}_t)\geq\pi_t\mbox{ a.s.}
\end{equation*}
\end{proposition}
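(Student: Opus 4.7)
The plan is to reduce everything to Proposition~1.1 of \cite{cr}. That result, phrased in the usual way, asserts that under (R) the (NA) condition is equivalent to the existence of $\mathcal{F}_t$-measurable random variables $\kappa_t,\pi_t>0$ such that
\[
\mathrm{ess.}\inf_{\xi\in\Xi_t^d,\ |\xi|=1} P(\xi\Delta S_{t+1}\leq -\kappa_t\,|\, \mathcal{F}_t)\geq \pi_t\quad\text{a.s.}
\]
So the only thing to verify is that restricting $\xi$ to the $\mathcal{F}_t$-measurable unit sphere gives the same essential infimum as letting $\xi$ range over all of $\Xi_t^d$. This is an immediate homogeneity argument: for any $\xi\in\Xi_t^d$, the set $A:=\{\xi=0\}\in\mathcal{F}_t$ carries no contribution (on $A$ the event $\{\xi\Delta S_{t+1}\leq -\kappa_t|\xi|\}$ becomes $\{0\leq 0\}$, which is trivially satisfied), while on $A^c$ one may replace $\xi$ by $\xi/|\xi|$, which is $\mathcal{F}_t$-measurable and of unit norm, without changing either the event $\{\xi\Delta S_{t+1}\leq -\kappa_t|\xi|\}$ or the conditioning.

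I would briefly remind the reader of the idea behind Proposition~1.1 of \cite{cr} to keep the note self-contained: for the forward direction, one works on a regular conditional distribution of $\Delta S_{t+1}$ given $\mathcal{F}_t$ and uses a measurable selection argument on the compact unit sphere together with (R) to show that the function $\xi\mapsto P(\xi\Delta S_{t+1}\leq -\varepsilon\,|\,\mathcal{F}_t)$ is, for $\varepsilon$ small enough, bounded away from $0$ uniformly in $\xi$, on the $\mathcal{F}_t$-measurable event where this infimum is positive; (NA) combined with (R) ensures this event is of full measure. Standard measurable selection then produces the $\mathcal{F}_t$-measurable thresholds $\kappa_t,\pi_t$.

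For the converse direction the argument is even simpler: if $\kappa_t,\pi_t>0$ exist, then for any $\theta\in\Phi$ with $X_T^{0,\theta}\geq 0$ a.s.\ one proceeds by backward induction on $t=T,T-1,\ldots,1$. At each step the conditional probability of a strict loss $\theta_t\Delta S_t<0$ is at least $\pi_{t-1}$ on $\{\theta_t\neq 0\}\in\mathcal{F}_{t-1}$, so combined with the nonnegativity inherited from later steps one forces $\theta_t=0$ a.s., hence $X_T^{0,\theta}=0$, giving (NA); and the scaling/homogeneity shows simultaneously that no nontrivial $\mathcal{F}_{t-1}$-measurable direction $\xi$ can satisfy $\xi\Delta S_t=0$ a.s., giving (R).

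The only potential obstacle is bookkeeping on the null set $\{\xi=0\}$, but this is handled cleanly by the convention above and the argument is indeed essentially a reformulation rather than a new result.
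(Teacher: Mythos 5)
Your approach is essentially the same as the paper's: the paper gives no proof at all, only the remark that the proposition is ``a trivial reformulation of Proposition 1.1 in \cite{cr}'', and your homogeneity argument (handling $\{\xi=0\}$ by noting the event degenerates to $\{0\le 0\}$, and otherwise normalizing $\xi$ to $\xi/|\xi|$ without changing the event or the conditioning) is exactly the verification that this reformulation is indeed trivial. The extra sketch of why Proposition~1.1 of \cite{cr} itself holds goes beyond what the paper contains and is a bit loose in places --- the step ``combined with the nonnegativity inherited from later steps one forces $\theta_t=0$'' compresses the standard but non-immediate reduction from multi-period (NA) to one-period (NA), and (R) concerns general \emph{affine} (not only linear) subspaces, so one must also rule out $\xi\Delta S_{t+1}\equiv c$ for $c\neq0$ by applying the inequality to both $\xi$ and $-\xi$ --- but none of this affects the reformulation step, which is what the paper actually relies on.
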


We now present the hypotheses on the market model that will be needed
for our main results in the sequel.

\begin{assumption}\label{marche}
For all $t\geq 1$, $\Delta S_t\in\mathcal{W}_t$. Furthermore,
for $0\leq t\leq T-1$, there exist
$\kappa_t,\pi_t>0$ satisfying $1/\kappa_t,1/\pi_t\in\mathcal{W}^+_t$ such that
\begin{eqnarray}\label{rrr}
\mathrm{ess.}\inf_{\xi\in\Xi_{t}^d}
P( \xi \Delta S_{t+1}\leq -\kappa_{t} \vert\xi\vert  | \mathcal{F}_{t})
\geq \pi_{t} \mbox{ a.s.}
\end{eqnarray}
\end{assumption}

The first item in the above assumption could
be weakened to the existence of the $N$th moment for $N$ large
enough but this would lead to complicated book-keeping with no essential
gain in generality, which we
prefer to avoid. In the light of Proposition \ref{bibi}, \eqref{rrr}
is a certain strong form of no-arbitrage. Note that if either $\kappa_t$ or $\pi_t$ is not constant, then
even a simple von Neumann-Morgenstern utility maximisation problem may be ill posed (see Example 3.3 in \cite{cr}).

Section \ref{exaoa} below
exhibits concrete examples showing that
Assumption \ref{marche} holds in a broad class of market models.
We note that,
by Proposition \ref{bibi}, Assumption \ref{marche} implies
both (NA) and (R) above.

Now we turn to investors' behavior, as modeled by cumulative prospect theory,
see \cite{kt,tk}. Agents' attitude towards gains and losses will be
expressed by the functions $u_+$ and $u_-$. Agents are assumed to have a
(possibly stochastic) reference point $B$ and probability distortion functions $w_+$ and $w_-$.

Formally, we assume that
$u_{\pm}:\mathbb{R}_+\to\mathbb{R}_+$ and $w_{\pm}: [0,1]\to [0,1]$  are
measurable functions such that
$u_{\pm}(0)=0$, $w_{\pm}(0)=0$ and $w_{\pm}(1)=1$.
We fix $B$, a scalar-valued random variable in $\Xi_T^1$.


%

\begin{example}\label{extraa}
{\rm A typical choice is taking
$$
u_+(x)=x^{\alpha_+},\quad u_-(x)=kx^{\alpha_-}
$$
for
some {$k>0$} and setting
$$
w_+(p)=\frac{p^{\gamma_+}}{(p^{\gamma_+}+(1-p)^{\gamma_+})^{1/\gamma_+}},\quad
w_-(p)=\frac{p^{\gamma_-}}{(p^{\gamma_-}+(1-p)^{\gamma_-})^{1/\gamma_-}},
$$
with constants {$0<\alpha_{\pm},\gamma_{\pm}\leq 1$}. In \cite{tk}, based on experimentation,
the following choice was made: $\alpha_{\pm}=0.88$, $k=2.25$,
$\gamma_+=0.61$ and $\gamma_-=0.69$.}
\end{example}

We define, for $X_0\in \Xi_0^1$ and $\theta \in \Phi$,
\begin{eqnarray*}
V^+(X_0;\theta_1,\ldots,\theta_{T}):=\Int_0^{\infty} w_+\left(P\left(
u_+\left(\left[X^{X_0,\theta}_T-B\right]_+\right)\geq y
\right)\right)dy,
\end{eqnarray*}
and
\begin{eqnarray*}
V^-(X_0;\theta_1,\ldots,\theta_{T}):=\Int_0^{\infty} w_-\left(P\left(
u_-\left(\left[X^{X_0,\theta}_T-B\right]_-\right)\geq y
\right)\right)dy,
\end{eqnarray*}
and whenever $V^-(X_0;\theta_1,\ldots,\theta_{T})<\infty$ we set
\[
V(X_0;\theta_1,\ldots,\theta_{T}):=V^+(X_0;\theta_1,\ldots,\theta_{T})-V^-(X_0;\theta_1,\ldots,\theta_{T}).
\]

\noindent We denote by $\mathcal{A}(X_0)$ the set of strategies $\theta \in \Phi$ such that
\[
V^-(X_0;\theta_1,\ldots,\theta_{T})<\infty
\]
and we call them {\it admissible} (with respect to $X_0$).

\begin{remark}
{\rm If there were no probability distortions (i.e. $w_{\pm}(p)=p$) then
we would simply get}
$
V^+(X_0;\theta_1,\ldots,\theta_{T})=Eu_+\left(\left[X^{X_0,\theta}_T-B\right]_+\right)$ {\rm and} $
V^-(X_0;\theta_1,\ldots,\theta_{T})=Eu_-\left(\left[X^{X_0,\theta}_T-B\right]_-\right)$
{\rm and hence $V(X_0;\theta_1,\ldots,\theta_{T})$ equals the expected utility
$Eu(X^{X_0,\theta}_T-B)$ for the utility function $u(x)=u_+(x)$, $x\geq 0$,
$u(x)=-u_-(-x)$, $x<0$.}

{\rm We refer to \cite{cp} for the explicit treatment of this
problem in a continuous time, complete case under the assumptions
that $u_+$ is concave, $u_-$ is convex (hence $u$ is piecewise
concave) and $B$ is deterministic. In \cite{bkp} this problem is studied again
in a complete, continuous time model but for a power convex-convave shaped utility function.
In \cite{cr11} this problem is investigated in a general discrete-time multiperiod model
under the hypothesis that the (suitably defined) asymptotic elasticity of
$u_-$ is strictly greater than that of $u_+$. }
\end{remark}

The present paper is concerned with maximizing $V(X_0;\theta_1,\ldots,\theta_T)$
over $\theta\in\mathcal{A}(X_0)$.
We seek to find conditions ensuring {\it well-posedness}, i.e.
\begin{equation}\label{plumduff}
\sup_{\theta\in\mathcal{A}(X_0)} V(X_0;\theta_1,\ldots,\theta_T)<\infty,
\end{equation}
and the existence of $\theta^*\in\mathcal{A}(X_0)$ attaining this supremum.

\begin{remark}\label{qtya}
{\rm One may wonder whether the set $\mathcal{A}(X_0)$ is rich
enough. Assume that $u_-(x)\leq c(1+x^{\eta})$ for some $c, \eta
>0$, $X_0,B\in\mathcal{W}$ and $w_-(p)\leq Cp^{{\delta_-}}$ for some
$0<{\delta_-} \leq 1$ and $C >0$. Then Lemma
\ref{magi} below implies that the strategy $\theta_t=0$,
$t=1,\ldots,T$ is in $\mathcal{A}(X_0)$, in particular, the latter
set is non-empty. If, furthermore, $\Delta S_t\in\mathcal{W}_t$ for
all $t$ then $\theta\in\mathcal{A}(X_0)$ whenever
$\theta_t\in\mathcal{W}_{t-1}$, $t=1,\ldots,T$. This remark applies,
in particular, to $u_-$ and $w_-$ in Example \ref{extraa} above.}
\end{remark}

\section{A first look at well-posedness}\label{harom}

In this section we find parameter restrictions that need to hold
in order to have a well-posed problem in the setting of e.g.
Example \ref{extraa}. The discussion below sheds light on
the assumptions we will make later in section \ref{negy}.

For simplicity we assume that $u_+(x)=x^{\alpha_+}$ and
$u_-(x)=x^{\alpha_-}$ for some $0<\alpha_{\pm}\leq 1$; the
distortion functions are $w_+(t)=t^{\gamma_+}$,
$w_-(t)=t^{\gamma_-}$ for some $0< \gamma_{\pm}\leq 1$. The example
given below applies also to $w_{\pm}$ with a power-like behavior
near $0$ such as those in Example \ref{extraa} above.

Let us consider a two-step market model with $S_0=0$, $\Delta S_1$
uniform on $[-1,1]$, $P(\Delta S_2=\pm 1)=1/2$ and $\Delta S_2$ is
independent of $\Delta S_1$. Let
$\mathcal{F}_0,\mathcal{F}_1,\mathcal{F}_2$ be the natural
filtration of $S_0,S_1,S_2$. It is easy to check that Assumption
\ref{marche} holds with $\kappa_0=\kappa_1=1/2$, $\pi_0=1/4$ and $\pi_1=1/2$.

Let us choose initial capital $X_0=0$ and reference point $B=0$.
We consider the strategy $\theta\in \Phi$
given by $\theta_1=0$ and $\theta_2=g(\Delta S_1)$ with
$g:[-1,1)\to[1,\infty)$ defined by $g(x)=(\frac2{1-x})^{1/\ell}$, where $\ell>0$ will be chosen later.
Then the distribution function of $\theta_2$ is given by
\[
F(y)=0,\ y<1,\quad F(y)=1-\frac{1}{y^{\ell}},\ y\geq 1.
\]
It follows
that
\begin{eqnarray*}\label{2}
V^+(0;\theta_1,\theta_2)=\int_0^{\infty}
P^{\gamma_+}((\theta_2\Delta S_2)_+^{\alpha_+}\geq y)dy=
\int_1^{\infty} \frac{1}{2^{\gamma_+}}\frac{1}{y^{\ell\gamma_+/\alpha_+}} dy,
\end{eqnarray*}
and
\begin{eqnarray*}\label{2a}
V^-(0;\theta_1,\theta_2)=\int_0^{\infty} P^{\gamma_-}((\theta_2\Delta S_2)_-^{\alpha_-}\geq y)dy=
\int_1^{\infty} \frac{1}{2^{\gamma_-}}\frac{1}{y^{\ell\gamma_-/\alpha_-}} dy.
\end{eqnarray*}

If we have ${\alpha_+}/{\gamma_+}>{\alpha_-}/{\gamma_-}$ then there
is $\ell>0$ such that
\begin{eqnarray*}\label{3a}
\frac{\ell \gamma_+}{\alpha_+}<1<\frac{\ell \gamma_-}{\alpha_-},
\end{eqnarray*}
which entails $V^-(0;\theta_1,\theta_2)<\infty$ (so indeed $\theta\in\mathcal{A}(0)$) and
$V^+(0;\theta_1,\theta_2)=\infty$ so the optimization problem becomes ill-posed.

One may wonder whether this phenomenon could be ruled out by restricting
the set of strategies e.g. to bounded ones. The answer is no. Considering
$\theta_1(n):=0, \theta_2(n):=\min\{\theta_2,n\}$ for $n\in\mathbb{N}$ we obtain easily that $\theta(n)\in\mathcal{A}(0)$ and
$V^+(0;\theta_1(n),\theta_2(n))\to \infty$, $V^-(0;\theta_1(n),\theta_2(n))\to
V^-(0;\theta_1,\theta_2)<\infty$ by monotone convergence, which shows
that we still have
$$
\sup_{\psi}V(0;\psi_1,\psi_2)=\infty,
$$
where $\psi$ ranges over the family of bounded strategies of $\mathcal{A}(0)$
only.
This shows that the ill-posedness phenomenon is not just a pathology
but comes from the multi-periodic setting: one may use the information available
at time $1$ when choosing the investment strategy $\theta_2$.

We mention another case of ill-posedness which is present already in
one-step models, as noticed in \cite{hz} and \cite{bg}. We
slightly change the previous setting.
We choose $u_+(x)=x^{\alpha_+}$ and $u_-(x)=kx^{\alpha_-}$, for $k>0$ and $0<\alpha_{\pm}\leq 1$.
We allow general distortions, assuming only that
$w_{\pm}(p)>0$ for $p>0$. The market is defined by $S_0=0$,
$\Delta S_1=\pm 1$ with probabilities $p,1-p$ for some $0<p<1$ and
$\mathcal{F}_0,\mathcal{F}_1$ the natural filtration of $S_0,S_1$.
Now the set $\mathcal{A}(X_0)$ can be identified with $\mathbb{R}$
(i.e. with the set of $\mathcal{F}_0$-measurable random
variables). Take
$X_0=B=0$ and $\theta_1(n):=n$, $n\in\mathbb{N}$, then
$V^+(0;\theta_1(n))= w_+(p)n^{\alpha_+}$ and
$V^-(0;\theta_1(n))= kw_-(1-p)n^{\alpha_-}$. If
$\alpha_+>\alpha_-$ then, \emph{whatever $w_+,w_-$ are}, we have
$V(0;\theta_1(n))\to\infty$, $n\to\infty$. Hence, in order to get a
well-posed problem one needs to have $\alpha_+\leq\alpha_-$, as
already observed in \cite{bg} and \cite{hz}.

We add a comment on the case $\alpha_+=\alpha_-$ assuming, in
addition, that $w_+,w_-$ are e.g. continuous~: \emph{whatever
$w_+,w_-$ are}, we may easily choose $p$ such that the problem
becomes ill-posed: indeed, it happens if  $w_+(p)>kw_-(1-p)$. This
shows, in particular, that even in such very simple market models
the problem with the parameter specifications of \cite{tk} can be
ill-posed (e.g. take any $p>0,788$ and consider the setting of
Example \ref{extraa} with the parameters of \cite{tk} quoted there).
We interpret this fact as follows: the participants of the
experiments conducted by \cite{tk} would perceive that such market
opportunities may lead to their arbitrary (inifinite) satisfaction.

Since it would
be difficult to dismiss the simple models of this section based on economic grounds we are led to the conclusion that, in order to get a mathematically
meaningful optimization problem for a reasonably wide range of price processes, one needs to assume both
\begin{equation}\label{bulb}
\alpha_+<\alpha_-\quad\mbox{ and }\quad \alpha_+/\gamma_+\leq\alpha_-/\gamma_-.
\end{equation}

In the following section we propose an easily verifiable sufficient
condition for the well-posedness of this problem in multiperiod
discrete-time market models. The decisive condition we require is
${\alpha_+}/{\gamma_+}<\alpha_-$, see \eqref{a} below. This is
stronger than \eqref{bulb} but still reasonably general. If
$w_-(p)=p$ (i.e. $\gamma_-=1$, no distortion on loss probabilities)
then \eqref{a} below is essentially sharp, as the present section
highlights.

\section{Well-posedness in the multiperiod case}\label{negy}

In this section,
after introducing the conditions we need on
$u_{\pm},w_{\pm}$, we will prove our sufficient condition for the well-posedness
of the behavioural investment problem (Theorem \ref{ajtothm}).

Basically, we require that $u_{\pm}$ behave in a power-like way at
infinity ({this is automatically true for any function having
bounded from above positive asymptotic elasticity and bounded from below negative asymptotic elasticity, see
Remark \ref{rmqas1}}) and $w_{\pm}$ do likewise in the neighborhood of
$0$. We stress that no concavity, continuity or monotonicity
assumptions are made on $u_{\pm}$, unlike in all related papers.

\begin{assumption}\label{as1} We assume that
$u_{\pm}:\mathbb{R}_+\to\mathbb{R}_+$ and $w_{\pm}: [0,1]\to [0,1]$  are
measurable functions such that
$u_{\pm}(0)=0$, $w_{\pm}(0)=0$ and $w_{\pm}(1)=1$ and
\begin{eqnarray}\label{tenet}
u_+(x) &\leq& k_+ (x^{\alpha_+} +1),\\
\label{tenet-}
k_- (x^{\alpha_-}-1) &\leq& u_-(x),\\
w_+(p)&\leq& g_+  p^{\gamma_+},\label{fuga}\\
w_-(p)&\geq& g_- p \label{b},
\end{eqnarray}
with $0<\alpha_{\pm},\gamma_+\leq 1$, $k_{\pm},g_{\pm} >0$ fixed
constants and
\begin{eqnarray}\label{a}
\frac{\alpha_+}{\gamma_+} < \alpha_-.
\end{eqnarray}
This allows us to fix $\lambda$ such that $\lambda\gamma_+>1$ and
$\lambda\alpha_+<\alpha_-$.
\end{assumption}

\begin{remark}
\label{rmqas1}
{\rm The condition $\alpha_{\pm},\gamma_+\leq 1$ is not necessary
for our results to hold true, it is just stated for ease of exposition.
We first comment on \eqref{tenet} and \eqref{tenet-}. Define the
utility function $u(x)=u_+(x)$, $x\geq 0$, $u(x)=-u_-(-x)$, $x<0$
and assume that $u_{\pm}$ are differentiable. Then $AE_+(u)  =
\limsup_{x\to\infty} \frac{u'(x)x}{u(x)} \leq {\alpha}_+$ implies
\eqref{tenet} and $AE_-(u) =  \liminf_{x\to
-\infty}\frac{u'(x)x}{u(x)}  \geq {\alpha}_-$ implies
\eqref{tenet-}. So only the behavior of $u_{\pm}$ near infinity
matters. This comes from Lemma 6.3 (i) of \cite{KS99} (their proof
does not rely on concavity) which asserts {the existence of some
$x_0>0$ such that for all $x\geq x_0$, $\rho \geq 1$, $u_+(\rho x)
\leq \rho ^{{\alpha}_+}u_+(x)$. So for $x\geq x_0$, choosing $\rho
=\frac{x}{x_0}$}, we get that $u_+(x) \leq x^{\alpha_+}
\frac{u_+(x_0) }{x_0^{\alpha_+}}$ and we conclude that \eqref{tenet}
holds since for $0<x\leq x_0$, $u_+(x) \leq u_+(x_0)$. The proof for
\eqref{tenet-} is similar. Note that if {$w_{\pm}(p)=p$}
we prove
in \cite{cr11} an existence result under the condition \eqref{a},
which asserts in this case that $AE_-(u)>AE_+(u)$.

Condition \eqref{a} has already been mentioned
in the previous section. It has a rather straightforward
interpretation: the investor takes losses more seriously than gains.
The distortion function $w_+$, being majorized by a power function
of order $\gamma_+$, exaggerates the probabilities of rare events.
In particular, the probability of large portfolio returns is exaggerated.
In this way, for large portfolio values, the distortion counteracts
the risk-aversion expressed by $u_+$, which is majorized
by a concave power function $x^{\alpha_+}$.
These observations explain the appearance
of the term $\alpha_+/\gamma_+$ in \eqref{a} as ``risk aversion of the
agent on large gains modulated by her distortion function''.
Note that the agent will have a maximal risk aversion in the modified sense if (i)
$\alpha_+$ is high, i.e. close to $1$ and (ii)
$\gamma_+$ is low i.e. close to $0$ (for small value of $\gamma_+$ the agent distorts a lot the probability of rare events and, in particular,
of large gains).  Thus in \eqref{a}
we stipulate that this modulated risk-aversion parameter should still
be outbalanced by the loss aversion of the investor (as represented by parameter
$\alpha_-$ coming from the majorant of $u_-$).

A similar interpretation for the term $\alpha_-/\gamma_-$ in \eqref{bulb}
can be given. One may hope that \eqref{a} could eventually be weakened
to \eqref{bulb}. We leave the exploration of
this for future research.

We also note that the functions in Example \ref{extraa} satisfy
Assumption \ref{as1} whenever \eqref{a} holds.}


\end{remark}

The assumption below requires
that the reference point $B$ should be comparable to the market performance in the sense that it can be sub-hedged by some
portfolio strategy $\phi\in\Phi$.

\begin{assumption}\label{as2}
We fix a scalar
random variable $B$ such that, for some strategy $\phi\in \Phi$ and for
some $b\in\mathbb{R}$, we have
\begin{equation}\label{home}
X_T^{b,\phi}=b+\sum_{t=1}^{T} \phi_t\Delta S_t\leq B.
\end{equation}
\end{assumption}

The main result of the present section is the following.
\begin{theorem}\label{ajtothm}
Under Assumptions \ref{marche}, \ref{as1} and \ref{as2},
\[
\sup_{\theta\in\mathcal{A}(X_0)}V(X_0;\theta_1,\ldots,\theta_T)<\infty,
\]
whenever $X_0\in \Xi_0^1$ with $E\vert X_0\vert^{\alpha_-}<\infty$.
\end{theorem}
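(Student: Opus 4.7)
The plan is to bound $V^+$ from above and $-V^-$ from above by moment expressions in $[\widetilde X_T]_\pm$, and then to absorb the $V^+$-term into the penalty coming from $-V^-$ by exploiting the strict gap $\lambda\alpha_+<\alpha_-$ together with a multiperiod ``strong no-arbitrage to moments'' estimate.

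First, using Assumption~\ref{as2} I would reduce to the case $B=0$: setting $\tilde\theta_t:=\theta_t-\phi_t$ and $\widetilde X_T:=(X_0-b)+\sum_t\tilde\theta_t\Delta S_t$, one has $X_T^{X_0,\theta}-B\leq\widetilde X_T$, whence $[X_T-B]_+\leq[\widetilde X_T]_+$ and $[X_T-B]_-\geq[\widetilde X_T]_-$; crucially $\tilde\theta$ is still predictable, so Proposition~\ref{bibi} applies to $\xi=\tilde\theta_t$ with the same $\kappa_{t-1},\pi_{t-1}$. Using $w_+(p)\leq g_+p^{\gamma_+}$, $u_+(x)\leq k_+(x^{\alpha_+}+1)$ and Markov's inequality at exponent $\lambda\alpha_+$ (the $y$-integral converges precisely because $\lambda\gamma_+>1$) then yields
\[
V^+(X_0;\theta)\leq A_1+A_2\bigl(E[\widetilde X_T]_+^{\lambda\alpha_+}\bigr)^{\gamma_+}.
\]
For $V^-$, the lower bound $w_-(p)\geq g_-p$ collapses the Choquet integral to an ordinary expectation and $u_-(x)\geq k_-(x^{\alpha_-}-1)$ gives
\[
V^-(X_0;\theta)\geq A_3\,E[\widetilde X_T]_-^{\alpha_-}-A_4,
\]
with $A_1,\ldots,A_4$ explicit constants independent of $\theta$.

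The hard part will be the multiperiod estimate
\[
E[\widetilde X_T]_+^{\lambda\alpha_+}\leq B_1\bigl(E[\widetilde X_T]_-^{\alpha_-}\bigr)^{\lambda\alpha_+/\alpha_-}+B_2,
\]
with $B_1,B_2$ depending only on $T$, $E|X_0|^{\alpha_-}$, $|b|$, and the $\mathcal W$-moments of $\Delta S_t$, $1/\kappa_t$, $1/\pi_t$. To prove it, I would use $[\widetilde X_T]_+\leq|X_0-b|+\sum_t|\tilde\theta_t||\Delta S_t|$ and subadditivity (valid since $\lambda\alpha_+\leq\alpha_-\leq 1$), then H\"older with conjugate pair $(\alpha_-/(\lambda\alpha_+),\alpha_-/(\alpha_--\lambda\alpha_+))$, reducing matters to bounding $E|\tilde\theta_t|^{\alpha_-}$ for each $t$. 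The pointwise bound
\[
\pi_{t-1}\kappa_{t-1}^{\alpha_-}|\tilde\theta_t|^{\alpha_-}\leq E\bigl([\tilde\theta_t\Delta S_t]_-^{\alpha_-}\bigm|\mathcal F_{t-1}\bigr)
\]
from Proposition~\ref{bibi} then lets one compare one-step losses to the cumulative loss $[\widetilde X_T]_-$ on nested intersections of the conditional adverse events of probability $\geq\pi_s$, and a backward induction in $t$ propagates this into the desired global bound; the technical difficulty lies in handling the $\mathcal F_{t-1}$-measurable factors $1/\kappa_{t-1},1/\pi_{t-1}$ without losing control of the moments, which is where the full strength $1/\kappa_t,1/\pi_t\in\mathcal W$ of Assumption~\ref{marche} is needed.

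Finally, setting $\sigma:=\gamma_+\lambda\alpha_+/\alpha_-$, which is strictly less than $1$ since $\lambda\alpha_+<\alpha_-$ and $\gamma_+\leq 1$, and using $(a+b)^{\gamma_+}\leq a^{\gamma_+}+b^{\gamma_+}$ one gets $V^+\leq C_1+C_2(E[\widetilde X_T]_-^{\alpha_-})^\sigma$. The elementary inequality $x^\sigma\leq\epsilon x+c(\epsilon)$ applied with $\epsilon$ chosen so small that $\epsilon C_2<A_3$ then renders the coefficient of $E[\widetilde X_T]_-^{\alpha_-}$ in $V^+-V^-$ strictly negative, yielding an upper bound on $V(X_0;\theta)$ independent of $\theta\in\mathcal A(X_0)$, as required.
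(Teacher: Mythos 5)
Your opening reduction and the bounds
\[
V^+\leq A_1+A_2\bigl(E[\widetilde X_T]_+^{\lambda\alpha_+}\bigr)^{\gamma_+},\qquad
V^-\geq A_3\,E[\widetilde X_T]_-^{\alpha_-}-A_4
\]
are correct and essentially reproduce Lemma~\ref{eel}. Your final absorption step (concavity beats linearity, so $C_2 x^{\sigma}-A_3x$ is bounded above) is also fine. The problem is entirely in the ``multiperiod estimate,'' which is where the real content of the theorem lives, and your sketch of it has several genuine gaps.

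First, the comparison ``one-step loss vs.\ cumulative loss'' does not go through as stated. Conditioning on $\mathcal F_{t-1}$ and intersecting the adverse events $\{\tilde\theta_t\Delta S_t\leq-\kappa_{t-1}|\tilde\theta_t|\}$ and $\{\tilde\theta_s\Delta S_s\leq 0\}$, $s>t$, only gives
\[
E\bigl([\widetilde X_T]_-^{\alpha_-}\mid\mathcal F_{t-1}\bigr)\;\geq\;\tilde\pi_{t-1}\,\bigl(\kappa_{t-1}|\tilde\theta_t|-[\widetilde X_{t-1}]_+\bigr)_+^{\alpha_-},
\]
so the running accumulated gain $[\widetilde X_{t-1}]_+$ appears and must itself be controlled --- it is not present in your sketch. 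Because $[\widetilde X_{t-1}]_+$ depends on $\tilde\theta_1,\dots,\tilde\theta_{t-1}$, the natural induction is \emph{forward} in $t$, not backward as you write; a backward pass cannot close, since even the $t=T$ step already needs a bound on $E[\widetilde X_{T-1}]_+$. Second, the quantity $\tilde\pi_{t-1}$ above is not $\pi_{t-1}$: the probability of the nested intersection requires a separate argument (this is the paper's Lemma~\ref{majdan}, using conditional Cauchy--Schwarz) and, like $\kappa_{t-1}$, it is random and $\mathcal F_{t-1}$-measurable. Dividing it out forces a H\"older step that strictly \emph{lowers} the moment exponent on $|\tilde\theta_t|$, and the same happens each time you bound $E[\widetilde X_{t-1}]_+^{\,\cdot}$ through $E|\tilde\theta_s|^{\,\cdot}$. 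This cascade means your inequality cannot hold with exponent $\lambda\alpha_+/\alpha_-$ as written; at best one gets exponents strictly between $\lambda\alpha_+/\alpha_-$ and $1$, which still suffices for the absorption step, but only because the gap $\lambda\alpha_+<\alpha_-$ leaves room to accommodate a decreasing chain $\alpha_-=\iota_0>\iota_1>\cdots>\iota_T>\lambda\alpha_+$ (exactly the device used in Lemma~\ref{porto}).

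Finally, for orientation: the paper does not prove Theorem~\ref{ajtothm} this way. It introduces the dominating functional $\tilde V$ and the intermediate problems $\tilde V_t$, and the core step is a \emph{replacement} argument (Lemma~\ref{crux}): given any $\theta$, construct $\tilde\theta$ with $\tilde V_t(X_t;\tilde\theta)\geq\tilde V_t(X_t;\theta)$ and with $|\tilde\theta_n-\phi_n|$ bounded linearly in $|X_t|$, then bound $\tilde V^+_0(X_0;\tilde\theta)$ directly. Your plan replaces this by direct moment control on $\tilde\theta_t$ in terms of $E[\widetilde X_T]_-^{\alpha_-}$ without the replacement step. This is a legitimate alternative route (and it is close in spirit to Lemmas~\ref{oslo}--\ref{porto}, which the paper uses for existence rather than well-posedness), but carrying it out requires the Lemma~\ref{majdan} estimate, the forward induction with the full cascade of H\"older losses, and correct book-keeping of $[\widetilde X_{t-1}]_+$ --- none of which is provided. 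As it stands the key estimate is asserted, not proved.
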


In particular, the result applies for $X_0$ a deterministic constant.

Now we sketch the strategy adopted for proving the well-posedness result of Theorem \ref{ajtothm}.
First, we introduce an expected utility objective $\tilde{V}$ that dominates the behavioural objective $V$ (see Lemma \ref{eel} and Definition \ref{auxi}).
As the dynamic programming does not work for $V$, we do not introduce some one-period model associated to
$\tilde{V}$ as it is usually done in expected concave utility theory. Instead,
we make use of a multi-periodic auxiliary optimization problem $\tilde{V}_t$ (between $t$ and $T$: see Definition
\ref{auxit}). Then in Lemma \ref{crux}, we show by induction that starting from any strategy $(\theta_{t+1},\ldots, \theta_T)$,
it is always possible
to build a strategy $(\tilde{\theta}_{t+1},\ldots, \tilde{\theta}_T)$ which performs better for
the optimisation problem $\tilde{V}_t$ and which is bounded by a linear function of the initial capital $X_t$.
Finally, applying the fact that $\tilde{V}$ dominates $V$, we use the strategy
$(\tilde{\theta}_{1},\ldots, \tilde{\theta}_T)$ in order to prove that $V(X_0;\theta_1,\ldots,\theta_T)$
is always bounded by $D(1+E\vert X_0\vert^{\alpha_-})$, where
the constant $D$ does not dependent of $\theta$ (see \eqref{last}),
showing well-posedness of Theorem \ref{ajtothm}.

In the sequel, we will often use the following facts:  for all $x,y\in\mathbb{R}$, one has:
\begin{eqnarray*}
\vert x+y\vert^{\eta} & \leq &  \vert x\vert^{\eta}+\vert y\vert^{\eta}, \;  \mbox{ for } 0< \eta \leq 1,\\
\vert x+y\vert^{\eta} & \leq & 2^{\eta-1}(\vert x\vert^{\eta}+\vert y\vert^{\eta}), \; \mbox{ for } \eta \geq 1.
\end{eqnarray*}

\begin{lemma}\label{eel} Let Assumptions \ref{as1}, \ref{as2} hold. There exist constants $\tilde{k}_{\pm}>0$, such that for all $X_0\in \Xi_0^1$ and
$\theta \in \Phi$:
\begin{eqnarray}
\nonumber
V^+(X_0;\theta_{1},\ldots,\theta_{T}) & \leq &\tilde{k}_+E\left(1+
\vert X_0+\sum_{n=1}^T (\theta_n-\phi_n)\Delta S_n\vert^{\lambda \alpha_+ }\right)\\
\nonumber
V^-(X_0;\theta_{1},\ldots,\theta_{T}) & \geq & \tilde{k}_- \left(E[X_0+\sum_{n=1}^T
(\theta_n-\phi_n)\Delta S_n
-b]_-^{\alpha_-} -1\right).
\end{eqnarray}
\end{lemma}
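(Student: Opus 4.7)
The plan is to compare everything to the random variable
\[
W := X_0 + \sum_{n=1}^T (\theta_n - \phi_n)\Delta S_n
\]
via the sub-hedge hypothesis \eqref{home}. Setting $Y := X_T^{X_0,\theta} - B$, Assumption \ref{as2} gives $Y \leq X_T^{X_0,\theta} - X_T^{b,\phi} = W - b$, hence the pointwise comparisons $Y_+ \leq (W-b)_+ \leq |W| + |b|$ and $Y_- \geq (W-b)_-$. Since $\alpha_\pm \in (0,1]$ and $x \mapsto x^{\alpha_\pm}$ is monotone on $[0,\infty)$, these are the only properties of $\theta$ and $B$ we will need.

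For the lower bound on $V^-$ I would apply \eqref{b} followed by Tonelli (layer-cake) and \eqref{tenet-}:
\[
V^-(X_0;\theta) \;\geq\; g_- \int_0^\infty P(u_-(Y_-) \geq y)\, dy \;=\; g_- E[u_-(Y_-)] \;\geq\; g_- k_- \bigl(E[Y_-^{\alpha_-}] - 1\bigr),
\]
and combine with $Y_-^{\alpha_-} \geq (W-b)_-^{\alpha_-}$ to obtain the claim with $\tilde{k}_- := g_- k_-$.

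For the upper bound on $V^+$ I would use that $w_+$ takes values in $[0,1]$ together with \eqref{fuga} to write $w_+(p) \leq \min\{1,\, g_+ p^{\gamma_+}\}$. Markov's inequality at power $\lambda$ yields $P(u_+(Y_+) \geq y) \leq E[u_+(Y_+)^\lambda]/y^\lambda$, so setting $c := g_+ E[u_+(Y_+)^\lambda]^{\gamma_+}$ and $\beta := \lambda \gamma_+ > 1$ and splitting the integral at $y_* := c^{1/\beta}$ gives
\[
V^+(X_0;\theta) \;\leq\; \int_0^\infty \min\{1,\, c/y^\beta\}\, dy \;=\; \frac{\beta}{\beta-1}\, c^{1/\beta} \;=\; \frac{\beta}{\beta-1}\, g_+^{1/\beta}\, E[u_+(Y_+)^\lambda]^{1/\lambda}.
\]
It then remains to control $E[u_+(Y_+)^\lambda]$. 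From \eqref{tenet} and the inequality $(a+b)^\lambda \leq 2^{\lambda-1}(a^\lambda + b^\lambda)$ (valid since $\lambda > 1/\gamma_+ \geq 1$) we get $u_+(Y_+)^\lambda \leq C_1 (Y_+^{\lambda \alpha_+} + 1)$, and applying either of the two displayed power inequalities to $Y_+ \leq |W| + |b|$ yields $E[u_+(Y_+)^\lambda] \leq C_2 (E|W|^{\lambda \alpha_+} + 1)$. Finally, since $\lambda \geq 1$, one has $(x+1)^{1/\lambda} \leq x+1$ for $x \geq 0$, so
\[
E[u_+(Y_+)^\lambda]^{1/\lambda} \;\leq\; C_2^{1/\lambda}\bigl(E|W|^{\lambda \alpha_+} + 1\bigr) \;=\; C_2^{1/\lambda}\, E\bigl(1 + |W|^{\lambda \alpha_+}\bigr),
\]
giving the claimed bound with $\tilde{k}_+$ absorbing all constants.

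The main subtlety is the $V^+$ estimate: one must simultaneously control $w_+(P(\cdot \geq y))$ near $y = 0$ (where the a priori bound $w_+ \leq 1$ is crucial, since $g_+ p^{\gamma_+}$ need not be small) and near $y = \infty$ (where the Markov estimate together with $\lambda \gamma_+ > 1$ delivers integrability), and then translate the resulting $L^\lambda$-moment of $u_+(Y_+)$ into an $L^1$-quantity in $|W|$; the latter step is precisely where $\lambda \geq 1$, forced by $\gamma_+ \leq 1$ and $\lambda \gamma_+ > 1$, is essential.
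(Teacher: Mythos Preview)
Your proof is correct and follows essentially the same route as the paper's. The $V^-$ argument is identical (layer-cake, \eqref{b}, \eqref{tenet-}, then $Y_-\geq (W-b)_-$), and for $V^+$ both proofs use Chebyshev at power $\lambda$ together with $w_+\leq\min\{1,g_+p^{\gamma_+}\}$ and $\lambda\gamma_+>1$; the only cosmetic difference is that the paper splits the integral at the fixed point $y=1$ and then applies $x^{\gamma_+}\leq x+1$, whereas you split at the optimal threshold $c^{1/\beta}$ and apply $(x+1)^{1/\lambda}\leq x+1$ afterwards.
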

\begin{proof}
See Appendix \ref{appendix eel}.
\end{proof}
We introduce the auxiliary optimization problem
with objective function $\tilde{V}$:
\begin{definition}
\label{auxi}
For all $X_0\in \Xi_0^1$ and
$\theta \in \Phi$, we define:
\begin{eqnarray*}
\tilde{V}^+(X_0;\theta_1,\ldots,\theta_{T})& := & \tilde{k}_+E\left(1+
\vert X_0+\sum_{n=1}^T (\theta_n-\phi_n)\Delta S_n\vert^{\lambda \alpha_+ }\right),\\
\tilde{V}^-(X_0;\theta_1,\ldots,\theta_{T}) &:= & \tilde{k}_- \left(E[X_0+\sum_{n=1}^T
(\theta_n-\phi_n)\Delta S_n
-b]_-^{\alpha_-} -1\right).
\end{eqnarray*}
For $X_0\in\Xi_0^1$, let $\tilde{\mathcal{A}}(X_0)=\{\theta\in\Phi\; |\; \tilde{V}^-(X_0;\theta_{1},\ldots,\theta_{T})<\infty\}$.
Whenever $\theta \in \tilde{\mathcal{A}}(X_0)$, we set
\begin{eqnarray*}
\tilde{V}(X_0;\theta_1,\ldots,\theta_{T}) & := & \tilde{V}^+(X_0;\theta_1,\ldots,\theta_{T})
-\tilde{V}^-(X_0;\theta_1,\ldots,\theta_{T}).
\end{eqnarray*}

\end{definition}
As no
probability distortions are involved in $\tilde{V}$, we can perform a kind of
dynamic programming on this auxiliary problem, formulated between time $t$ and $T$,
$0\leq t\leq T$.
\begin{definition}
\label{auxit}
For all $t=0,\ldots,T$, $X_t\in\Xi_t^1$ and $\theta_n\in \Xi_{n-1}^d$, $t+1 \leq n \leq T$, we set
\begin{eqnarray*}
\tilde{V}^+_t(X_t;\theta_{t+1},\ldots,\theta_{T})& := & \tilde{k}_+E\left(1+
\vert X_t+\sum_{n=t+1}^T (\theta_n-\phi_n)\Delta S_n\vert^{\lambda \alpha_+ }| \mathcal{F}_t\right),\\
\tilde{V}^-_t(X_t;\theta_{t+1},\ldots,\theta_{T}) &:= & \tilde{k}_- \left(E\left([X_t+\sum_{n=1}^T
(\theta_n-\phi_n)\Delta S_n
-b]_-^{\alpha_-} | \mathcal{F}_t\right)-1\right).
\end{eqnarray*}
For $X_t\in\Xi_t^1$, let $\tilde{\mathcal{A}}_t(X_t)=\{(\theta_{t+1},\ldots,\theta_T)\; |\;
\tilde{V}^-_t(X_t;\theta_{t+1},\ldots,\theta_{T})<\infty \,a.s.\}$.
For $(\theta_{t+1},\ldots,\theta_T)\in \tilde{\mathcal{A}}_t(X_t)$, we define
\begin{eqnarray*}
\tilde{V}_t(X_t;\theta_{t+1},\ldots,\theta_{T}) & := & \tilde{V}_t^+(X_t;\theta_{t+1},\ldots,\theta_{T})
-\tilde{V}_t^-(X_t;\theta_{t+1},\ldots,\theta_{T}).
\end{eqnarray*}

\end{definition}


\begin{lemma}
\label{tavasz} Under Assumptions \ref{as1} and \ref{as2},
$(\theta_{t+1},\ldots,\theta_T)\in\tilde{\mathcal{A}}_t(X_t)$ implies
$(\theta_{t+m+1},\ldots,\theta_T)\in\tilde{\mathcal{A}}_{t+m}(X_t+\sum_{n=t+1}^{t+m}
(\theta_{n}-\phi_n)\Delta S_{n})$, for $m\geq 0$.
The following inclusions also hold true:
$\mathcal{A}(X_0)\subset \tilde{\mathcal{A}}(X_0) \subset \tilde{\mathcal{A}}_0(X_0)$.
\end{lemma}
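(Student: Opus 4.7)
The plan is to treat the lemma as a bookkeeping statement whose content is essentially the tower property of conditional expectations plus Lemma \ref{eel}. There is no real analytical obstacle; the only thing to be careful about is matching up the expressions inside the $[\cdot]_-^{\alpha_-}$ after shifting the ``starting time'' from $t$ to $t+m$.

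For the first claim, I would let
\[
Z \;:=\; \Bigl[X_t+\sum_{n=t+1}^{T}(\theta_n-\phi_n)\Delta S_n-b\Bigr]_-^{\alpha_-}\!\geq 0,
\]
and observe that, writing $Y_m:=X_t+\sum_{n=t+1}^{t+m}(\theta_n-\phi_n)\Delta S_n$, one has the identity
\[
Y_m+\sum_{n=t+m+1}^{T}(\theta_n-\phi_n)\Delta S_n-b \;=\; X_t+\sum_{n=t+1}^{T}(\theta_n-\phi_n)\Delta S_n-b,
\]
so that
$\tilde V^-_{t+m}(Y_m;\theta_{t+m+1},\dots,\theta_T)=\tilde k_-\bigl(E(Z\mid\mathcal F_{t+m})-1\bigr)$ while
$\tilde V^-_{t}(X_t;\theta_{t+1},\dots,\theta_T)=\tilde k_-\bigl(E(Z\mid\mathcal F_{t})-1\bigr)$.
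The hypothesis then reads $E(Z\mid\mathcal F_t)<\infty$ a.s., and I need $E(Z\mid\mathcal F_{t+m})<\infty$ a.s. This is a standard fact: if the set $A:=\{E(Z\mid\mathcal F_{t+m})=\infty\}\in\mathcal F_{t+m}$ had positive probability, then by the tower property
\[
E(Z\mid\mathcal F_t)\;\geq\;E\bigl(E(Z\mid\mathcal F_{t+m})\,\mathbf 1_A\mid\mathcal F_t\bigr)\;=\;\infty\cdot P(A\mid\mathcal F_t),
\]
which would force $E(Z\mid\mathcal F_t)=\infty$ on the set $\{P(A\mid\mathcal F_t)>0\}$, a set of positive $P$-measure, contradicting the hypothesis.

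For the inclusions in the second claim, I would proceed in two steps. First, for $\mathcal A(X_0)\subset\tilde{\mathcal A}(X_0)$, I simply invoke Lemma \ref{eel}: if $\theta\in\mathcal A(X_0)$ then $V^-(X_0;\theta_1,\dots,\theta_T)<\infty$, and the lower bound from Lemma \ref{eel} gives
\[
\tilde k_-\,E\Bigl[X_0+\sum_{n=1}^{T}(\theta_n-\phi_n)\Delta S_n-b\Bigr]_-^{\alpha_-}\leq V^-(X_0;\theta_1,\dots,\theta_T)+\tilde k_-<\infty,
\]
which is exactly $\tilde V^-(X_0;\theta_1,\dots,\theta_T)<\infty$, so $\theta\in\tilde{\mathcal A}(X_0)$. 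Second, for $\tilde{\mathcal A}(X_0)\subset\tilde{\mathcal A}_0(X_0)$, $\tilde V^-(X_0;\theta_1,\dots,\theta_T)<\infty$ says the unconditional expectation of $Z_0:=[X_0+\sum_{n=1}^T(\theta_n-\phi_n)\Delta S_n-b]_-^{\alpha_-}$ is finite, hence $E(Z_0\mid\mathcal F_0)<\infty$ almost surely, which is exactly $\tilde V^-_0(X_0;\theta_1,\dots,\theta_T)<\infty$ a.s.

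If a ``main obstacle'' has to be named it is only pedantic: one must check that the pathwise identity linking the two sums is correct, and that the conditional-expectation argument in the first claim is written with the right direction of the filtration (the $\sigma$-algebra grows, the conditional expectation can only become more concentrated, hence remains finite). Everything else is an immediate consequence of Lemma \ref{eel} and the tower property, and no moment conditions beyond those already present in Assumptions \ref{as1}--\ref{as2} are required.
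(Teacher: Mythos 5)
Your proof is correct and takes essentially the same route as the paper's: the tower property of conditional expectation applied to the nonnegative quantity $Z$, combined with the fact that a nonnegative random variable with a.s.\ finite conditional expectation is itself a.s.\ finite, plus Lemma \ref{eel} for $\mathcal A(X_0)\subset\tilde{\mathcal A}(X_0)$. The only difference is cosmetic: you unpack the definitions of $\tilde V^-_t$ and $\tilde V^-_{t+m}$ and argue directly on $Z$, whereas the paper leaves them wrapped, observing the identity $E(\tilde V^-_{t+m}(\cdot)\mid\mathcal F_t)=\tilde V^-_t(\cdot)$ and then invoking the same finiteness principle for the bounded-from-below random variable $\tilde V^-_{t+m}(\cdot)$.
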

\begin{proof}
Remark that
$E\left(\tilde{V}^-_{t+m}(X_t+\sum_{n=t+1}^{t+m}
(\theta_{n}-\phi_n)\Delta S_{n};\theta_{t+m+1},\ldots,\theta_T) | {\mathcal F}_t \right)=
\tilde{V}^-_{t}(X_t;\theta_{t+1},\ldots,\theta_T)$. Recall also that for any bounded from below random
variable $X$, $E(X| {\mathcal F}_t)<\infty$ implies that $X<\infty$.
This gives the first assertion. For the same reason,
$\tilde{\mathcal{A}}(X_0)\subset\tilde{\mathcal{A}}_0(X_0)$ and, by Lemma
\ref{eel}, $\mathcal{A}(X_0)\subset\tilde{\mathcal{A}}(X_0)$.
\end{proof}


The crux of our arguments is contained in the next result. It states that
each strategy in $\tilde{\mathcal{A}}_t(X_t)$ can be replaced by another
one such that the latter performs better (see \eqref{buu}) and it is
close to $\phi$ in the sense that their distance is linear in the
initial endowment $X_t$ (see \eqref{boo}).

\begin{lemma}\label{crux} Assume that Assumptions \ref{marche}, \ref{as1}, \ref{as2} hold true. Then for each $0\leq t\leq T$,
there exist $C^t_n\in\mathcal{W}_n^+$, $n=t,\ldots,T-1$, such that,
for all $X_t\in \Xi_t^1$ and  $(\theta_{t+1},\ldots,\theta_T)\in\tilde{\mathcal{A}}_t(X_t)$,
there exists $(\tilde{\theta}_{t+1},\ldots,\tilde{\theta}_T)\in\tilde{\mathcal{A}}_t(X_t)$ satisfying
for $n=t+1,\ldots,T$:
\begin{equation}\label{boo}
\vert\tilde{\theta}_n-\phi_n\vert\leq C^t_{n-1} [\vert X_t\vert+1],
\end{equation}
and
\begin{eqnarray}\label{buu}
\tilde{V}_t(X_t;\theta_{t+1},\ldots,\theta_T)\leq \tilde{V}_t(X_t;\tilde{\theta}_{t+1},\ldots,\tilde{\theta}_T).
\end{eqnarray}
\end{lemma}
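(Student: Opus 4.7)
The proof goes by backward induction on $t$, from $t=T$ down to $t=0$. The base case $t=T$ is vacuous: the range of $n$ in \eqref{boo} is empty and $\tilde\theta=\theta$ trivially satisfies \eqref{buu}. For the inductive step, fix $(\theta_{t+1},\ldots,\theta_T)\in\tilde{\mathcal{A}}_t(X_t)$ and set $X_{t+1}:=X_t+(\theta_{t+1}-\phi_{t+1})\Delta S_{t+1}$ and $R:=|\theta_{t+1}-\phi_{t+1}|$. I first use the inductive hypothesis at $t+1$, justified by Lemma \ref{tavasz}, to replace $(\theta_{t+2},\ldots,\theta_T)$ by $(\hat\theta_{t+2},\ldots,\hat\theta_T)$ with $|\hat\theta_n-\phi_n|\le C^{t+1}_{n-1}(|X_{t+1}|+1)$ and $\tilde V_{t+1}(X_{t+1};\hat\theta)\ge\tilde V_{t+1}(X_{t+1};\theta_{t+2},\ldots,\theta_T)$; the tower property of conditional expectation, applied separately to $\tilde V^+_t$ and $\tilde V^-_t$, then yields $\tilde V_t(X_t;\theta_{t+1},\hat\theta_{t+2},\ldots,\hat\theta_T)\ge\tilde V_t(X_t;\theta_{t+1},\ldots,\theta_T)$.

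Next I truncate the current action. Pick $C^t_t\in\mathcal{W}_t^+$ (to be fixed later) and set $E:=\{R>C^t_t(|X_t|+1)\}\in\mathcal{F}_t$, then define $\tilde\theta_{t+1}:=\phi_{t+1}\mathbf{1}_E+\theta_{t+1}\mathbf{1}_{E^c}$ and $\tilde\theta_n:=\phi_n\mathbf{1}_E+\hat\theta_n\mathbf{1}_{E^c}$ for $n\geq t+2$. On $E^c$ the estimate $|X_{t+1}|+1\le(|X_t|+1)(1+C^t_t|\Delta S_{t+1}|)$ allows the choice $C^t_{n-1}:=C^{t+1}_{n-1}(1+C^t_t|\Delta S_{t+1}|)\in\mathcal{W}_{n-1}^+$ for $n\geq t+2$, securing \eqref{boo}. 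On $E^c$ we also have $\tilde V_t(\tilde\theta)=\tilde V_t(X_t;\theta_{t+1},\hat\theta_{t+2},\ldots)$, so \eqref{buu} there follows from the previous paragraph, and only the inequality on $E$ remains to verify.

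On $E$, $\tilde V_t(\tilde\theta)=\tilde V_t(X_t;\phi_{t+1},\ldots,\phi_T)=\tilde k_+(1+|X_t|^{\lambda\alpha_+})+\tilde k_--\tilde k_-[X_t-b]_-^{\alpha_-}\ge -K_\phi(|X_t|+1)^{\alpha_-}$ for a deterministic $K_\phi>0$. With $M:=\sum_{n=t+2}^T C^{t+1}_{n-1}|\Delta S_n|\in\mathcal{W}_{T-1}^+$ one has $|X_T|\le(1+M)(|X_t|+R|\Delta S_{t+1}|+1)$, and subadditivity of $x\mapsto x^{\lambda\alpha_+}$ (valid since $\lambda\alpha_+\le 1$) gives $\tilde V^+_t(\theta_{t+1},\hat\theta)\le K_1(|X_t|+1)^{\lambda\alpha_+}+K_2R^{\lambda\alpha_+}$ with $K_1,K_2\in\mathcal{W}_t^+$. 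For a matching lower bound on $\tilde V^-_t$, Assumption \ref{marche} furnishes the event $\{(\theta_{t+1}-\phi_{t+1})\Delta S_{t+1}\le -\kappa_t R\}$ of $\mathcal{F}_t$-conditional probability at least $\pi_t$; intersecting with Markov-type controls on $|\Delta S_{t+1}|$ and on the tail sum $\sum_{n\geq t+2}(\hat\theta_n-\phi_n)\Delta S_n$ (all relevant moments lying in $\mathcal{W}_t^+$ thanks to the IH), I isolate an event of conditional probability at least $\pi^*_t\in\mathcal{W}_t^+$, a.s.\ positive, on which $[X_T-b]_-^{\alpha_-}\ge(\kappa_t R/4)^{\alpha_-}$, so that $\tilde V^-_t(\theta_{t+1},\hat\theta)\ge K_3R^{\alpha_-}-K_4(|X_t|+1)^{\alpha_-}$ with $K_3,K_4\in\mathcal{W}_t^+$ and $K_3>0$. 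Since $\alpha_->\lambda\alpha_+$, one can then choose $C^t_t\in\mathcal{W}_t^+$ large enough that $R>C^t_t(|X_t|+1)$ forces $\tilde V_t(\theta_{t+1},\hat\theta)<-K_\phi(|X_t|+1)^{\alpha_-}\le\tilde V_t(\tilde\theta)$ on $E$, closing the induction.

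The main obstacle is isolating the ``good'' event underpinning the lower bound on $\tilde V^-_t$: the tail increment $\sum_{n\geq t+2}(\hat\theta_n-\phi_n)\Delta S_n$ has magnitude up to $M(|X_{t+1}|+1)$ with $M\in\mathcal{W}$, and $|X_{t+1}|$ itself can reach order $R|\Delta S_{t+1}|$ on the one-step loss event; ensuring that this future fluctuation does not more than halve the guaranteed loss $\kappa_t R$ with positive $\mathcal{F}_t$-conditional probability requires a careful combination of Assumption \ref{marche} with Markov's inequality applied at sufficiently high moment orders, exploiting that $\Delta S_n,C^{t+1}_{n-1}\in\mathcal{W}$ from the inductive hypothesis.
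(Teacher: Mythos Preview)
Your overall architecture---backward induction, replacing $(\theta_{t+2},\ldots,\theta_T)$ by $(\hat\theta_{t+2},\ldots,\hat\theta_T)$ via the hypothesis at $t+1$, then truncating $\theta_{t+1}$ on $E=\{R>C^t_t(|X_t|+1)\}$, and the resulting recursive formula for $C^t_{n-1}$---is exactly the paper's. Your upper bound on $\tilde V^+_t$ is also fine (the paper uses H\"older with exponent $\chi/(\lambda\alpha_+)$; your subadditivity works too since $\lambda\alpha_+<\alpha_-\le 1$).

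The gap is in the lower bound on $\tilde V^-_t(\theta_{t+1},\hat\theta)$. You propose to control the tail $\sum_{n\ge t+2}(\hat\theta_n-\phi_n)\Delta S_n$ by Markov's inequality, using $|\text{tail}|\le M(|X_{t+1}|+1)$. But on the one-step loss event $|X_{t+1}|$ is of order $R|\Delta S_{t+1}|$, so even after intersecting with $\{|\Delta S_{t+1}|\le L_t\}$ the bound becomes $|\text{tail}|\le MR(L_t+1/C^t_t)$. To force $|\text{tail}|\le\kappa_t R/4$ you would need $M\le c_t$ for some $\mathcal{F}_t$-measurable $c_t>0$, and there is no reason for $P(M\le c_t\,|\,\mathcal{F}_{t+1})$ to be positive: $M=\sum_{n\ge t+2}C^{t+1}_{n-1}|\Delta S_n|$ can be a.s.\ bounded away from zero (e.g.\ $|\Delta S_n|\equiv 1$ and $C^{t+1}_{n-1}$ a positive constant). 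High moment orders do not help---this is a lower bound on a probability, not a tail estimate.

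The paper avoids this entirely by \emph{not} bounding the tail in magnitude but in sign. It proves a separate Lemma (\ref{majdan}): applying Assumption~\ref{marche} iteratively at each step $n=t+1,\ldots,T$ and using the conditional Cauchy--Schwarz inequality to propagate the lower bounds, one gets
\[
P\Big((\theta_{t+1}-\phi_{t+1})\Delta S_{t+1}\le-\kappa_t R,\ (\hat\theta_n-\phi_n)\Delta S_n\le 0\ \forall n\ge t+2\ \Big|\ \mathcal{F}_t\Big)\ge\tilde\pi_t
\]
with $1/\tilde\pi_t\in\mathcal{W}^+_t$. On this event the tail contributes nonpositively, so $X_T-b\le X_t-b-\kappa_t R$, and on $F=\{\kappa_t R/2\ge|X_t|+|b|\}$ one obtains directly $[X_T-b]_-^{\alpha_-}\ge(\kappa_t R/2)^{\alpha_-}$. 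This is the missing idea: you need to invoke the no-arbitrage condition \eqref{rrr} at \emph{every} future step, not just at $t+1$, to kill the tail by sign rather than by size.
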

\begin{proof}
See Appendix \ref{appendix crux}.
\end{proof}

\begin{proof}[Proof of Theorem \ref{ajtothm}] If $\mathcal{A}(X_0)$ is empty, there is nothing to prove.
Otherwise, by Lemmas \ref{crux} and \ref{tavasz},
 there is  $C_n^0 \in \mathcal{W}^+_n$, $0\leq n \leq T-1$ such that,
for all $\theta \in \mathcal{A}(X_0) \subset \tilde{\mathcal{A}}_0(X_0)$, there exists $\tilde{\theta} \in \tilde{\mathcal{A}}_0(X_0)$
satisfying
$\vert\tilde{\theta}_n-\phi_n\vert\leq C^0_{n-1} [\vert X_0\vert+1]$, $1\leq n\leq T$ and
\[
\tilde{V}_0(X_0;\theta_{1},\ldots,\theta_T)\leq\tilde{V}_0(X_{0};\tilde{\theta}_{1},\ldots,
\tilde{\theta}_T).
\]
As $\theta \in \mathcal{A}(X_0)$, by
Lemma \ref{eel}, using H\"older's inequality with $p=\alpha_-/(\lambda \alpha_+ )$ and
its conjugate number $q$ and the rough estimation $x^{1/p} \leq x+1$,
\begin{eqnarray}
\nonumber
V(X_0;\theta_{1},\ldots,\theta_T) & \leq & \tilde{V}(X_0;\theta_{1},\ldots,\theta_T)   =
E\tilde{V}_0(X_0;{\theta}_{1},\ldots,{\theta}_T)
 \leq   E\tilde{V}_0(X_0;\tilde{\theta}_1,\ldots,\tilde{\theta}_T) \\
\nonumber  & \leq &   E\tilde{V}^+_0(X_0;\tilde{\theta}_{1},\ldots,\tilde{\theta}_T) \\
\nonumber & \leq & \tilde{k}_+E\left(1+ |X_0|^{\lambda \alpha_+ }+ \sum_{n=1}^T |\tilde{\theta}_n -{\phi}_n|^{\lambda \alpha_+ }
 |\Delta S_n|^{\lambda \alpha_+ } \right)\\
 \nonumber & \leq & \tilde{k}_+E\left(\left(1+ |X_0|^{\lambda \alpha_+ }\right)
 \left(1+ \sum_{n=1}^T (C_{n-1}^0)^{\lambda \alpha_+ }
 |\Delta S_n|^{\lambda \alpha_+ } \right)\right)\\
 \nonumber
& \leq &  \tilde{k}_+2^{\frac{p-1}{p}} E^{1/p}(1+\vert X_0\vert^{\alpha_-}) E^{1/q}\left(1+ \sum_{n=1}^T (C_{n-1}^0)^{\lambda \alpha_+ }
 |\Delta S_n|^{\lambda \alpha_+ } \right)^q\\
  \nonumber
 & \leq &  \tilde{k}_+2^{\frac{p-1}{p}} (2+E\vert X_0\vert^{\alpha_-}) E^{1/q}\left(1+ \sum_{n=1}^T (C_{n-1}^0)^{\lambda \alpha_+ }
 |\Delta S_n|^{\lambda \alpha_+ } \right)^q\\
 \label{last}
   & \leq &  D (1+E\vert X_0\vert^{\alpha_-}),
\end{eqnarray}
for an appropriate constant $D$ (independent of $\theta$), noting that $\mathcal{W}$ is closed under addition and multiplication. As $E|X_0|^{\alpha_-}<\infty$ was
assumed, we get that
this expression is finite, showing Theorem \ref{ajtothm}.
\end{proof}

\begin{remark}
{\rm Theorem 3.2 of \cite{jz} states, in a continuous-time context, that in a typical Brownian market model
our optimization problem is ill-posed whenever $u_+$ is unbounded and
$w_-(p)=p $ (i.e. no distortion on losses).}

{\rm It is worth contrasting this with Theorem \ref{ajtothm} above which states
that even if $w_-(p)=p$ and $u_+(x)\sim x^{\alpha_+}$ the problem is
well-posed provided that $\alpha_+/\gamma_+<\alpha_-$.}

{\rm This shows that discrete-time models behave slightly differently from
their continuous-time counterparts as far as well-posedness is concerned.
In discrete-time models the terminal values of admissible portfolios
form a relatively small family of random variables hence ill-posedness does not occur even in cases where it does in the continuous-time setting, where
the set of attainable payoffs is much richer.}
\end{remark}

For the subsequent sections we need to extend and refine the arguments
of Lemma \ref{crux} (see \eqref{cars} versus \eqref{figula} below). This is done in the following lemma.

{
\begin{lemma}\label{oslo} Let Assumptions \ref{marche}, \ref{as1}, \ref{as2}
be in force.
Fix $c\in\mathbb{R}$ and $\iota ,o$ satisfying $\lambda\alpha_+<\iota
<o<\alpha_-$.
Then there exists $K_t$
such that
$$E\vert \theta_{t+1}-\phi_{t+1}\vert^{\iota }\leq K_t[E\vert X_t\vert^{o}+1],$$
for any $X_t \in \Xi_t^1$ with $E\vert
X_t\vert^{o}<\infty$ and  $(\theta_{t+1}, \ldots,
\theta_{T})\in\tilde{\mathcal{A}}_t(X_t)$.
as soon as
\[
E\tilde{V}_t(X_t;\theta_{t+1},\ldots,\theta_T)\geq c.
\]
Note that the constant $K_t$ do not depend either on $X_t$ or $\theta$.
\end{lemma}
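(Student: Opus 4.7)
The strategy is to pair an upper bound on $E\tilde V_t^+(X_t;\theta_{t+1},\ldots,\theta_T)$ with a lower bound on $E\tilde V_t^-(X_t;\theta_{t+1},\ldots,\theta_T)$ --- each expressed through $\xi:=\theta_{t+1}-\phi_{t+1}$ --- and to exploit the gap $\lambda\alpha_+<\iota<o<\alpha_-$ via Young's inequality. Write $X'_{t+1}:=X_t+\xi\Delta S_{t+1}$.

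I begin by applying Lemma \ref{crux} at time $t+1$ to the continuation $(\theta_{t+2},\ldots,\theta_T)\in\tilde{\mathcal A}_{t+1}(X'_{t+1})$: this produces $(\tilde\theta_{t+2},\ldots,\tilde\theta_T)$ with $|\tilde\theta_n-\phi_n|\le C^{t+1}_{n-1}[|X'_{t+1}|+1]$ and $\tilde V_{t+1}(X'_{t+1};\theta_{t+2},\ldots,\theta_T)\le\tilde V_{t+1}(X'_{t+1};\tilde\theta_{t+2},\ldots,\tilde\theta_T)$. Setting $\bar\theta:=(\theta_{t+1},\tilde\theta_{t+2},\ldots,\tilde\theta_T)$, the tower property gives $c\le E\tilde V_t(X_t;\theta)\le E\tilde V_t(X_t;\bar\theta)$. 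For the upper bound on $E\tilde V_t^+(X_t;\bar\theta)$ I use subadditivity $|a+b|^{\lambda\alpha_+}\le|a|^{\lambda\alpha_+}+|b|^{\lambda\alpha_+}$ (valid since $\lambda\alpha_+<1$), the tail bound on $\tilde\theta_n-\phi_n$, and H\"older's inequality with exponent $\iota/(\lambda\alpha_+)>1$ (using $C^{t+1}_{n-1},\Delta S_n\in\mathcal W$); after noting $E|X_t|^{\lambda\alpha_+}\le 1+E|X_t|^o$, this yields
\[
E\tilde V_t^+(X_t;\bar\theta)\;\le\;A_1\bigl(1+E|X_t|^o\bigr)+A_2\bigl(E|\xi|^\iota\bigr)^{\lambda\alpha_+/\iota}.
\]

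The key technical step is the lower bound
\[
E\tilde V_t^-(X_t;\bar\theta)\;\ge\;A_3\,E|\xi|^\iota - A_4(1+E|X_t|^o).
\]
Iterating Assumption \ref{marche} between $t+1$ and $T$ produces an event $A=\bigcap_{n=t+1}^T A_n$ on which every increment satisfies $(\bar\theta_n-\phi_n)\Delta S_n\le-\kappa_{n-1}|\bar\theta_n-\phi_n|\le 0$; iterated conditioning gives $P(A\mid\mathcal F_t)\ge\Pi_t$ for a positive $\mathcal F_t$-measurable random variable $\Pi_t$ with $1/\Pi_t\in\mathcal W_t$. On $A$ we have $W_T\le X_t-\kappa_t|\xi|$, so whenever $|\xi|\ge 2|X_t-b|/\kappa_t$ we get $[W_T-b]_-\ge\kappa_t|\xi|/2$ and $[W_T-b]_-^{\alpha_-}\ge(\kappa_t/2)^{\alpha_-}(|\xi|^\iota-1)$ (using $|\xi|^{\alpha_-}\ge|\xi|^\iota-1$, valid since $\iota<\alpha_-$). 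Taking conditional then unconditional expectations, combined with H\"older's inequality with exponent $o/\iota>1$ to control the mass of $|\xi|^\iota$ on $\{|\xi|<2|X_t-b|/\kappa_t\}$ through $E|X_t|^o$ (using $1/\kappa_t\in\mathcal W_t$), and a further H\"older step to handle the factor $\Pi_t$ (using $1/\Pi_t\in\mathcal W_t$), delivers the claimed lower bound.

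Combining the two inequalities with $E\tilde V_t^+-E\tilde V_t^-\ge c$ gives
\[
A_3\,E|\xi|^\iota\le (A_1+A_4)(1+E|X_t|^o)+A_2(E|\xi|^\iota)^{\lambda\alpha_+/\iota}-c,
\]
and Young's inequality $x^{\lambda\alpha_+/\iota}\le\epsilon x+C_\epsilon$ (valid since $\lambda\alpha_+/\iota<1$) with $\epsilon=A_3/(2A_2)$ absorbs the sublinear term into the left-hand side, yielding $E|\xi|^\iota\le K_t(1+E|X_t|^o)$. The main obstacle is the lower bound in the previous paragraph: one must track the iterated no-arbitrage events carefully, ensure the threshold on $|\xi|$ is linear in $|X_t|$, and express all random factors coming from $\pi_n,\kappa_n,C^{t+1}_{n-1},\Pi_t$ through moments of order at most $o$ (since $o<\alpha_-$ we cannot afford $E|X_t|^{\alpha_-}$, which need not be finite).
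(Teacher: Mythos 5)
Your overall strategy coincides with the paper's: apply Lemma \ref{crux} one step ahead to get $(\hat\theta_{t+2},\dots,\hat\theta_T)$, bound $E\tilde V_t^+$ from above in terms of $E|\xi|^{\lambda\alpha_+}$ via subadditivity and H\"older, bound $E\tilde V_t^-$ from below in terms of $E|\xi|^\iota$ via a no-arbitrage event $A$ and the threshold event $F=\{\kappa_t|\xi|/2\ge|X_t|+|b|\}$, then use $c\le E\tilde V_t$ and the gap $\lambda\alpha_+<\iota$ to close. Your use of Young's inequality to absorb the sublinear term is only cosmetically different from the paper's contrapositive (``choose $K_t$ so large that $E|\xi|^\iota>K_t[E|X_t|^o+1]$ would force $E\tilde V_t<c$'' via conditions \eqref{feltetel1}, \eqref{qqq}, \eqref{ppp}); they accomplish the same thing.

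There is, however, a genuine misstep in your lower bound. You pass from $(\kappa_t|\xi|/2)^{\alpha_-}$ to $(\kappa_t/2)^{\alpha_-}(|\xi|^\iota-1)$ \emph{before} you have dealt with the $\mathcal F_t$-measurable nuisance factors $\kappa_t^{\alpha_-}$ and $\Pi_t$. But $\kappa_t$ is not bounded below (only $1/\kappa_t\in\mathcal W^+_t$ is assumed), so you must eventually remove $(\kappa_t/2)^{\alpha_-}\Pi_t$ by a reverse-H\"older step of the form $E(1_F W|\xi|^{\iota p})\ge E^p(1_F|\xi|^{\iota})/E^{p/q}(W^{-q/p})$ with $p>1$; having already committed to exponent $\iota$ on $|\xi|$, this forces the left side to carry $|\xi|^{\iota p}$ with $\iota p>\iota$, which you do not control, or equivalently it degrades the lower bound to a power of $E|\xi|^{\iota/p}$ with $\iota/p<\iota$, which does not dominate the $(E|\xi|^\iota)^{\lambda\alpha_+/\iota}$ term on the other side. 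The paper's proof avoids this by keeping $\kappa_t|\xi|$ paired and working down through exponents in the right order: first H\"older in $\tilde\pi_t$ with $p=\alpha_-/o$ to reach $E^p(1_F(\kappa_t|\xi|/2)^o)$, then the elementary fact (``if $EY^o\ge 2EX^o$ then $E[1_{\{Y\ge X\}}Y^o]\ge\frac12 EY^o$'') to drop the indicator, and only then H\"older with $p=o/\iota$ to separate $|\xi|^\iota$ from $\kappa_t^\iota$, which needs precisely $1/\kappa_t\in\mathcal W$. Without the intermediate exponent $o$ and the indicator-removal trick, your lower bound does not go through as stated.
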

}
\begin{proof}
See Appendix \ref{appendix oslo}.
\end{proof}

\section{On the class of admissible strategies}\label{beszur}

In this section we look at some unexpected phenomena that arise when investigating
the existence of an optimal strategy for problem \eqref{plumduff}}.

In the context of game theory it was suggested already in \cite{borel}
to apply mixed strategies (i.e.
ones using randomness) as opposed to
pure strategies (i.e. ones without randomness). This relaxation
of the set of strategies is indispensable for cornerstone results such as the minimax and equilibrium
theorems to hold (see \cite{von} and \cite{nash}).
These celebrated theorems led to a
widespread application of game theory in economics.

It is important to note that at the beginning, when the basic notions  of game
theory were introduced, there was no associated randomness appearing in the
problem formulation. The randomness hence did not come from the nature of the considered problem but it was introduced  exogenously so that a satisfactory theory could be established.

In the context of optimal stochastic control for partially observed diffusions, auxiliary randomness has been used in order to prove the existence of an optimal control. \cite{fleming_pardoux} and \cite{bkr} have showed, in different setting, that the optimal control
fails to exist unless a relaxed class of randomized controls (called wide-sense) is used\footnote{We thank Ioannis Karatzas for drawing our attention to the references \cite{fleming_pardoux} and \cite{bkr}.}.
As far as we know, in the optimal
investment context there has been no such investigations yet. For this reason we explain what we mean by external randomness
in the framework of the present article.

A portfolio strategy $\theta_t$ at time $t$ is random by nature, it is a function of the information up to $t-1$, as encoded by
$\mathcal{F}_{t-1}$. One may ask, inspired by game theory and
\cite{fleming_pardoux}, whether it makes
sense to add further randomization to the strategy that is not intrinsic to
the problem but comes from an exogenous random source.
In more concrete terms, is it worth taking $\varepsilon$
independent of the whole history $\mathcal{F}_T$, and considering
$\theta_t$ that is a function of $\mathcal{F}_{t-1}$ and $\varepsilon$ ?
The practical implementation of such an idea would be easy: a computer
may be used to generate the random number $\varepsilon$.

As far as we know, this idea never came up in utility theory because
in the standard framework it has not been used for existence results and also it
does not lead to a higher level of satisfaction for the agent. To see
this, consider a utility function $u:\mathbb{R}\to\mathbb{R}$.
Assume for simplicity that $T=1$, $\mathcal{F}_0=\{\emptyset,\Omega\}$
and $\mathcal{F}_1:=\sigma(\Delta S_1)$. Fix $X_0\in\mathbb{R}$.
We assume here that the family of admissible strategies is the set of $\mathcal{F}_0$-measurable random variables, i.e.
$\mathcal{A}:=\mathbb{R}$.
We assume also that $Eu(X_0+\theta\Delta S_1)$
is finite for all $\theta\in \mathcal{A}$ and that an optimal
investment $\theta^*\in \mathcal{A}$ exists, i.e. $Eu(X_0+\theta^*\Delta S_1)=
\sup_{\theta\in\mathcal{A}}Eu(X^{X_0,\theta}_T)$ (see \cite{RS05} for conditions on $u$ and $S$ ensuring that the problem is well-posed and admits some solution).



Let us now define $\mathcal{F}_0':=\sigma(\varepsilon)$ with $\varepsilon$ independent of $\mathcal{F}_1$
and consider
$$\mathcal{A}':=\{\theta: \theta\mbox{ is }\mathcal{F}_0'-\mbox{measurable and }
E[u(X_0+\theta\Delta S_1)]_-<\infty\}.$$
Here it is necessary to constrain the family of $\theta$s by an integrability
condition as it may easily happen that both $E[u(X_0+\theta\Delta S_1)]_-$
and $E[u(X_0+\theta\Delta S_1)]_+$ are infinite and the expected utility
may not be defined.

We claim that
\[
\sup_{\phi\in\mathcal{A}}Eu(X_0+\phi\Delta S_1)=\sup_{\phi\in\mathcal{A}'}
Eu(X_0+\phi\Delta S_1).
\]
Indeed, $\leq$ is trivial from $\mathcal{A}\subset\mathcal{A}'$. Taking $\theta\in\mathcal{A}'$, we see
that, using the tower law and the independence of $\Delta S_1$ and $\theta$,
\begin{eqnarray*}\label{loch}
Eu(X_0+\theta\Delta S_1) & =& E\left[E[u(X_0+\theta\Delta S_1)\vert\theta]\right]\\
 & = & \int_{\mathbb{R}} E[u(X_0+t\Delta S_1)|\theta=t] P_{\theta}(dt)=
\int_{\mathbb{R}} E[u(X_0+t\Delta S_1)]P_{\theta}(dt)\\
&\leq& \int_{\mathbb{R}} E[u(X_0+\theta^*\Delta S_1)]P_{\theta}(dt)=
Eu(X_0+\theta^*\Delta S_1)\\
&=&\sup_{\phi\in\mathcal{A}} Eu(X_0+\phi\Delta S_1),
\end{eqnarray*}
where $P_{\theta}$ denotes the law of $\theta$.

This computation supports our claim that the exogenous random source $\varepsilon$ does not improve the agent's satisfaction for an expected utility criterion. Consequently, such randomizations do not make
sense and thus were never considered. Note, however, that the previous
argument relies on the tower law which applies only because we face a
criterion of expected utility.

In the setting of the present paper there are ``nonlinear'' expectations (Choquet
integrals) and it is not obvious whether an exogeneous random source is useful. In the rest of this section we will see that, somewhat surprisingly, such
randomization does improve the satisfaction of a behavioural investor and hence it is
worth exploiting.

In the rest of this section we investigate a one-step example
($T=1$) with $S_0=0$, $P(\Delta S_1=1)=P(\Delta S_1=-1)=1/2$. Set
$\mathcal{G}_0:=\{\emptyset,\Omega\}$ and
$\mathcal{G}_1:=\sigma(\Delta S_1)$. Let $\epsilon_i$, $i\geq 1$ be
a sequence of i.i.d. random variables, independent of
$\mathcal{G}_1$ such that $P(\epsilon_1=1)=P(\epsilon_1=-1)=1/2$.
Define the sigma-algebras $\mathcal{H}_0:=\{\emptyset,\Omega\}$ and
$\mathcal{H}_n:=\sigma(\epsilon_1,\ldots,\epsilon_n)$ {for $n \geq 1$}. Let
$\mathcal{A}_n$ denote the set of $\mathcal{H}_n$-measurable scalar
random variables {for $n \geq 0$}.

Fix $n{\geq 0}$ and add some external randomization
in the filtration, i.e. $\mathcal{F}_t=\mathcal{G}_t\vee\mathcal{H}_n$.
So $\Phi=\mathcal{A}_n$ in this case.
We take the initial capital $X_0=0$ and
also $B=0$. Assume that $u_+(x)={x}^{1/4}$, $u_-(x)=x$; $w_+(p)=\sqrt{p}$,
$w_-(p)=p$.
Using again the tower law and the independence of $\Delta S_1$ and $\theta$, it is easy to see that
\begin{eqnarray}
\label{rubis}
V^+(0;\theta)& = & \sqrt{\frac12}\int_0^{\infty}\sqrt{P(|\theta|^{1/4}\geq y)}dy,\\
\label{saphir}
V^-(0;\theta)& = & \frac12E|\theta|.
\end{eqnarray}
Here $\mathcal{A}(0)=\mathcal{A}_n$ because $\mathcal{H}_n$ is generated
by finitely many atoms, so there is no need for integrability restrictions
on strategies (see \eqref{saphir}).
Consider a sequence of optimization problems:
\[
(\mathcal{M}_n)\qquad\sup_{\theta\in\mathcal{A}_n}V(0;\theta).
\]
Introduce the notation
$$
M_n:=\sup_{\theta\in\mathcal{A}_n}V(0;\theta),\ n\geq 0.
$$

\begin{lemma}\label{ret}
The strategy $\theta\equiv 0$ is not optimal for $\mathcal{M}_n$, for any
$n$.
\end{lemma}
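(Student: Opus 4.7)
The plan is to exhibit an explicit strategy in $\mathcal{A}_0 \subset \mathcal{A}_n$ that strictly beats the zero strategy, which is clearly sufficient since $\mathcal{H}_0 \subset \mathcal{H}_n$ for every $n \geq 0$. First I would compute $V(0;0)$: with $\theta \equiv 0$ both \eqref{rubis} and \eqref{saphir} vanish trivially, so $V(0;0) = 0$. Thus it suffices to produce some $\theta \in \mathcal{A}_0$ with $V(0;\theta) > 0$.

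Next I would take $\theta \equiv c$ for a positive constant $c$ (which belongs to $\mathcal{A}_0$ since $\mathcal{H}_0$ is trivial). For such $\theta$, the random variable $|\theta|^{1/4}$ is deterministic and equal to $c^{1/4}$, so $P(|\theta|^{1/4} \geq y) = \mathbf{1}_{\{y \leq c^{1/4}\}}$. Plugging into \eqref{rubis} and \eqref{saphir} gives
\[
V^+(0;c) = \sqrt{\tfrac{1}{2}}\int_0^{c^{1/4}} 1\, dy = 2^{-1/2}\, c^{1/4}, \qquad V^-(0;c) = \tfrac{c}{2},
\]
so that $V(0;c) = 2^{-1/2} c^{1/4} - c/2$.

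Finally I would observe that the function $c \mapsto 2^{-1/2} c^{1/4} - c/2$ is strictly positive for all sufficiently small $c > 0$, since the $c^{1/4}$ term dominates the linear term near $0$. (Concretely, one may take $c = 1/4$, which yields $V(0;1/4) = 2^{-1/2}\cdot 2^{-1/2} - 1/8 = 3/8 > 0$.) Hence $M_n \geq V(0;c) > 0 = V(0;0)$, proving that $\theta \equiv 0$ is not optimal for $\mathcal{M}_n$.

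There is no real obstacle here; the conceptual content is simply that the combination of the concave distortion $w_+(p) = \sqrt{p}$ (infinite slope at $0$) with the concave gain utility $u_+(x) = x^{1/4}$ produces an ``overweighting'' effect that rewards arbitrarily small bets on the gain side, whereas the loss side is governed by a linear utility and no distortion, producing only linear-in-$c$ penalties. This asymmetry at $c = 0$ is what makes any infinitesimal bet strictly preferable to inaction, and it will serve as the motivating toy example for the subsequent discussion on whether external randomization can further improve satisfaction.
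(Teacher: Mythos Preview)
Your proof is correct and follows essentially the same route as the paper: pick a small positive constant strategy $\theta\equiv c\in\mathcal{A}_0$, compute $V(0;c)=2^{-1/2}c^{1/4}-c/2$ from \eqref{rubis} and \eqref{saphir}, and note this is strictly positive for small $c$, hence beats $V(0;0)=0$ in every $\mathcal{M}_n$. Your version is slightly more explicit (you exhibit $c=1/4$), but the argument is identical in substance.
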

\begin{proof} From \eqref{rubis} and \eqref{saphir}, we get that for $\theta\in\mathcal{A}_0$, $\theta>0$,
$
V(0;\theta)=\sqrt{\frac{1}{2}}{\theta}^{1/4}-\frac{1}{2}\theta.
$
{So for $\theta>0$ small enough $V(0;\theta)$ is strictly greater than $0=V(0;0)$, showing that
$0$ is not optimal for $\mathcal{M}_0$. Hence, since $M_{n+1} \geq M_n$ for all $ n\geq 0$, it is
not optimal for neither of the $\mathcal{M}_n$}.
\end{proof}

\begin{proposition}\label{kh}
We have $M_n<M_{n+1}<\infty$ for all $n\geq 0$.\footnote{We thank Andrea Meireles
for numerically checking $M_0<M_1$, which eventually lead to the formulation of this proposition.}
\end{proposition}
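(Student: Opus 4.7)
The plan is to first establish $M_n<\infty$ and the existence of a maximiser $\theta^*\in\mathcal{A}_n$ by a compactness argument, and then to construct an explicit perturbation $\theta'\in\mathcal{A}_{n+1}$ with $V(0;\theta')>V(0;\theta^*)=M_n$, thereby proving $M_{n+1}>M_n$.

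For existence, I would identify $\mathcal{A}_n$ with $\mathbb{R}^{2^n}$ via the values $(c_k)_{k=1}^{2^n}$ that $\theta$ takes on the $2^n$ atoms of $\mathcal{H}_n$, each of probability $p:=2^{-n}$. In this parametrisation $V^-(0;\theta)=2^{-n-1}\sum_k|c_k|$ is linear, and $V^+(0;\theta)$ from \eqref{rubis} is a continuous symmetric function of the order statistics of $(|c_k|)$; moreover $V^+(0;\theta)\leq C\max_k|c_k|^{1/4}$ while $V^-(0;\theta)\geq 2^{-n-1}\max_k|c_k|$, so $V(0;\theta)\to-\infty$ as $\max_k|c_k|\to\infty$. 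Continuity and coercivity of $V$ on $\mathbb{R}^{2^n}$ yield a maximiser $\theta^*$, hence $M_n=V(0;\theta^*)<\infty$. The same argument gives $M_{n+1}<\infty$, and $M_n\leq M_{n+1}$ is immediate from $\mathcal{A}_n\subset\mathcal{A}_{n+1}$.

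To obtain $M_n<M_{n+1}$, I would use Lemma \ref{ret} to pick an atom $A$ of $\mathcal{H}_n$ on which $c:=\theta^*|_A\neq 0$; WLOG $c>0$. For $\delta\in(0,c)$ set $\theta':=\theta^*+\delta\,\epsilon_{n+1}\mathbf{1}_A\in\mathcal{A}_{n+1}$. Since $\epsilon_{n+1}$ is independent of $\mathcal{H}_n\vee\mathcal{G}_1$, one has $|\theta'|=|\theta^*|$ on $A^c$, while on $A$ the variable $|\theta'|$ takes the values $c\pm\delta$ each with conditional probability $1/2$. Thus the law of $|\theta'|$ is obtained from that of $|\theta^*|$ by replacing the mass $p$ at $c$ coming from $A$ with mass $p/2$ each at $c\pm\delta$: a mean-preserving spread. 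In particular $V^-(0;\theta')=V^-(0;\theta^*)$ exactly. Writing $f(x):=P(|\theta^*|\geq x)$, $f_-:=f(c)$ and $f_+:=\lim_{x\downarrow c}f(x)$ (so $f_--f_+\geq p$), the change of variable $x=y^4$ in \eqref{rubis} gives, for $\delta>0$ so small that $[c-\delta,c+\delta]$ contains no atoms of $|\theta^*|$ other than $c$,
\[
V^+(0;\theta')-V^+(0;\theta^*)=\frac{\sqrt{1/2}}{4}\Bigl[\bigl(\sqrt{f_++\tfrac{p}{2}}-\sqrt{f_+}\bigr)\!\int_c^{c+\delta}\!x^{-3/4}\,dx+\bigl(\sqrt{f_--\tfrac{p}{2}}-\sqrt{f_-}\bigr)\!\int_{c-\delta}^c\!x^{-3/4}\,dx\Bigr].
\]
Both integrals equal $\delta c^{-3/4}(1+o(1))$ as $\delta\downarrow 0$, so the leading-order sign of the increment is that of $(\sqrt{f_++p/2}-\sqrt{f_+})-(\sqrt{f_-}-\sqrt{f_--p/2})$, which is strictly positive by strict concavity of $\sqrt{\cdot}$ combined with $f_++p/2\leq f_--p/2$. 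Consequently $V(0;\theta')>V(0;\theta^*)=M_n$ for all sufficiently small $\delta>0$, so $M_{n+1}\geq V(0;\theta')>M_n$.

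The hard part is the last computation: one must verify that a mean-preserving spread of $|\theta^*|$ confined to a single atom strictly increases $V^+$. The decisive ingredient is the strict concavity of the distortion $w_+=\sqrt{\cdot}$ together with the fact that the atom $A$ contributes a positive jump of mass $p$ to the survival function of $|\theta^*|$ at $c$, so that the upward and downward pieces of the increment fall in disjoint regions of $\sqrt{\cdot}$. This asymmetry between $V^+$ and the linear $V^-$ has no analogue in the expected-utility framework discussed earlier in the section, and is exactly what makes external randomisation strictly beneficial in the behavioural setting.
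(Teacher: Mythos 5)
Your proof is correct and rests on the same core mechanism as the paper's: perturb the optimal $\mathcal{H}_n$-strategy by an $\epsilon_{n+1}$-driven spread on a set where it is non-zero, observe that $V^-$ is linear in $|\theta|$ and hence unchanged, and show that the concave distortion $w_+=\sqrt{\cdot}$ makes the spread strictly profitable for $V^+$. The differences are in packaging, and yours is arguably tidier on two counts. First, for existence you give a direct coercivity-plus-continuity argument on $\mathbb{R}^{2^n}$ ($V^+ \le \sqrt{1/2}\max_k|c_k|^{1/4}$ versus $V^- \ge 2^{-n-1}\max_k|c_k|$), whereas the paper reaches back to Theorem~\ref{ajtothm} and Lemma~\ref{oslo} to bound the search domain; both are fine, yours is self-contained, the paper's reuses its general machinery. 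Second, and more substantially, for the strict inequality the paper perturbs on the entire level set $\{|\theta^*_n|=a\}$ for the \emph{smallest} non-zero value $a$, computes $V^+(0;\Theta_{n+1}(\delta))$ explicitly, and differentiates at $\delta=0$, splitting into the cases $a=b$ and $a<b$; you instead perturb on a single $\mathcal{H}_n$-atom, pass to the survival function via $x=y^4$, and reduce the sign of the first-order increment to
\[
\bigl(\sqrt{f_++\tfrac{p}{2}}-\sqrt{f_+}\bigr)-\bigl(\sqrt{f_-}-\sqrt{f_--\tfrac{p}{2}}\bigr)>0,
\]
which follows at once from strict concavity of $\sqrt{\cdot}$ and $f_--f_+\ge p$ (so the two intervals $[f_+,f_++p/2]$ and $[f_--p/2,f_-]$ are disjoint or touching, with the first lying to the left). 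This is cleaner: it avoids the explicit derivative formula and the case split, and it exposes the structural reason behind the gain (a mean-preserving spread of $|\theta^*|$ at an atom, pushed through a strictly concave distortion). One small remark: you did not need to choose the smallest non-zero atom as the paper does; any atom on which $\theta^*\ne 0$ works, precisely because your concavity argument only uses $f_--f_+\ge p$, not exact knowledge of the jump.
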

\begin{proof}
See Appendix \ref{appendixkh}.
\end{proof}

Proposition \ref{kh} shows that introducing more and more external
random sources {increases} the attainable satisfaction level, i.e.
``gambling'' leads to higher agent satisfaction.  Once we accept the
hypothesis that agents act according to preferences involving the
distortions $w_{\pm}$, we also have to accept that, using external
randomness, they may (and do) increase their satisfaction level. It
seems thus reasonable to use the whole sequence $\epsilon_i$, i.e.
to optimize over $\sigma(\epsilon_i,i\in\mathbb{N})$-measurable
$\theta$ (note that in this case one needs to restrict the domain of
maximization to those $\theta$ for which $V^-(0;\theta)$ is finite,
but this is a minor point which is not crucial for our discussion).

As by Kuratowski's theorem the spaces $\{1,-1\}^{\mathbb{N}}$ and
$[0,1]$ are Borel-isomorphic (see Theorem 80 on p.159 of
\cite{dm}), one may take, instead of
$\sigma(\epsilon_i,i\in\mathbb{N})$, $\mathcal{H}=
\sigma(\varepsilon)$ where $\varepsilon$ is uniform on $[0,1]$ and
independent of $\mathcal{F}_1$. One may try to push this further by
considering a sigma-algebra generated by a sequence of independent
uniform random variables but this does not lead to a larger class of
trading strategies as $[0,1]^{\mathbb{N}}$ is Borel-isomorphic to
$[0,1]$, see again p. 159 of \cite{dm}. Finally, one may think of
extending the optimization problem to $\sigma(\epsilon_i,i\in
I)$-measurable $\theta$, for an uncountable collection $I$. This,
again, does not lead to a larger domain of optimization since if
$\theta$ is a given $\sigma(\epsilon_i,i\in I)$-measurable random
variable then it is also $\sigma(\epsilon_i,i\in I_0)$-measurable
for some countable $I_0\subset I$.\footnote{Indeed, it is enough to
show that every bounded $\sigma(\epsilon_i,i\in I)$-measurable
$\theta$ is, in fact, $\sigma(\varepsilon_i,i\in I_0)$-measurable
for some countable $I_0\subset I$ (which may depend on $\theta$).
For any $\sigma$-field ${\cal B}$, let $b({\cal B})$ be the set of
${\cal B}$-measurable, bounded and real-valued functions. Let
$\mathcal{H}:=\bigcup_{J \mbox{ countable, } J \subset
I}b(\sigma(\epsilon_j,j\in J)).$ It is easy to see  that
$\sigma(\mathcal{H})=\sigma(\epsilon_i,i\in I)$. As $\mathcal{H}$ is
clearly a monotone vector space as well as a multiplicative class,
by the Monotone Class Theorem, $b(\sigma(\mathcal{H}))=\mathcal{H}$
(see p.7 of \cite{protter}). So $b(\sigma(\epsilon_i,i\in
I))=\mathcal{H}$ and the result is proved. }

{The arguments of the previous paragraph show, together with Proposition \ref{kh},
that a natural maximal domain of optimization is ${\mathcal{H}}$.}
By using a uniform $\epsilon$ (independent of $\mathcal{F}_1$) for randomizing
the strategies an investor can increase her satisfaction and further randomizations
are pointless as they do not provide additional satisfaction.

Based on discussions of this section we reformulate the problem of existence
by enlarging the filtration: let $\mathcal{G}_t$, $t=0,\ldots,T$ be a filtration and
let $S_t$, $t=0,\ldots, T$ be a
$\mathcal{G}_t$-adapted process. Furthermore, $\mathcal{F}_t=\mathcal{G}_t\vee\mathcal{F}_0$, $t\geq 0$, where $\mathcal{F}_0=\sigma({\varepsilon})$ with ${\varepsilon}$ uniformly distributed on $[0,1]$ and independent of
$\mathcal{G}_T$.

We are now seeking $\theta^*\in
{\mathcal{A}}(X_0)$ such that
\[
V(X_0;\theta^*_1,\ldots,\theta^*_T)=\sup_{\theta\in {\mathcal{A}}(X_0)}
V(X_0;\theta_1,\ldots,\theta_T),
\]
where $V(\cdot)$ and $\mathcal{A}(X_0)$ are as defined in section
\ref{ketto}.
We will see in the next section that in this relaxed class of
randomized strategies there exists indeed an optimal
strategy.

\section{Existence of an optimizer using relaxed strategies}\label{ot}




In this section we prove the existence of optimal strategies after introducing
some hypotheses. First, we will need a certain structural assumption
on the filtration.

\begin{assumption}\label{as3}
Let $\mathcal{G}_0=\{\emptyset,\Omega\}$, $\mathcal{G}_t=\sigma(Z_1,\ldots,Z_t)$ for $1\leq t\leq T$,
where the $Z_i$, $i=1,\ldots, T$ are $\mathbb{R}^N$-valued
independent random variables. $S_0$ is constant
and $\Delta S_t$ is a
continuous function of $(Z_1,\ldots,Z_t)$, for all $t\geq 1$ (hence $S_t$
is $\mathcal{G}_t$-adapted).

Furthermore, $\mathcal{F}_t=\mathcal{G}_t\vee\mathcal{F}_0$, $t\geq 0$,
where $\mathcal{F}_0=\sigma({\varepsilon})$ with ${\varepsilon}$
uniformly distributed on $[0,1]$ and independent of $(Z_1,\ldots,Z_T)$.
\end{assumption}

We may think that $\mathcal{G}_t$ contains the information available at time $t$ (given by the
observable stochastic factors $Z_i$, $i=1,\ldots,t$)
and $\mathcal{F}_0$ provides the independent random source that we use
to randomize our trading strategies as discussed in the previous section
in much detail. The random variables $Z_i$ represent the
``innovation'': the information surplus of $\mathcal{F}_i$
with respect to $\mathcal{F}_{i-1}$, in an independent way.

For the construction of optimal strategies we use weak convergence techniques which
exploit the additional randomness provided
by ${\varepsilon}$ (the situation is somewhat analogous to the construction of
a weak solution for a stochastic differential equation). Assumption \ref{as3}
holds in many cases, see section \ref{exaoa} for examples.
It may nevertheless seem that Assumption \ref{as3} is quite restrictive. In particular,
it would be desirable to weaken the independence assumption on the $Z_i$.
For this reason
we propose another assumption which may be easier to check in certain
model classes and which will be shown to imply Assumption \ref{as3}.

\begin{assumption}\label{masikk}
 Let $\mathcal{G}_0=\{\emptyset,\Omega\}$, $\mathcal{G}_t=\sigma(\tilde{Z}_1,\ldots,\tilde{Z}_t)$ for
$1\leq t\leq T$,
where the $\tilde{Z_i}$, $i=1,\ldots,T$ are $\mathbb{R}^N$-valued
random variables with a continuous and everywhere positive
joint density $f$ on $\mathbb{R}^{TN}$ such that for all $i=1,\ldots,TN$,
the function
\begin{equation}\label{torto}
z\to \sup_{x^1,\ldots,x^{i-1}} f_i(x^1,\ldots,x^{i-1},z)
\end{equation}
is integrable on $\mathbb{R}$,  where $f_i$ is the marginal
density of $f$ with respect to its first $i$ coordinates, {for $i=2,\ldots,TN$.}
$S_0$ is constant
and $\Delta S_t$ is a
continuous function of $(\tilde{Z}_1,\ldots,\tilde{Z}_t)$, for all $t\geq 1$.

Furthermore, $\mathcal{F}_t=\mathcal{G}_t\vee\mathcal{F}_0$, $t\geq 0$,
where $\mathcal{F}_0=\sigma({\varepsilon})$ with ${\varepsilon}$
uniformly distributed on $[0,1]$ and independent of $(\tilde{Z}_1,\ldots,\tilde{Z}_T)$.
\end{assumption}

\begin{remark}
{\rm Condition \eqref{torto} is quite weak, it holds, for example, when there is $C>0$
such that
\[
f(\mathbf{x})\leq {C}\prod_{i=1}^{TN} g_{i}(\mathbf{x}^i),\quad \mathbf{x}\in\mathbb{R}^{TN},
\]
for some {positive}, bounded and integrable (on $\mathbb{R}$) functions $g_{i}$,
({for example $g_i(y)=1/(1+y^2)$}).
}
\end{remark}

\begin{proposition}\label{+++}
If Assumption \ref{masikk} above holds true then so does Assumption \ref{as3}.
\end{proposition}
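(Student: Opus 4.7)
My plan is to construct independent random variables $Z_1,\ldots,Z_T$ by applying a Rosenblatt-type conditional probability integral transform to the coordinates of $(\tilde{Z}_1,\ldots,\tilde{Z}_T)$, and then show that the $\tilde{Z}_i$ are continuous functions of the $Z_j$ with the same natural filtration, which transfers the continuous-function-of-factors structure from $\tilde{Z}$ to $Z$.

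First, I would work coordinate-wise: regard $(\tilde{Z}_1,\ldots,\tilde{Z}_T)$ as an $\mathbb{R}^{TN}$-valued vector with entries $\tilde{Y}^1,\ldots,\tilde{Y}^{TN}$ and joint density $f$. The continuity of $f$ together with the integrability condition \eqref{torto} allow me to apply dominated convergence to show, by downward induction on $i$, that each marginal density $f_i(x^1,\ldots,x^i)=\int f_{i+1}(x^1,\ldots,x^i,z)\,dz$ is continuous on $\mathbb{R}^i$. Since $f>0$ everywhere, so is each $f_i$, and hence the conditional densities $f(x^i\mid x^1,\ldots,x^{i-1}):=f_i/f_{i-1}$ are continuous and strictly positive. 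The conditional CDFs $F_i(x^i\mid x^1,\ldots,x^{i-1})$ are therefore continuous in all variables and strictly increasing and surjective onto $(0,1)$ in $x^i$; standard implicit-function-type arguments (or direct monotone inversion) yield continuous inverses $F_i^{-1}(u\mid x^1,\ldots,x^{i-1})$ in $(u,x^1,\ldots,x^{i-1})$.

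Next, I define $U^i:=F_i(\tilde{Y}^i\mid \tilde{Y}^1,\ldots,\tilde{Y}^{i-1})$ for $i=1,\ldots,TN$. A classical calculation shows the $U^i$ are i.i.d.\ uniform on $[0,1]$, and by construction $\tilde{Y}^i=F_i^{-1}(U^i\mid \tilde{Y}^1,\ldots,\tilde{Y}^{i-1})$. Unwinding this recursion, each $\tilde{Y}^i$ is a continuous function of $(U^1,\ldots,U^i)$, and conversely each $U^i$ is a measurable (indeed continuous) function of $(\tilde{Y}^1,\ldots,\tilde{Y}^i)$. I then group the $U^j$ into blocks of length $N$ to obtain $\mathbb{R}^N$-valued independent variables $Z_t:=(U^{(t-1)N+1},\ldots,U^{tN})$, $t=1,\ldots,T$. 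By the grouping and the above continuity, $(\tilde{Z}_1,\ldots,\tilde{Z}_t)$ is a continuous function of $(Z_1,\ldots,Z_t)$ and vice versa, so $\sigma(Z_1,\ldots,Z_t)=\sigma(\tilde{Z}_1,\ldots,\tilde{Z}_t)=\mathcal{G}_t$ for every $t$.

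Finally, since $\Delta S_t$ is a continuous function of $(\tilde{Z}_1,\ldots,\tilde{Z}_t)$ by Assumption \ref{masikk}, and the map $(Z_1,\ldots,Z_t)\mapsto(\tilde{Z}_1,\ldots,\tilde{Z}_t)$ just built is continuous, $\Delta S_t$ is a continuous function of $(Z_1,\ldots,Z_t)$. The $Z_i$ are measurable functions of $(\tilde{Z}_1,\ldots,\tilde{Z}_T)$, so independence of $\varepsilon$ from $(\tilde{Z}_1,\ldots,\tilde{Z}_T)$ in Assumption \ref{masikk} carries over to independence of $\varepsilon$ from $(Z_1,\ldots,Z_T)$, and $\mathcal{F}_t=\mathcal{G}_t\vee\mathcal{F}_0$ is preserved. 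All clauses of Assumption \ref{as3} are thus verified. The main obstacle I anticipate is the careful verification that the integrability hypothesis \eqref{torto} really delivers continuity of every marginal $f_i$ (needed so the conditional densities and their inverse CDFs are continuous); once this is in place the rest is bookkeeping.
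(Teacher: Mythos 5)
Your proposal is correct and follows essentially the same route as the paper: a coordinate-by-coordinate Rosenblatt (conditional probability integral) transform, producing independent uniforms that are then grouped into $\mathbb{R}^N$-blocks, with the integrability condition \eqref{torto} feeding into dominated convergence to secure continuity of the marginals and conditional distribution functions so that the transform and its inverse are continuous. The paper simply packages your inline construction into Lemma \ref{end} (one-coordinate transform, using invariance of domain to get the homeomorphism where you invoke monotone inversion) and Corollary \ref{bank} (the induction and grouping), and then applies Corollary \ref{bank} with $k=TN$.
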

\begin{proof}
See Appendix \ref{+++appendix}.
\end{proof}

Furthermore, the following assumption on continuity and on the
initial endowment is imposed.

\begin{assumption}\label{as4}
The random variable $B$ is a continuous function of $(Z_1,\ldots,Z_{T})$, $X_0$ is
deterministic and $\mathcal{A}(X_0)$
is not empty. $u_{\pm},w_{\pm}$ are continuous functions.
\end{assumption}

\begin{remark} {\rm If $B$ is a continuous function of $(S_0,\ldots,S_T)$ then Assumption
\ref{as3} clearly implies the first part of Assumption \ref{as4}. For conditions
implying $\mathcal{A}(X_0)\neq\emptyset$ see Remark \ref{qtya} above. }
\end{remark}

\begin{remark}\label{lublin}
{\rm We may and will suppose that the $Z_i$ figuring in Assumption
\ref{as3} are bounded. This can always
be achieved by replacing each coordinate $Z_i^j$ of $Z_i$ with
$\mathrm{arctan}\, Z_i^j$ for $j=1,\ldots, N$, $i=1,\ldots,T$.}
\end{remark}

We now present our main result on the existence of an optimal strategy.

\begin{theorem}\label{maine} Let Assumptions \ref{marche}, \ref{as1}, \ref{as2}, \ref{as3} and \ref{as4}
hold. Then there is $\theta^*\in\mathcal{A}(X_0)$ such that
\[
V(X_0;\theta^*_1,\ldots,\theta^*_T)=\sup_{\theta\in\mathcal{A}(X_0)}V(X_0;\theta_1,\ldots,\theta_T)<\infty.
\]
\end{theorem}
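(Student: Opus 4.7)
The plan is to apply a weak-convergence argument to a maximizing sequence, using the external randomness $\varepsilon$ of Assumption~\ref{as3} to realize the weak limit as an \emph{adapted} strategy, in the spirit of the relaxed controls motivated in Section~\ref{beszur}. Theorem~\ref{ajtothm} provides $V^\star:=\sup_{\theta\in\mathcal{A}(X_0)}V(X_0;\theta_1,\ldots,\theta_T)<\infty$. I fix $\theta^n\in\mathcal{A}(X_0)$ with $V(X_0;\theta^n)\to V^\star$. Since $V\leq\tilde V$ by Lemma~\ref{eel}, one has $E\tilde V_0(X_0;\theta^n)\geq V^\star-1$ for $n$ large. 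Applying Lemma~\ref{oslo} recursively in $t=0,\ldots,T-1$ (with the auxiliary state $X_0+\sum_{s\leq t}(\theta^n_s-\phi_s)\Delta S_s$, propagating $L^o$-control by H\"older and $\Delta S_t\in\mathcal W$) produces an exponent $\iota\in(\lambda\alpha_+,\alpha_-)$ with $\sup_n E|\theta^n_t-\phi_t|^\iota<\infty$ for every $t$. By Remark~\ref{lublin} the $Z_i$ may be taken bounded, so combined with the uniform law of $\varepsilon$ the joint laws of $(Z_1,\ldots,Z_T,\varepsilon,\theta^n_1,\ldots,\theta^n_T)$ form a tight family on $\mathbb R^{TN}\times[0,1]\times\mathbb R^{Td}$. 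I extract a weakly convergent subsequence (not relabelled) with limit law $\mu^\star$.

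Next I turn $\mu^\star$ into an adapted strategy. For each $n$ and each $t$, the vector $(\theta^n_1,\ldots,\theta^n_t,Z_1,\ldots,Z_{t-1})$ is a measurable function of $(Z_1,\ldots,Z_{t-1},\varepsilon)$ and therefore independent of $(Z_t,\ldots,Z_T)$. Independence between Euclidean blocks is preserved under weak convergence (testing against bounded continuous product functions $(x,y)\mapsto f(x)g(y)$), so $\mu^\star$ inherits the same block-independence. Disintegrating $\mu^\star$ accordingly yields Borel kernels $\lambda_t(z_{1:t-1},\theta_{1:t-1};d\theta_t)$ describing the conditional laws. Using the Borel isomorphism $[0,1]\cong[0,1]^T$ (p.~159 of \cite{dm}, as already invoked in Section~\ref{beszur}), I split $\varepsilon$ into $\mathcal F_0$-measurable i.i.d.\ uniform variables $U_1,\ldots,U_T$; a standard jointly measurable realization of kernels (via the Borel isomorphism $\mathbb R^d\cong[0,1]$) produces Borel maps $\psi_t$ such that
\[
\theta^\star_t:=\psi_t(Z_1,\ldots,Z_{t-1},\theta^\star_1,\ldots,\theta^\star_{t-1},U_t),\qquad 1\leq t\leq T,
\]
satisfies $\theta^\star_t\in\Xi^d_{t-1}$ and the joint distribution of $(Z_1,\ldots,Z_T,\theta^\star_1,\ldots,\theta^\star_T)$ coincides with $\mu^\star$.

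To pass to the limit in the objective I invoke Skorokhod's representation theorem: on some auxiliary probability space the chosen subsequence converges a.s.\ to $\theta^\star$. Continuity of $\Delta S_t$, $B$, $u_\pm$, $w_\pm$ (Assumptions~\ref{as3}, \ref{as4}) then gives a.s.\ convergence of $u_\pm([X^{X_0,\theta^n}_T-B]_\pm)$ and hence pointwise convergence (at continuity points of the limiting distribution) of the integrands appearing in $V^\pm$. For $V^+$ the estimate \eqref{fuga} combined with Markov's inequality and the uniform $L^{\lambda\alpha_+}$-control of $X^{X_0,\theta^n}_T-B$ (coming from the moment bounds above together with $\Delta S_t\in\mathcal W$) produces an integrable majorant in $y$, thanks to $\lambda\gamma_+>1$; dominated convergence delivers $V^+(X_0;\theta^n)\to V^+(X_0;\theta^\star)$. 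For $V^-$, Fatou's lemma applied to the non-negative integrands gives $V^-(X_0;\theta^\star)\leq\liminf_n V^-(X_0;\theta^n)$. Therefore
\[
V(X_0;\theta^\star)\ \geq\ \lim_n V^+(X_0;\theta^n)-\liminf_n V^-(X_0;\theta^n)\ \geq\ \lim_n V(X_0;\theta^n)=V^\star,
\]
which both attains the supremum and forces $V^-(X_0;\theta^\star)<\infty$, i.e.\ $\theta^\star\in\mathcal A(X_0)$.

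The main obstacle is the construction in the second step: weak limits of adapted random variables need not be adapted in the original filtration, so the limiting strategy is a priori only a joint law. The richness of $\mathcal F_0=\sigma(\varepsilon)$, whose necessity was argued in Section~\ref{beszur}, is exactly what permits recovery of an adapted representation via the Borel-isomorphism trick. A secondary technicality is obtaining a dominating function for the convergence of the Choquet integral $V^+$; this is precisely where the condition $\lambda\gamma_+>1$ enforced by Assumption~\ref{as1} intervenes.
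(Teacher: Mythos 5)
Your argument is correct and follows essentially the same route as the paper's proof: moment bounds obtained by iterating Lemma~\ref{oslo} (which is exactly the content of Lemma~\ref{porto}) give tightness of the maximizing sequence; the weak limit is realized as an $\mathcal{F}$-predictable strategy by splitting the external uniform $\varepsilon$ via a Borel isomorphism into independent uniforms that feed the disintegration kernels (this is the paper's Lemma~\ref{preserve} combined with weak-convergence-preserves-independence); and the condition $\lambda\gamma_+>1$ provides the integrable dominating function in $y$ that justifies the limit passage in $V^+$, with Fatou handling $V^-$. The only differences are cosmetic: the paper splits $\varepsilon$ into two pieces $(\varepsilon',\tilde\varepsilon)$, restricts the maximizing sequence to be $\sigma(\varepsilon')$-adapted, keeps $\varepsilon'$ in the weak-limit tuple, and uses $\tilde\varepsilon$ for the fresh noise, whereas you split $\varepsilon$ once into $U_1,\ldots,U_T$ and match only the $(Z,\theta)$-marginal of the limit law (which is all that $V$ depends on); your Skorokhod step is superfluous, since the continuous mapping theorem already gives distributional convergence of $X^{X_0,\theta^n}_T-B$; and the passage from block-independence to the product factorization $\nu^\star=\prod_t P_{Z_t}\prod_t\lambda_t$ would bear one line of explicit justification, as it is the step the paper's inductive use of Lemma~\ref{preserve} makes carefully.
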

We sketch the proof of Theorem \ref{maine}.
First, we fix some
$\lambda \alpha_+ <\chi<\alpha_-$ for what follows.
In Lemma \ref{oslo} above and Lemma \ref{porto} below, we refine certain
arguments of Lemma \ref{crux}: instead of building
a particular strategy $\tilde{\theta}$ from some strategy $\theta$, we show that the boundness
from below of $\tilde{V}$ implies that $\sup_{\theta,t} E\vert \theta_{t+1}-\phi_{t+1}\vert^{\tau }$
is bounded, for any $\chi <\tau < \alpha_-$. This allows us to prove that a maximizing sequence for problem $V$ is tight and thus
weakly converges.
The problem is then to construct some strategy with the same law as the above weak limit but also
$\mathcal{F}$-predictable. To this end we need first to consider a sequence including
$(Z_1,\ldots,Z_T)$. But this is not enough: consider the following example {showing that a weak
limit of some $\mathcal{F}$-predictable sequence may fail to be $\mathcal{F}$-predictable}.

\begin{example}
{\rm Consider the probability space $\Omega:=[0,1]$ equipped with its Borel
sigma-field and the Lebesgue measure. Take $\xi(\omega):=\omega$ for
$\omega\in\Omega$ and for $n\geq 1$,
$\eta_n(\omega):=n(\omega-(k/n))$, for
$\omega\in [k/n,(k+1)/n)$, $k=0,\ldots,n-1$. Clearly, each $\eta_n$
is a function of $\xi$, actually, every random variable on this
probability space is a function of $\xi$. Nonetheless the weak limit of the sequence
$\mathrm{Law}(\xi,\eta_n)$ is easily seen to be the uniform law on $[0,1]^2$.
We claim  that there is no $\eta$ defined on $\Omega$ such that $(\xi,\eta)$
has uniform law on $[0,1]^2$. Indeed, $\eta$ is necessarily a function
of $\xi$ hence it cannot be also independent of it. This shows that, in
order to construct $\eta$ with the property that $(\xi,\eta)$ has the
required (uniform) law, one needs to extend the probability space.}
\end{example}

Therefore we add some random noises $(\varepsilon',
\varepsilon_1,\ldots,\varepsilon_T)$. The noise $\varepsilon'$ is
used to build some admissible set $\mathcal{A}'(X_0)$, where we
choose some maximizing sequence
$(\theta_1(j),\ldots,\theta_T(j))_j$. Then we consider the sequence
$(Y_j)_j$, where
$Y_j:=(\varepsilon',\theta_1(j),\ldots,\theta_T(j),Z_1,\ldots,Z_T)$,
which is also tight and call $\mu$ its weak limit. Then we construct
inductively, $\theta^*_t$, $t=1,\ldots,T$ such that
$(\varepsilon',\theta^*_1,\ldots,\theta^*_T,$ $Z_1,\ldots,Z_T)$ has
law $\mu$  and $\theta^*_t$ depends only on $(\varepsilon',
\varepsilon_1,\ldots,\varepsilon_{t}, Z_1,\ldots,Z_{t-1})$ and hence
it is $\mathcal{F}_{t-1}$-measurable. Finally, we show that this
strategy $\theta^*$ is optimal.

{
\begin{lemma}\label{porto} Let Assumptions \ref{marche}, \ref{as1}, \ref{as2}
be in force.
Fix $c\in\mathbb{R}$ and $\tau$ with
${\lambda \alpha_+} <\tau<\alpha_-$.
Then there exist constants $G_t,t=0,\ldots,T-1$
such that
$$E\vert \theta_{t+1}-\phi_{t+1}\vert^{\tau}\leq
G_t[E\vert X_0\vert^{\alpha_-}+1] \mbox{ for $t=0,\ldots,T-1$ } $$
for any $\theta\in\tilde{\mathcal{A}}(X_0)$ and $X_0\in\Xi^1_0$ with $E\vert X_0\vert^{\alpha_-}<\infty$
such that
$$
\tilde{V}(X_0;\theta_{1},\ldots,\theta_T)=E\tilde{V}_0(X_0;\theta_{1},\ldots,\theta_T)\geq c.$$
Note that the constants $G_t,t=0,\ldots,T-1$ do not depend either on $X_0$ or on $\theta$.
\end{lemma}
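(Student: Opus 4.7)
My plan is to bootstrap Lemma \ref{oslo} along a finite grid of intermediate exponents and chain the resulting moment bounds through the telescoping identity $X_t = X_0 + \sum_{n=1}^{t} (\theta_n - \phi_n)\Delta S_n$. First I would fix, once and for all, a strictly increasing grid
$$\tau = \iota_0 < o_0 < \iota_1 < o_1 < \cdots < \iota_{T-1} < o_{T-1} < \alpha_-,$$
which is possible because $\tau < \alpha_-$ and $T$ is finite. A preliminary observation: by the tower property applied to Definition \ref{auxit}, $E\tilde{V}_t(X_t;\theta_{t+1},\ldots,\theta_T) = \tilde{V}(X_0;\theta_1,\ldots,\theta_T) \geq c$ for every $t$, and Lemma \ref{tavasz} places $(\theta_{t+1},\ldots,\theta_T)$ in $\tilde{\mathcal{A}}_t(X_t)$. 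Hence for each pair $(t,k)$ Lemma \ref{oslo} yields a constant $K_{t,k}$ (independent of $\theta$ and $X_0$) such that
$$E|\theta_{t+1} - \phi_{t+1}|^{\iota_k} \leq K_{t,k}\bigl[E|X_t|^{o_k} + 1\bigr],$$
whenever $E|X_t|^{o_k} < \infty$.

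The core of the argument is an induction on $t$ proving the auxiliary statement: for every $t \in \{0,\ldots,T-1\}$ and every $k \in \{0,\ldots,T-1-t\}$ there is a constant $G_{t,k}$, independent of $\theta$ and $X_0$, with
$$E|\theta_{t+1} - \phi_{t+1}|^{\iota_k} \leq G_{t,k}\bigl[E|X_0|^{\alpha_-} + 1\bigr].$$
The base case $t=0$ is immediate from Lemma \ref{oslo} at $t=0$ together with $E|X_0|^{o_k} \leq E|X_0|^{\alpha_-} + 1$ (valid because $o_k < \alpha_-$). For the inductive step, I would first bound $E|X_t|^{o_k}$: apply the $c_r$-inequality to $X_t$, then split each term $E|(\theta_n-\phi_n)\Delta S_n|^{o_k}$ by H\"older with the conjugate pair $\bigl(\iota_{k+1}/o_k,\ \iota_{k+1}/(\iota_{k+1}-o_k)\bigr)$. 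This gives
$$E|(\theta_n-\phi_n)\Delta S_n|^{o_k} \leq L_n\bigl(E|\theta_n-\phi_n|^{\iota_{k+1}}\bigr)^{o_k/\iota_{k+1}},$$
with $L_n$ finite since $\Delta S_n \in \mathcal{W}$. The inductive hypothesis at index $(n-1,k+1)$ (legal because $k+1 \leq T-1-(n-1)$ for $n \leq t$) bounds $E|\theta_n-\phi_n|^{\iota_{k+1}}$ by $G_{n-1,k+1}[E|X_0|^{\alpha_-}+1]$, and the concave power $y^{o_k/\iota_{k+1}} \leq y+1$ linearizes the estimate. Plugging back into Lemma \ref{oslo} at time $t$ closes the induction, and the statement of the lemma is recovered with $G_t := G_{t,0}$ (since $\iota_0 = \tau$).

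The main obstacle is the double-indexed bookkeeping: each recursive step along the time axis consumes one level of the exponent grid, so the induction depth $t$ must be supportable by the finite grid, which is precisely why the pre-chosen strictly increasing sequence of $2T$ values below $\alpha_-$ is built upfront. A secondary delicate point is that each invocation of Lemma \ref{oslo} demands $E|X_t|^{o_k} < \infty$; this integrability is not assumed but rather propagated by the same induction, seeded at the base case by the hypothesis $E|X_0|^{\alpha_-} < \infty$.
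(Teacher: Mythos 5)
Your proof is correct, but it takes a genuinely different route from the paper's. Where you expand $X_t = X_0 + \sum_{n=1}^{t}(\theta_n-\phi_n)\Delta S_n$ in full at every step of the induction --- forcing a double-indexed exponent grid of $2T$ values and a double induction on $(t,k)$, with the delicate constraint $k \leq T-1-t$ to keep the hypothesis at index $(n-1,k+1)$ within range --- the paper instead exploits the \emph{one-step} recursion $X_{t+1} = X_t + (\theta_{t+1}-\phi_{t+1})\Delta S_{t+1}$. It fixes a single chain $\tau = \alpha_T < \alpha_{T-1} < \cdots < \alpha_0 = \alpha_-$ and proves by an ordinary induction on $t$ that $E|X_t|^{\alpha_t} \leq C_t\,[\,E|X_0|^{\alpha_-}+1\,]$; the inductive step only consumes one exponent level because the $c_r$-inequality is applied to a sum of just two terms, $X_t$ and $(\theta_{t+1}-\phi_{t+1})\Delta S_{t+1}$, and Lemma \ref{oslo} is invoked once with $\iota := (\alpha_{t+1}+\alpha_t)/2$, $o := \alpha_t$. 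The final bound on $E|\theta_{t+1}-\phi_{t+1}|^{\tau}$ is then a one-line consequence of the bound on $E|X_t|^{\alpha_t}$ plus one more application of Lemma \ref{oslo}. Your route is sound and reaches the same destination, but the paper's one-step recursion keeps the bookkeeping linear in $t$ and needs only $T+1$ exponents rather than $2T$, which makes the argument considerably lighter to write out and check.
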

}

\begin{proof}
See Appendix \ref{appendix porto}.
\end{proof}

\begin{proof}[Proof of Theorem \ref{maine}]
Lemma \ref{ode} with the choice $E:={\varepsilon}$, $l=2$ gives us
$\tilde{\varepsilon},\varepsilon'$ independent, uniformly
distributed on $[0,1]$ and $\mathcal{F}_0$-measurable. Introduce
\[
\mathcal{A}'(X_0):=\{\theta\in\mathcal{A}(X_0):\theta_t\mbox{ is }\mathcal{F}_{t-1}'\mbox{-measurable for
all } t=1,\ldots,T\},
\]
where $\mathcal{F}_t':=\mathcal{G}_t\vee \sigma(\varepsilon')$. Note
that if $\theta\in\mathcal{A}(X_0)$ then there exists
$\theta'\in\mathcal{A}'(X_0)$ such that the law of $(\theta,\Delta
S)$ equals that of $(\theta',\Delta S)$ (since the law of
${\varepsilon}$ equals that of $\varepsilon'$ and both are
independent of $\Delta S$). It follows that for all
$\theta\in\mathcal{A}(X_0)$ there is $\theta'\in\mathcal{A}'(X_0)$
with
\[
V(X_0;\theta_1,\ldots,\theta_T)=V(X_0;\theta_1',\ldots,\theta_T').
\]


\noindent Take $\theta(j)\in\mathcal{A}(X_0),j\in\mathbb{N}$ such that
\[
V(X_0;\theta_1(j),\ldots,\theta_T(j))\to \sup_{\theta\in\mathcal{A}(X_0)}V(X_0;\theta_1,\ldots,\theta_T),
\quad j\to\infty.
\]
By Assumption \ref{as4} and Theorem \ref{ajtothm}, the supremum is finite and we
can fix $c$ such that
$
-\infty<c<\inf_j V(X_0;\theta_1(j),\ldots,\theta_T(j))$. By Lemma \ref{eel} it implies that for
all $j$,
$$\tilde{V}(X_0;\theta_1(j),\ldots,\theta_T(j))>c.$$
By the discussions above we may and will assume
$\theta(j)\in\mathcal{A}'(X_0),j\in\mathbb{N}$. Apply Lemma
\ref{porto} for some $\tau$ such that $\chi<\tau < \alpha_-$
to get
$$
\sup_{j,t} E\vert\theta_t(j)-\phi_t\vert^{\tau}<\infty.
$$
It follows that the
sequence of $T(d+N)+1$-dimensional
random variables
\[
\tilde{Y}_j:=(\varepsilon',\theta_1(j)-\phi_1,\ldots,\theta_T(j)-\phi_T,Z_1,\ldots,Z_T)
\]
are bounded in $L_{\tau}$ (recall Remark \ref{lublin}) and hence
\[
P(\vert \tilde{Y_j}\vert> N)\leq \frac{E\vert \tilde{Y}_j\vert^{\tau}}{N^{\tau}}\leq
\frac{C}{N^{\tau}},
\]
for some fixed $C>0$. So for any $\eta>0$, $P(|\tilde{Y_j}| \in \mathbb{R} \setminus
[-\left({2C}/{\eta}\right)^{1/\tau},\left({2C}/{\eta}\right)^{1/\tau} ])< \eta$ for all $j$
hence  the sequence of the laws of $\tilde{Y}_j$ is tight. Then, by Lemma \ref{ratatouille},
the sequence of laws of
\[
Y_j:=(\varepsilon',\theta_1(j),\ldots,\theta_T(j),Z_1,\ldots,Z_T),
\]
is also tight and hence admits
a subsequence (which we continue to denote by $j$) weakly convergent to some probability law $\mu$
on $\mathcal{B}(\mathbb{R}^{T(d+N)+1})$.

We will construct, inductively, $\theta^*_t$, $t=1,\ldots,T$ such that
$(\varepsilon',\theta^*_1,\ldots,\theta^*_T,$ $Z_1,\ldots,Z_T)$ has law $\mu$ and $\theta^*$
is $\mathcal{F}$-predictable.
Let $M$ be a $T(d+N)+1$-dimensional random variable with law $\mu$. \\
First note that
$(M^{1+ Td+1}, \ldots,M^{1+Td+N})$ has the same law as $Z_1$,\\
$\ldots,$
$(M^{1+ Td+(T-1)N+1}, \ldots,M^{1+Td+TN})$ has the same law as $Z_T$.

Now let $\mu_k$ be the law of $(M^1,\ldots,M^{1+kd},M^{1+dT+1},\ldots,M^{1+dT+NT})$ on $\mathbb{R}^{kd+NT+1}$
(which represents the marginal of $\mu$ with respect to its
first $1+kd$ and last $NT$ coordinates), $k\geq 0$.

As a first step, we
apply Lemma \ref{ode} with $E:=\tilde{\varepsilon}$, $l:=T$
to get $\sigma(\tilde{\varepsilon})$-measurable random variables
$\varepsilon_1,\ldots,\varepsilon_T$ that are independent with uniform law
on $[0,1]$.


Applying Lemma \ref{preserve} with the choice
$N_1=d$, $N_2=1$, $Y=\varepsilon'$ and $E=\varepsilon_1$ we get a function $G$
such that $(\varepsilon',G(\varepsilon',\varepsilon_1))$ has the same law as the
marginal of $\mu_1$ with respect to its first $1+d$ coordinates.

We recall the following simple fact.
Let $Q$, $Q'$, $U$, $U'$ random variables such that $Q$ and $Q'$ have same law and
$U$ and $U'$ have same law. If $Q$ is independent of $U$ and $Q'$ is independent of $U'$, then
$(Q,U)$ and $(Q',U')$ have same law.

Let $Q=(M^1,\ldots,M^{d+1})$, $Q'=(\varepsilon', G(\varepsilon',\varepsilon_1))$,
$U=(M^{1+dT+1},\ldots,M^{1+dT+dN})$ and $U'=(Z_1,\ldots,Z_T)$.
As $(\varepsilon_1,\varepsilon',Z_1,\ldots,Z_T)$ are independent, we get that
$Q'$ is independent of $U'$.
Now remark that weak convergence preserves independence: since
$(\varepsilon',\theta_1(j))$ and $U'$ are independent for all $j$,
we get that $Q$ is independent of $U$. So we conclude that
$(\varepsilon', G(\varepsilon',\varepsilon_1),Z_1,\ldots,Z_T)$ has
law $\mu_1$. Define
$\theta_1^*:=G(\varepsilon',\varepsilon_1)$, this is clearly $\mathcal{F}_0$-measurable.

Carrying on, let us assume that we have found $\theta_j^*$, $j=1,\ldots,k$ such that
$(\varepsilon',\theta_1^*,\ldots,\theta_k^*,Z_1,\ldots Z_T)$ has law $\mu_k$ and
$\theta_j^*$ is a function of $\varepsilon',Z_1,\ldots,Z_{j-1},$ $\varepsilon_1,
\ldots,\varepsilon_j$ only (and is thus $\mathcal{F}_{j-1}$-measurable). We apply Lemma
\ref{preserve} with $N_1=d$, $N_2=kd+kN+1$,
$E=\varepsilon_{k+1}$ and
\[
Y=(\varepsilon',\theta_1^*,\ldots,\theta_k^*,Z_1,\ldots,Z_k)
\]
to get $G$ such that $ (Y,G(Y,\varepsilon_{k+1})) $ has the same law
as $(M^1,\ldots,M^{1+kd},M^{1+Td+1},$
$\ldots,M^{1+Td+kN},$ $M^{1+kd+1},\ldots,M^{1+(k+1)d})$. Thus
\[
Q'=(\varepsilon',\theta_1^*,\ldots,\theta_k^*,G(Y,\varepsilon_{k+1}),Z_1,\ldots,Z_k)
\]
has the same law as $Q=(M^1,\ldots,M^{1+(k+1)d},M^{1+Td+1},\ldots,M^{1+Td+kN})$,
the marginal of $\mu_{k+1}$ with respect to its
first $1+Td+kN$ coordinates.
Now choose
$U=(M^{1+dT+kN+1},$
$\ldots,M^{1+dT+dN})$ (the marginal of $\mu_{k+1}$ with respect to its
 $(T-k)N$ last remaining   coordinates) and  $U'=(Z_{k+1},\ldots,Z_T)$.
As $Q'$ depends only on
$(\varepsilon_1,\ldots,\varepsilon_{k+1},\varepsilon',$ $Z_1,\ldots,Z_{k})$, which is independent
from
$(Z_{k+1},\ldots,Z_T)$, $Q'$
is independent of $U'$. Moreover,
$(\varepsilon',\theta_1(j),\ldots,\theta_{k+1}(j),$ $Z_{1},\ldots,Z_{k})$ and $(Z_{k+1},\ldots,Z_T)$
are independent for all $j$ and weak convergence preserves independence,
so $Q$ is independent of
$U$.
This entails that
\[
(\varepsilon',\theta_1^*,\ldots,\theta_k^*,G(Y,\varepsilon_{k+1}),Z_1,\ldots,Z_{T})
\]
has law $\mu_{k+1}$ and setting $\theta_{k+1}^*:=G(Y,\varepsilon_{k+1})$
we make sure that $\theta_{k+1}^*$ is a function of $\varepsilon',Z_1,\ldots,Z_{k},$ $\varepsilon_1,
\ldots,\varepsilon_{k+1}$ only, a fortiori, it is $\mathcal{F}_k$-measurable.
We finally get all the $\theta_j^*$, $j=1,\ldots,T$ such that the law of
\[
(\varepsilon',\theta_1^*,\ldots,\theta_T^*,Z_1,\ldots,Z_T)
\]
equals $\mu=\mu_T$. We will now show that
\begin{equation}\label{fattou}
V(X_0;\theta^*_1,\ldots,\theta^*_T)\geq {\limsup}_{j\to\infty} V(X_0;\theta_1(j),\ldots,\theta_T(j)),
\end{equation}
which will conclude the proof.

Indeed, $H_j:=X_0+\sum_{t=1}^T \theta_t(j)\Delta S_t-B$ clearly converges in law
to $H:=X_0+\sum_{t=1}^T \theta^*_t\Delta S_t-B$, $j\to\infty$ (note that $\Delta S_t$
and $B$ are continuous functions of the $Z_t$ and $X_0$ is deterministic). By continuity of $u_+,u_-$ also
$u_{\pm}([H_j]_{\pm})$ tends to $u_{\pm}([H]_{\pm})$ in law which entails
that $P(u_{\pm}([H_j]_{\pm})\geq y)\to P(u_{\pm}([H]_{\pm})\geq y)$ for all $y$
outside a countable set (the points of discontinuities of the
cumulative distribution functions of $u_{\pm}([H]_{\pm})$).

It suffices thus to find a measurable function $h(y)$ with $w_+(P(u_{+}[H_j]_+\geq y))\leq h(y),
j\geq 1$ and
$\int_{0}^{\infty} h(y)dy<\infty$ and then {(sup)} Fatou's lemma will imply
\eqref{fattou}. We get, just like in Lemma \ref{eel}, using Chebishev's inequality, \eqref{tenet} and \eqref{fuga},
for $y\geq 1$:
\begin{eqnarray*}
w_+(P(u_{+}[H_j]_+\geq y)) & \leq &
C \frac{1+\vert X_0 \vert^{\lambda \alpha_+ }+\sum_{t=1}^T
E\left(\vert \theta_t(j)-\phi_t \vert^{\lambda \alpha_+ }\vert\Delta S_t\vert^{\lambda \alpha_+ }\right)}{y^{\lambda \gamma_+}}\\
& \leq & \frac{C}{y^{\lambda\gamma_+}}\left(1+|X_0|^{\lambda \alpha_+ }+
\sum_{t=1}^T E^{1/p}|\theta_t(j)-\phi_t|^{\tau}E^{1/q}W_t^q\right),
\end{eqnarray*}
for some constant $C>0$ and $W_t\in\mathcal{W}^+$, $t=1,\ldots,T$,
using H\"older's inequality with $p:=\tau/(\lambda \alpha_+ )$ and
its conjugate $q$ (recall that $\Delta S_t \in\mathcal{W}_t$). We know
from the construction that $\sup_{j,t}
E\vert\theta_t(j)-\phi_t\vert^{\tau}<\infty$.
Thus we can find some constant $C'>0$ such that
$w_+(P(u_{+}[H_j]_+\geq y)) \leq C'/y^{\lambda\gamma_+}$, for all
$j$. Now trivially $w_+(P(u_{+}[H_j]_+\geq y))\leq w_+(1)=1$ for
$0\leq y\leq 1$. Setting $h(y):=1$ for $0\leq y\leq 1$ and
$h(y):=C'/y^{\lambda\gamma_+}$ for $y>1$, we conclude since
$\lambda\gamma_+>1$ and thus $1/y^{\lambda\gamma_+}$ is integrable
on $[1,\infty)$.
\end{proof}



\section{Existence without using relaxed strategies}\label{secondrevision}

In the previous section, a class of ``relaxed'' strategies was considered in the
sense that the investor was allowed to make use of an external random source (i.e.
a random number generated by a computer), see Assumptions \ref{as3} and \ref{masikk}
above. One may wonder whether it is possible to prove the existence of an optimal
strategy in the class of non-relaxed strategies. We will see that this is
possible under suitable hypotheses.

Assumption \ref{AAA} below states that the filtration is
generated by the independent random shocks $Z_t$, $t\geq 1$ which move the prices
at $t$ and the filtration is rich enough in information in the sense that there are risks arising at time $t$ in the market
(represented by $U_t$) that are not hedgeable by the traded
financial instruments. In other words, there is enough ``noise'' in the market
(which is the case in most real markets).

Assumption \ref{AAA} is satisfied
in a broad class of processes that are
natural discretizations of continuous-time diffusion models for
asset prices. It holds, roughly
speaking, when the market is ``incomplete'', see the examples of section \ref{exaoa} below for more details.

\begin{assumption}\label{AAA}
Let $\mathcal{F}_0=\{\emptyset,\Omega\}$, $\mathcal{F}_t=\sigma(Z_1,\ldots,Z_t)$ for $t=1,
\ldots,T$,
where the $Z_i$, $i=1,\ldots, T$ are $\mathbb{R}^N$-valued
independent random variables. $S_0$ is constant
and $S_1=f_1(Z_1)$, $S_t=f_t(S_1,\ldots,S_{t-1},Z_t)$, $t=2,\ldots,T$ for some continuous functions $f_t$ (hence $S_t$ is adapted).

Furthermore, for $t=1,\ldots,T$ there exists an $\mathcal{F}_t$-measurable
uniformly distributed random variable $U_t$ which is independent of $\mathcal{F}_{t-1}\vee \sigma (S_t)$.
\end{assumption}

Assumption \ref{as4} needs to be replaced by
\begin{assumption}\label{as44}
The random variable $B$ is a continuous function of $(S_1,\ldots,S_{T})$ and
$\mathcal{A}(X_0)$ is not empty. $u_{\pm},w_{\pm}$ are continuous functions.
\end{assumption}

\begin{remark} {\rm Note that under Assumption \ref{AAA}, the initial
capital $X_0$ is necessarily
deterministic.}
\end{remark}

The main result of the present section is the following.

\begin{theorem}\label{maine-new} Let Assumptions \ref{marche}, \ref{as1}, \ref{as2}, \ref{AAA} and \ref{as44}
hold. Then there is $\theta^*\in\mathcal{A}(X_0)$ such that
\[
V(X_0;\theta^*_1,\ldots,\theta^*_T)=\sup_{\theta\in\mathcal{A}(X_0)}V(X_0;\theta_1,\ldots,\theta_T)<\infty.
\]
\end{theorem}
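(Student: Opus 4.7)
The plan is to follow the architecture of the proof of Theorem~\ref{maine}, replacing the external randomization $\varepsilon',\varepsilon_1,\ldots,\varepsilon_T$ by the intrinsic sources $U_1,\ldots,U_T$ provided by Assumption~\ref{AAA}. The decisive preliminary observation is that $(U_1,\ldots,U_T)$ is a family of mutually independent uniform random variables, jointly independent of $(S_1,\ldots,S_T)$. Indeed $U_t$ is, by Assumption~\ref{AAA}, independent of $\mathcal{F}_{t-1}\vee\sigma(S_t)$, which contains $S_1,\ldots,S_t$ and $U_1,\ldots,U_{t-1}$; combining this with the independence of $Z_{t+1},\ldots,Z_T$ from $\mathcal{F}_t$ and the continuity of the $f_s$ also gives independence of $U_t$ from $S_{t+1},\ldots,S_T$, and iterating in $t$ yields the claim. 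As in Remark~\ref{lublin}, applying $\arctan$ componentwise to each $Z_t$ does not change $\mathcal{F}_t$ nor Assumption~\ref{AAA}, so we may assume that the $Z_t$, and hence the $S_t$, are bounded. Finally, using Lemma~\ref{preserve} inductively with $E=U_{t-1}$ and $Y=(\tilde\theta_1,\ldots,\tilde\theta_{t-1},S_1,\ldots,S_{t-1})$, any $\theta\in\mathcal{A}(X_0)$ can be replaced by $\tilde\theta\in\mathcal{A}(X_0)$ with $\tilde\theta_t$ a measurable function of $(S_1,\ldots,S_{t-1},U_1,\ldots,U_{t-1})$ such that $(\tilde\theta,S)$ has the same law as $(\theta,S)$; since $B$ is a continuous function of $(S_1,\ldots,S_T)$ by Assumption~\ref{as44}, this replacement preserves $V$, and so the maximizing sequence $\theta(j)$ may be assumed to have this structure.

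Fix $\tau\in(\lambda\alpha_+,\alpha_-)$. Lemma~\ref{porto} gives $\sup_{j,t}E\vert\theta_t(j)-\phi_t\vert^{\tau}<\infty$, so together with the boundedness of the $S_t$, Chebyshev's inequality delivers tightness of the laws of $Y_j:=(\theta_1(j),\ldots,\theta_T(j),S_1,\ldots,S_T)$. Extract a weakly convergent subsequence with limit $\mu$. We construct $\theta^*_t$ inductively so that $\theta^*_t$ is a measurable function of $(S_1,\ldots,S_{t-1},U_1,\ldots,U_{t-1})$ (a fortiori $\mathcal{F}_{t-1}$-measurable) and $(\theta^*_1,\ldots,\theta^*_T,S_1,\ldots,S_T)$ has law $\mu$. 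For $t=1$, all $\theta_1(j)$ are deterministic, so we take $\theta^*_1$ to be the atom of the first-coordinate marginal of $\mu$. For the step $k\to k+1$, apply Lemma~\ref{preserve} with $Y=(\theta^*_1,\ldots,\theta^*_k,S_1,\ldots,S_k)$ and $E=U_k$: every coordinate of $Y$ is $\mathcal{F}_{k-1}\vee\sigma(S_k)$-measurable, so $U_k$ is independent of $Y$, while the law of $Y$ matches the required marginal of $\mu$ by the induction hypothesis. This produces a measurable $G$ with $(Y,G(Y,U_k))$ distributed as the marginal of $\mu$ on its first $(2k+1)d$ coordinates; set $\theta^*_{k+1}:=G(Y,U_k)$. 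Extension to the full joint law incorporating $S_{k+1},\ldots,S_T$ is automatic: $\theta^*_{k+1}$ is $\mathcal{F}_k$-measurable and $(Z_{k+1},\ldots,Z_T)$ is independent of $\mathcal{F}_k$, while in $\mu$ the analogous independence passes to the limit since weak convergence preserves independence, and the continuous $f_s$ then transfer the match.

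Finally, exactly as at the end of the proof of Theorem~\ref{maine}, Fatou's lemma gives $E\vert\theta^*_t-\phi_t\vert^{\tau}\leq\liminf_j E\vert\theta_t(j)-\phi_t\vert^{\tau}<\infty$, and the same H\"older-type estimate as there furnishes an integrable majorant for $y\mapsto w_+(P(u_+([H_j]_+)\geq y))$ uniformly in $j$. Applying reverse Fatou on $V^+$ and ordinary Fatou on $V^-$ then yields
$V(X_0;\theta^*_1,\ldots,\theta^*_T)\geq\limsup_j V(X_0;\theta_1(j),\ldots,\theta_T(j))=\sup_{\theta\in\mathcal{A}(X_0)}V(X_0;\theta_1,\ldots,\theta_T)<\infty,$
which together with admissibility of $\theta^*$ (granted by Fatou applied to $V^-$ and the uniform boundedness of $V^-(X_0;\theta(j))$) concludes the proof. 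The main technical obstacle, and the place where Assumption~\ref{AAA} is essential rather than cosmetic, is verifying at each inductive step that $U_k$ is genuinely independent of the previously built $Y$; this is exactly what the clause ``$U_t$ independent of $\mathcal{F}_{t-1}\vee\sigma(S_t)$'' is designed to deliver, allowing the $U_t$'s to take over the role played by the external random source in Theorem~\ref{maine}.
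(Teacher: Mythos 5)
Your overall strategy is the same as the paper's: extract a maximizing sequence, use Lemma~\ref{porto} to get tightness of the laws of $Y_j=(\theta_1(j),\ldots,\theta_T(j),S_1,\ldots,S_T)$, pass to a weak limit $\mu$, and rebuild $\theta^*$ coordinate by coordinate via Lemma~\ref{preserve}, exploiting that $U_k$ is independent of $\mathcal{F}_{k-1}\vee\sigma(S_k)$ so it can play the role of the extra noise while $(S_1,\ldots,S_k,\theta^*_1,\ldots,\theta^*_k)$ is exactly $\mathcal{F}_{k-1}\vee\sigma(S_k)$-measurable. Your preliminary ``purification'' of $\theta(j)$ into a function of $(S_1,\ldots,S_{t-1},U_1,\ldots,U_{t-1})$ is unnecessary (the paper takes an arbitrary maximizing sequence, since tightness and the realisation only need the laws), and the claim of joint independence of $(U_1,\ldots,U_T)$ from $(S_1,\ldots,S_T)$ is likewise not used; the construction only ever invokes independence of $U_k$ from $\mathcal{F}_{k-1}\vee\sigma(S_k)$.

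The genuine gap is in the sentence ``Extension to the full joint law incorporating $S_{k+1},\ldots,S_T$ is automatic'' and the appeal to ``weak convergence preserves independence.'' This is precisely where the paper does its hard work, and it is not a preservation-of-independence statement: $S_{k+1}=f_{k+1}(S_1,\ldots,S_k,Z_{k+1})$ is \emph{not} independent of $(S_1,\ldots,S_k,\theta_1,\ldots,\theta_{k+1})$. What one actually needs is that, on both the approximating and the constructed side, the law of the extended vector factors as $\bar{\mu}_k(\cdot)\otimes\rho$ for the \emph{same} kernel $\rho(\cdot\,|\,\sigma_1,\ldots,\sigma_k)=\mathrm{Law}\bigl(f_{k+1}(\sigma_1,\ldots,\sigma_k,Z_{k+1})\bigr)$ (crucially independent of the $\tau$-variables because $\theta_{k+1}$ is $\mathcal{F}_k$-measurable and $Z_{k+1}\perp\mathcal{F}_k$), and then that $\bar{\mu}_k(j)\otimes\rho\Rightarrow\bar{\mu}_k\otimes\rho$. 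The last step requires showing that $z\mapsto\rho(\cdot|z)$ is weakly continuous (via continuity of $f_{k+1}$ and Lebesgue's theorem) and then that $\bar{g}(z)=\int g(z,\sigma)\rho(d\sigma|z)$ is continuous for uniformly continuous bounded $g$, so that testing against such $g$ passes the convergence through the kernel. None of this follows from ``weak convergence preserves independence,'' and the phrase ``the continuous $f_s$ then transfer the match'' is not a substitute for that argument. Without this kernel-continuity step you have not shown that $(S_1,\ldots,S_{k+1},\theta^*_1,\ldots,\theta^*_{k+1})$ actually has law $\mu_{k+1}$, so the induction, and hence the proof, is incomplete at exactly the point where Assumption~\ref{AAA} pays its way.
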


The proof is similar to that of Theorem \ref{maine}.
\begin{proof}[Proof of Theorem \ref{maine-new}]
As in the proof of Theorem \ref{maine},
take $\theta(j)\in\mathcal{A}(X_0),j\in\mathbb{N}$ such that
\[
V(X_0;\theta_1(j),\ldots,\theta_T(j))\to \sup_{\theta\in\mathcal{A}(X_0)}V(X_0;\theta_1,\ldots,\theta_T),
\quad j\to\infty.
\]
Using {Theorem \ref{ajtothm}, Lemmata \ref{eel},} \ref{porto} and \ref{ratatouille} just
like in the proof of Theorem \ref{maine}
we find that a subsequence {in $\mathcal{A}(X_0)$} (still denoted by $j$) of the $2Td$-dimensional
random variables
\[
Y_j:=(S_1,\ldots,S_T,\theta_1(j),\ldots,\theta_T(j)),
\]
converges weakly to some probability law $\mu$
on $\mathcal{B}(\mathbb{R}^{2Td})$.
We will also use the notation $Y^{(k)}_j:=(S_1,\ldots,S_k,\theta_1(j),\ldots,\theta_k(j))$
and denote its law on $\mathcal{B}(\mathbb{R}^{2kd})$ by ${\mu}_k(j)$.
Let $M$ be a $2Td$-dimensional random variable with law $\mu$. Let $\mu_k$ be the law of
$(M^1,\ldots,M^{kd},M^{Td+1},\ldots,M^{Td+kd})$ on $\mathcal{B}(\mathbb{R}^{2kd})$.
Note that the law of $Y^{(k)}_j$, ${\mu}_k(j)$, weakly converges to $\mu_k$.

We shall construct, inductively, $\theta^*_i$, $i=1,\ldots,T$ such that
$F_k:=(S_1,\ldots,S_{k},\theta^*_1,\ldots,\theta^*_k)$ has law $\mu_k$ for
all $k=1,\ldots,T$, and $\theta^*=(\theta^*_1,\ldots,\theta^*_T)$
is $\mathcal{F}$-predictable.

As $\theta_1(j)$ are deterministic numbers, weak convergence implies that
they converge to some (deterministic) $\theta^*_1$ which is then $\mathcal{F}_0$-measurable.
Clearly, $(S_1,\theta_1^*)$ has law $\mu_1$.

Carrying on, let us assume that we have found $\theta_i^*$, $i=1,\ldots,k$ such that
$F_k$
has law $\mu_k$ and
$\theta_j^*$ is $\mathcal{F}_{j-1}$-measurable for $j=1,\ldots,k$.

We now apply Lemma
\ref{preserve} with $N_1=d$, $N_2=2kd$,
$E=U_{k}$ and $Y=F_k$
to get $G$ such that $(F_k,G(F_k,U_{k})) $ has the same law
as $(M^1,\ldots,M^{kd},M^{Td+1},M^{Td+(k+1)d})$, we denote this law by $\bar{\mu}_k$
henceforth (note that, by Assumption
\ref{AAA}, $U_{k}$ is independent of {$F_k$}).
Define $\theta^*_{k+1}:=G(F_k,U_{k})$, this is clearly $\mathcal{F}_k$-measurable.
It remains to show that $F_{k+1}$ has law $\mu_{k+1}$. As
$\mu_{k+1}$ is the weak limit of $Y^{(k+1)}_j$, it is enough to
 prove that  the weak limit of $Y^{(k+1)}_j$ is $F_{k+1}$. We first express
the laws of $Y^{(k+1)}_j$ and $F_{k+1}$ by mean of conditioning.

By Assumption \ref{AAA} one can write
$S_{k+1}=f_{k+1}(S_1,\ldots,S_k,Z_{k+1})$ with some continuous
function $f_{k+1}$. Notice that the law of the
$2(k+1)d$-dimensional random variable $F_{k+1}$ is $$
\mu_{k+1}(dx)=
\bar{\mu}_k(d\sigma_1,\ldots,d\sigma_{k},d\tau_{1},\ldots,
d\tau_{k+1})
\rho(d\sigma_{k+1}|\sigma_1,\ldots,\sigma_{k},\tau_{1},\ldots,
\tau_{k+1})
$$
where we write $dx=(dx_1,\ldots,dx_{2(k+1)d})=(d\sigma_1,\ldots,d\sigma_{k+1},d\tau_{1},\ldots,
d\tau_{k+1})$,
$d\sigma_{j}=(dx_{(j-1)d+1},\ldots,dx_{jd})$ for $j=1,\ldots,k+1$ and $d\tau_{i}=(dx_{(k+i)d+1},\ldots,dx_{(k+i+1)d})$
for $i=1,\ldots,k+1$. The probabilistic kernel $\rho$ is defined by
\begin{eqnarray*}
\rho(A|\sigma_1,\ldots,\sigma_{k},\tau_{1},\ldots, \tau_{k+1})&:= &
P(S_{k+1} \in A |S_1=\sigma_1,\ldots,S_k=\sigma_{k},\theta^*_1=\tau_{1},\ldots,\theta^*_{k+1}= \tau_{k+1})\\
& = & P(f_{k+1}(\sigma_1,\ldots,\sigma_{k},Z_{k+1})\in A|S_1=\sigma_1,\ldots,S_k=\sigma_{k})\\
& = & P(f_{k+1}(\sigma_1,\ldots,\sigma_{k},Z_{k+1})\in A),
\end{eqnarray*}
for $A\in\mathcal{B}(\mathbb{R}^d)$, $(\sigma_1,\ldots,\sigma_{k}) \in \mathbb{R}^{kd}$  and
$(\tau_{1},\ldots, \tau_{k+1}) \in \mathbb{R}^{(k+1)d}$.
The crucial observation here is that $\rho$ does not depend on $(\tau_{1},\ldots, \tau_{k+1})$.

It follows in the same way that, for all $j$,
the law of $Y^{(k+1)}_j$
is $$\mu_{k+1}(j)(dx)=
\bar{\mu}_{k}(j)(d\sigma_1,\ldots,d\sigma_{k},d\tau_{1},\ldots, d\tau_{k+1})
\rho(d\sigma_{k+1}|\sigma_1,\ldots,\sigma_{k},\tau_{1},\ldots, \tau_{k+1})
,$$
where $\bar{\mu}_{k}(j)$ is the law of $(S_1,\ldots,S_k,\theta_1(j),\ldots,\theta_{k+1}(j))$.

Clearly, the weak convergence of the $\mathrm{Law}(Y_j)$ to $\mu$ implies that
their marginals $\bar{\mu}_k(j)$ converge weakly to $\bar{\mu}_k$, for each $k$.
To conclude the proof, we have to show that this implies
also
\begin{eqnarray}
\nonumber
\mu_{k+1}(j)(dx)=
\bar{\mu}_{k}(j)(d\sigma_1,\ldots,d\sigma_{k},d\tau_{1},\ldots, d\tau_{k+1})
\rho(d\sigma_{k+1}|\sigma_1,\ldots,\sigma_{k},\tau_{1},\ldots, \tau_{k+1})  \to
\\
\label{convcond}
\mu_{k+1}(dx)=
\bar{\mu}_k(d\sigma_1,\ldots,d\sigma_{k},d\tau_{1},\ldots,
d\tau_{k+1})
\rho(d\sigma_{k+1}|\sigma_1,\ldots,\sigma_{k},\tau_{1},\ldots,
\tau_{k+1})
\end{eqnarray}
weakly as $j\to\infty$.

First notice that, for any sequence $z^n\to z$ in $\mathbb{R}^{(2k+1)d}$,
$\rho(\cdot|z^n)$ tends to $\rho(\cdot |z)$ weakly. Indeed, taking any
continuous and bounded $h$ on $\mathbb{R}^d$, we have
\begin{eqnarray*}
\int_{\mathbb{R}^d}h(\sigma)\rho(d\sigma|z^n) &=& Eh(f_{k+1}(z^n_1,\ldots,z^n_{kd},Z_{k+1})) \to
\\
Eh(f_{k+1}(z_1,\ldots,z_{kd},Z_{k+1})) &=& \int_{\mathbb{R}^d}h(\sigma)\rho(d\sigma|z)
\end{eqnarray*}
by continuity of $h,f_{k+1}$, boundedness of $h$ and Lebesgue's theorem.

Now take any uniformly continuous and bounded $g:\mathbb{R}^{2(k+1)d}\to\mathbb{R}$.
Define
$$
\bar{g}(z):=\int_{\mathbb{R}^d}g(z,\sigma)\rho(d\sigma|z),\quad z\in\mathbb{R}^{(2k+1)d}.
$$
We claim that $\bar{g}$ is continuous. Indeed, let $z^n\to z$. Then
\begin{eqnarray*}
|\bar{g}(z^n)-\bar{g}(z)|\leq  |\int_{\mathbb{R}^d}g(z^n,\sigma)\rho(d\sigma|z^n)-
\int_{\mathbb{R}^d}g(z,\sigma)\rho(d\sigma|z^n)|+|\int_{\mathbb{R}^d}g(z,\sigma)\rho(d\sigma|z^n)-
\int_{\mathbb{R}^d}g(z,\sigma)\rho(d\sigma|z)|.
\end{eqnarray*}
Here the first term tends to zero by uniform continuity, the second term tends to
zero by the weak convergence of $\rho(\cdot|z^n)$ to $\rho(\cdot|z)$. This shows
the continuity of $\bar{g}$.

As $\bar{\mu}_k(j)$ converge weakly to $\bar{\mu}_k$, it follows that
\begin{eqnarray*}
\int_{\mathbb{R}^{(2k+1)d}}\bar{g}(\sigma_1,\ldots,\sigma_{k},\tau_{1},\ldots,
\tau_{k+1})\bar{\mu}_k(j)(d\sigma_1,\ldots,d\sigma_{k},d\tau_{1},\ldots,
d\tau_{k+1}) \to \\
\int_{\mathbb{R}^{(2k+1)d}}\bar{g}(\sigma_1,\ldots,\sigma_{k},\tau_{1},\ldots,
\tau_{k+1})\bar{\mu}_k(d\sigma_1,\ldots,d\sigma_{k},d\tau_{1},\ldots,
d\tau_{k+1}) \end{eqnarray*}
This implies that
\begin{eqnarray}
\nonumber
\int_{\mathbb{R}^{(2k+2)d}}g(\sigma_1,\ldots,\sigma_{k},\tau_{1},\ldots,
\tau_{k+1},\sigma_{k+1})\rho(d\sigma_{k+1}|\sigma_1,\ldots,\sigma_{k},\tau_{1},\ldots, \tau_{k+1})
\bar\mu_k(j)(d\sigma_1,\ldots,d\sigma_{k},d\tau_{1},\ldots,
d\tau_{k+1})
\to \\
\label{lippa}
\int_{\mathbb{R}^{(2k+2)d}}g(\sigma_1,\ldots,\sigma_{k},\tau_{1},\ldots,
\tau_{k+1},\sigma_{k+1})
\rho(d\sigma_{k+1}|\sigma_1,\ldots,\sigma_{k},\tau_{1},\ldots, \tau_{k+1})
\bar\mu_k(d\sigma_1,\ldots,d\sigma_{k},d\tau_{1},\ldots,
d\tau_{k+1}),
\end{eqnarray}
showing that \eqref{convcond} holds
(recall that, in order to check weak convergence, it is enough
to verify \eqref{lippa} for uniformly continuous bounded functions, see Theorem 1.1.1 of
\cite{stroock-varadhan}) and the induction step is completed.


We finally arrive at $(S_1,\ldots,S_T,\theta^*_1,\ldots,\theta^*_T)$
with law $\mu_T=\mu$. We can show  verbatim as in the proof of Theorem
\ref{maine} that
\begin{equation}
V(X_0;\theta^*_1,\ldots,\theta^*_T)\geq\limsup_{j\to\infty} V(X_0;\theta_1(j),\ldots,\theta_T(j)),
\end{equation}
using the properties of weak convergence and that $B$ is a continuous function of the
$S_t$, $t=1,\ldots,T$. This concludes the proof.
\end{proof}

\section{Examples}
\label{exaoa}
In this section, we first present some classical market models where Assumptions \ref{marche} and \ref{as3} hold true and hence Theorem \ref{maine} applies.
\begin{example}
{\rm Let $S_0$ be constant and $\Delta S_t\in\mathcal{W}$ independent $t=1,\ldots,T$.
Take $Z_i:=\Delta S_i$, define $\mathcal{G}_0:=\{\emptyset,\Omega\}$
and $\mathcal{G}_t:=\sigma(Z_1,\ldots,Z_t)$, $T\geq 1$.
Assume that $S_t$ satisfies (NA) + (R) w.r.t. $\mathcal{G}_t$. Then this continues to hold
for the enlargement $\mathcal{F}_t$ defined in Assumption \ref{as3}. So
Assumptions \ref{marche} and \ref{as3} hold with $\kappa_t,\pi_t$ almost surely
constants since the conditional law of $\Delta S_t$ w.r.t.
$\mathcal{F}_{t-1}$ is a.s. equal to its actual law.}
\end{example}

\begin{example}\label{bhj}
{\rm Fix $d\leq L\leq N$. Take $Y_0\in\mathbb{R}^L$ constant and
define $Y_t$ by the difference equation
\[
Y_{t+1}-Y_t=\mu(Y_t)+\nu(Y_t)Z_{t+1},
\]
where $\mu:\mathbb{R}^L\to\mathbb{R}^L$ and
$\nu:\mathbb{R}^L\to\mathbb{R}^{L\times N}$ are bounded and
continuous. We assume that there is $h>0$ such that
\begin{equation}\label{revizor}
v^T\nu(x)\nu^T(x)v\geq hv^Tv,\quad v\in\mathbb{R}^L,
\end{equation}
for all $x\in\mathbb{R}^L$; $Z_t\in\mathcal{W}$, $t=1,\ldots,T$ are
independent with $\mathrm{supp}\,\mathrm{Law}\,Z_t=\mathbb{R}^N$.}

{\rm Thus $Y_t$ follows a discretized dynamics of a non-degenerate
diffusion process. We may think that $Y_t$ represent the evolution
of $L$ economic factors or, more specifically, of some assets. Take
$\mathcal{G}_0$ trivial and $\mathcal{G}_t:=\sigma(Z_j,j\leq t)$,
$t\geq 1$.}

{\rm We claim that $Y_t$ satisfies Assumption \ref{marche} with respect to
$\mathcal{G}_t$. Indeed, $Y_t\in\mathcal{W}$ is trivial and we will show that \eqref{rrr} holds
with $\kappa_t,\pi_t$ constants.}

{\rm Take $v\in\mathbb{R}^L$. Obviously,
$$
P(v(Y_{t+1}-Y_t)\leq -\vert v\vert | \mathcal{G}_t)=P(v(Y_{t+1}-Y_t)\leq -\vert v\vert |  Y_t).
$$
It is thus enough to show for each $t=1,\ldots,T$ that there is
$c>0$ such that for each unit vector $v$ and each $x\in\mathbb{R}^L$
$$
P(v(\mu(x)+
\nu(x)Z_t)\leq -1)\geq c.
$$
Denoting by $m$ an upper bound for $\vert\mu(x)\vert$,
$x\in\mathbb{R}^L$, we may write
$$
P(v(\mu(x)+
\nu(x)Z_t)\leq -1)\geq
P(v(\nu(x)Z_t)\leq -(m+1)).
$$
Here $y=v^T\nu(x)$ is a vector of length at least $\sqrt{h}$, hence
the absolute value of one of its components is at least
$\sqrt{h/N}$.
Thus we have
\begin{eqnarray*}
P(v^T\nu(x)Z_t\leq -(m+1)) & \geq &
\min \left( \min_{i,k_i} P(\sqrt{h/N} Z_t^i \leq -(m+1),
k_i(j) Z_t^j\leq 0,\ j\neq i); \right. \\
 & &  \left. \min_{i,k_i} P(\sqrt{h/N}Z_t^i \geq (m+1),
k_i(j) Z_t^j\leq 0,\ j\neq i) \right)
\end{eqnarray*}
where $i$ ranges over $1,\ldots,N$ and $k_i$ ranges over the
(finite) set of all functions from $\{1,2,\ldots,i-1,i+1,\ldots,N\}$
to $\{1,-1\}$ (representing all the possible configurations for the
signs of $y^j$, $j\neq i$). This minimum is positive by our
assumption on the support of $Z_t$.}

{\rm Now we can take $S_t^i:=Y_t^i$, $i=1,\ldots,d$ for some $d\leq
L$. When $L>d$, we may think that the $Y_j$, $d<j\leq L$ are not
prices of some traded assets but other relevant economic variables
that influence the market. It is easy to check that Assumption
\ref{marche} holds for $S_t$, too, with respect to $\mathcal{G}_t$.}

{\rm Enlarging each $\mathcal{G}_t$ by ${\varepsilon}$,
independent of $Z_1,\ldots,Z_T$, we get $\mathcal{F}_t$
as in Assumption \ref{as3}. Clearly, Assumption \ref{marche} continues to hold for $S_t$
with respect to $\mathcal{F}_t$ and Assumption \ref{as3} is then also true as $S_t$ is
a continuous function of $Z_1,\ldots,Z_t$.}
\end{example}

\begin{example}\label{bhjj} {\rm Take $Y_t$ as in the above example. For simplicity, we assume
$d=L=N=1$ and $\nu(x)>0$ for all $x$. Furthermore, let $Z_t$,
$t=1,\ldots,T$ be such that for all $\zeta>0$,
\[
Ee^{\zeta\vert Z_t\vert}<\infty.
\]
Set $S_t:=\exp(Y_t)$ this time. We claim
that Assumption \ref{marche} holds true for $S_t$ with respect to the filtration
$\mathcal{G}_t$. Obviously, $\Delta S_t\in\mathcal{W}$, $t\geq 1$.}

{\rm We choose $\kappa_t:=S_t/2$. Clearly,
$1/\kappa_t\in\mathcal{W}$. It suffices to prove that $
1/P(S_{t+1}-S_t\leq -S_t/2 | \mathcal{G}_t)$ and
$1/P(S_{t+1}-S_t\geq S_t/2 | \mathcal{G}_t)$ belong to
$\mathcal{W}$. We will show only the second containment, the first
one being similar. This amounts to checking
\begin{equation*}\label{wiwi}
1/P(\exp\{Y_{t+1}-Y_t\}\geq 3/2 |  Y_t)\in\mathcal{W}.
\end{equation*}
Let us notice that
\begin{eqnarray*}
P(\exp\{Y_{t+1}-Y_t\}\geq 3/2 |  Y_t) & = & P(\mu(Y_t)+\nu(Y_t)Z_{t+1}\geq \ln (3/2) |  Y_t)\\
 & = & P(Z_{t+1}\geq \frac{\ln(3/2)-\mu(Y_t)}{\nu(Y_t)} |  Y_t)\\
 & \geq &
P(Z_{t+1}\geq \frac{\ln(3/2)+m}{\sqrt{h}}),
\end{eqnarray*}
which is a deterministic positive constant, by the assumption on the support of $Z_{t+1}$. Defining the
enlarged $\mathcal{F}_t$, Assumptions \ref{marche} and \ref{as3} hold for $S_t$.
Examples \ref{bhj} and \ref{bhjj} are pertinent, in particular, when the $Z_t$
are Gaussian.
}
\end{example}

We now show an example where Assumption \ref{AAA} holds and hence Theorem
\ref{maine-new} applies.

\begin{example}\label{conclusive}
{\rm Let us consider the same setting as in Example \ref{bhj} with
$d=L< N$. This corresponds to the case when an incomplete diffusion
market model has been discretized (the number of driving processes,
$N$, exceeds the number of assets, $d$). Let us furthermore assume
that for all $t$, the law of $Z_t$ has a density w.r.t. the
$N$-dimensional Lebesgue measure (when we say ``density'' from now
on we will always mean density w.r.t. a Lebesgue measure of
appropriate dimension). Recall that
$\mathcal{F}_0=\{\emptyset,\Omega\}$ and
$\mathcal{F}_t=\sigma(Z_1,\ldots,Z_t)$ for $t=1, \ldots,T$,}

{\rm It is clear that $S_{t+1}=f_{t+1}(S_1,\ldots,S_t,Z_{t+1})$ for
some continuous function $f_{t+1}$. It remains to construct
$U_{t+1}$ as required in Assumption \ref{AAA}.}

{\rm We will denote by $\nu_i(x)$ the $i$th row of $\nu(x)$,
$i=1,\ldots,d$. First let us notice that \eqref{revizor} implies
that $\nu(x)$ has full rank for all $x$ and hence the $\nu_i(x)$,
$i=1,\ldots,d$ are linearly independent for all $x$.}

{\rm It follows that the set $\{(\omega,w)\in\Omega\times
\mathbb{R}^N: \nu_i(Y_t)w=0,\ i=1,\ldots,d,\ |w|=1\}$ has full
projection on $\Omega$ and it is easily seen to be in
$\mathcal{F}_t\otimes \mathcal{B}(\mathbb{R}^N)$. It follows by
measurable selection (see e.g. Proposition III.44 of \cite{dm}) that
there is a $\mathcal{F}_t$-measurable $N$-dimensional random
variable $\xi_{d+1}$ such that $\xi_{d+1}$ has unit length and it is
a.s. orthogonal to $\nu_i(Y_t)$, $i=1,\ldots,d$. Continuing in a
similar way we get $\xi_{d+1},\ldots,\xi_N$ such that they have unit
length, they are a.s. orthogonal to each other as well as to the
$\nu_i(Y_t)$. Let $\Sigma$ denote the $\mathbb{R}^{N\times
N}$-valued $\mathcal{F}_t$-measurable random variable whose rows are
$\nu_1,\ldots,\nu_d,\xi_{d+1}, \ldots,\xi_N$. Note that $\Sigma$ is
a.s. nonsingular (by construction and by \eqref{revizor}).}

{\rm As $Z_{t+1}$ is independent of $\mathcal{F}_t$ and $\Sigma$ is
$\mathcal{F}_t$-measurable, for any $(z_1,\ldots,z_t) \in
\mathbb{R}^{tN}$, the conditional law of $\Sigma Z_{t+1}$ knowing
$\{Z_1=z_1,\ldots,Z_t=z_t\}$  equals the law of the random variable
$\Sigma(z_1,\ldots,z_t) Z_{t+1}$. Recall that $Z_{t+1}$ has a density w.r.t. the $N$-dimensional
Lebesgue measure and that $Z \to \Sigma(z_1,\ldots,z_t) Z$ is a
continuously differentiable diffeomorphism since
$\Sigma(z_1,\ldots,z_t) $ is nonsingular. So we can use the change
of variable theorem and we deduce that $\Sigma(z_1,\ldots,z_t) Z_{t+1}$, and thus a.s. the conditional
law of  $\Sigma
Z_{t+1}$ knowing $\mathcal{F}_t$, has a  density.

As $(\nu
(Y_t)Z_{t+1}, \xi_{d+1} Z_{t+1})$ is the first $d+1$
coordinates of $\Sigma Z_{t+1}$, using Fubini theorem, the conditional law of
$(\nu
(Y_t)Z_{t+1}, \xi_{d+1} Z_{t+1})$  knowing $\mathcal{F}_t$ also has a  density. Using again the change
of variable theorem, it follows that the random variable
$(Y_{t+1},\xi_{d+1} Z_{t+1})$ has a $\mathcal{F}_t$-conditional density.
This implies that $\xi_{d+1} Z_{t+1}$ has a
$\mathcal{F}_t\vee\sigma(Y_{t+1})$-conditional  density and, a
fortiori, its conditional law is atomless.}

{\rm Lemma \ref{kella} with the choice $X:=\xi_{d+1} Z_{t+1}$ and
$W:=(Z_1,\ldots,Z_t,Y_{t+1})$ provides a uniform $U_{t+1}=G(\xi_{d+1} Z_{t+1},
Z_1,\ldots,Z_t,Y_{t+1})$ independent of $\sigma(Z_1,\ldots,Z_t,Y_{t+1})=\mathcal{F}_t\vee\sigma(Y_t)$
but $\mathcal{F}_{t+1}$-measurable (since $\xi_{d+1}Z_{t+1}$ is
$\mathcal{F}_{t+1}$-measurable and $G$ is measurable from Lemma \ref{kella}).
It follows that this example satisfies Assumption \ref{AAA} and hence Theorem \ref{maine-new} applies to it.}
\end{example}

\begin{remark}
{\rm Clearly, Assumption \ref{AAA} permits a non-Markovian price process
$S_{t}$ as well (i.e. $S_t$ may well depend on its whole past $S_{t-1},\ldots,S_1$).
Also, $S_{t}$ may be a
non-linear function of $S_1,\ldots,S_{t-1},Z_{t}$ in a more complex way
than in Example \ref{conclusive}. It is, however, outside the scope of
the present paper to go into more details here.}
\end{remark}

\section{Appendix}
\label{appendix}
\subsection{Proofs of Lemmas \ref{eel}, \ref{crux}, \ref{oslo}, \ref{porto} and of Proposition \ref{kh}}
\subsubsection{Proof of Lemma \ref{eel}}
\label{appendix eel}
We get, using \eqref{fuga} and Chebishev's inequality:
\begin{eqnarray}
V^+(X_{0};\theta_{1},\ldots,\theta_T)\leq
1+g_+ \int_1^{\infty}  \frac{
E^{\gamma_+}\left(u_+^{\lambda}([X_0+\sum_{n=1}^T \theta_n\Delta S_n-B]_+)\right)}{y^{\lambda\gamma_+}}dy
\end{eqnarray}
Evaluating the integral and using \eqref{tenet} we continue the estimation
as
\begin{eqnarray*}
V^+(X_{0};\theta_{1},\ldots,\theta_T)  & \leq &
1+\frac{g_+ }{\lambda\gamma_+-1}E^{\gamma_+}\left(
2^{\lambda-1}k_+^{\lambda} [X_0+\sum_{n=1}^T \theta_n\Delta S_n-B]_+^{\lambda \alpha_+ }
+2^{\lambda-1}{k}_+^{\lambda}\right) \\
 & \leq & 1+\frac{g_+ }{\lambda\gamma_+-1}
\left[
2^{\lambda-1} k_+^{\lambda} \left(E( \vert X_0+\sum_{n=1}^T (\theta_n-\phi_n)\Delta S_n\vert^{\lambda \alpha_+ })
+|b|^{\lambda \alpha_+ }\right) \right. \\
& & \left. + 2^{\lambda-1}{k}_+^{\lambda} +1 \right],
\end{eqnarray*}
using the rough estimate $x^{\gamma_+}\leq x+1$, $x\geq 0$,
Assumption \ref{as2} and the fact that $C_1\geq C_2$ implies that
$(Y-C_1)_+\leq (Y-C_2)_+$. This gives the first statement. For the
second inequality note that, by {\eqref{tenet-}}, \eqref{b} and Assumption \ref{as2},
\begin{eqnarray*}
V^-(X_0;\theta_{1},\ldots,\theta_{T}) & \geq &
g_-\Int_0^{\infty} P\left(u_-
([X_T^{X_0,\theta}-B]_-)\geq y\right)dy
 =  g_-Eu_-[X_T^{X_0,\theta}-B]_-\\
& \geq & g_- k_-\left(E[X_T^{X_0,\theta}-B]_-^{\alpha_-} -1\right)\\
& \geq & g_- k_- \left(E[X_0+\sum_{n=1}^{T} (\theta_n-\phi_n)\Delta S_n-b]_-^{\alpha_-}
-1\right).
\end{eqnarray*}

\subsubsection{Proof of Lemma \ref{crux}}
\label{appendix crux}
Notice that for $t=T$ the statement of Lemma \ref{crux} is trivial as there
are no strategies involved. Let us assume that Lemma \ref{crux} is true for
$t+1$, we will deduce that it holds true for $t$, too. Let $X_t\in \Xi_t^1$
and $(\theta_{t+1},\ldots,\theta_T)\in\tilde{\mathcal{A}}_t(X_t)$. Let
$X_{t+1}:=X_t+(\theta_{t+1}-\phi_{t+1})\Delta S_{t+1}$, then $X_{t+1}\in \Xi_{t+1}^1$
and by Lemma \ref{tavasz}, $(\theta_{t+2},\ldots,\theta_T)\in\tilde{\mathcal{A}}_{t+1}(X_{t+1})$.
By
induction hypothesis, there exists $C_n^{t+1}$, $n=t+1,\ldots,T$ and
$(\hat{\theta}_{t+2},\ldots,\hat{\theta}_T)\in\tilde{\mathcal{A}}_{t+1}(X_{t+1})$ satisfying
\begin{equation}\label{robinhood}
\vert\hat{\theta}_n-\phi_n\vert\leq C^{t+1}_{n-1}[\vert X_t+
(\theta_{t+1}-\phi_{t+1})\Delta S_{t+1}\vert+1],
\end{equation}
and
$
\tilde{V}_{t+1}(X_{t+1};\theta_{t+2},\ldots,\theta_T)\leq
\tilde{V}_{t+1}(X_{t+1};\hat{\theta}_{t+2},\ldots,\hat{\theta}_T).$
It is clear from \eqref{robinhood} that
\[
\left\vert \sum_{n=t+2}^T (\hat{\theta}_n-\phi_n)\Delta S_n\right\vert\leq H\left(\vert X_t+(\theta_{t+1}-
\phi_{t+1})\Delta S_{t+1}\vert+1\right)
\]
for $H=\sum_{n=t+2}^TC_{n-1}^{t+1}\vert \Delta S_n \vert\in \mathcal{W}^+$. We have
\begin{eqnarray}
\nonumber
\tilde{V}_t(X_t;\theta_{t+1},\ldots,\theta_T) & = & E(\tilde{V}_{t+1}(X_{t}+(\theta_{t+1}-
\phi_{t+1})\Delta S_{t+1};\theta_{t+2},\ldots,\theta_T) | \mathcal{F}_t)\\
\nonumber
 & \leq &  E(\tilde{V}_{t+1}(X_t+(\theta_{t+1}-
\phi_{t+1})\Delta S_{t+1};\hat{\theta}_{t+2},\ldots,\hat{\theta}_T)| \mathcal{F}_t)\\
\label{ineqvtilde}
 & = & \tilde{V}_{t}(X_t;{\theta}_{t+1},\hat{\theta}_{t+2},\ldots,\hat{\theta}_T).
\end{eqnarray}
Fix some $\lambda \alpha_+ <\chi<\alpha_-$, we continue the estimation
of
$\tilde{V}^+_{t}:=\tilde{V}^+_{t}(X_t;{\theta}_{t+1},\hat{\theta}_{t+2},\ldots,\hat{\theta}_T)$
using the (conditional) H\"older inequality for
$q=\chi/(\lambda \alpha_+ )$ and $1/p+1/q=1$.
\begin{eqnarray*}
\tilde{V}^+_{t}
& \leq &  \tilde{k}_+\left[1+E(\vert X_t+(\theta_{t+1}-
\phi_{t+1})\Delta S_{t+1}\vert^{\lambda \alpha_+ }| \mathcal{F}_t) + \right. \\
 & & \left. E( H^{\lambda \alpha_+ } \vert X_t+(\theta_{t+1}-
\phi_{t+1})\Delta S_{t+1}\vert^{\lambda \alpha_+ }+{H}^{\lambda \alpha_+ }| \mathcal{F}_t)\right]\\
& \leq &
\tilde{k}_+ \left[1+ \vert X_t\vert^{\lambda \alpha_+ } + \vert \theta_{t+1}-\phi_{t+1}
\vert^{\lambda \alpha_+ } E(|\Delta S_{t+1}|^{\lambda \alpha_+ }| \mathcal{F}_t) +
E^{1/p}(H^{\lambda \alpha_+  p}| \mathcal{F}_t)\left( \right.\right.\\
& & \left.\left.
E^{1/q}(\vert X_t\vert^{\chi}| \mathcal{F}_t )+
E^{1/q}(\vert\theta_{t+1}-\phi_{t+1}\vert^{\chi}|\Delta S_{t+1}|^{\chi}| \mathcal{F}_t )\right) +
E({H}^{\lambda \alpha_+ }| \mathcal{F}_t ) \right]\\
& \leq &
\tilde{k}_+ \left[1+ \vert X_t\vert^{\lambda \alpha_+ } + \vert \theta_{t+1}-\phi_{t+1}
\vert^{\lambda \alpha_+ } E(|\Delta S_{t+1}|^{\lambda \alpha_+ }| \mathcal{F}_t) +
E^{1/p}(H^{\lambda \alpha_+  p}| \mathcal{F}_t)\left(
\vert X_t\vert^{\lambda \alpha_+ } \right.\right.\\
& & \left.\left.
+\vert\theta_{t+1}-\phi_{t+1}\vert^{\lambda \alpha_+ }E^{1/q}(|\Delta S_{t+1}|^{\chi}| \mathcal{F}_t )\right)
+
E({H}^{\lambda \alpha_+ }| \mathcal{F}_t ) \right].
\end{eqnarray*}
It follows that, for an appropriate $H_t$ in $\mathcal{W}^+_{t}$,
\begin{eqnarray}
\nonumber \tilde{V}_t(X_t;\theta_{t+1},\ldots,\theta_T)
& \leq & \tilde{k}_-+
H_t\left(1+\vert X_t\vert^{\lambda \alpha_+ }+\vert \theta_{t+1}-\phi_{t+1}\vert^{\lambda \alpha_+ }\right)-\\
\label{figula} & & \tilde{k}_- E\left([ X_t+(\theta_{t+1}-\phi_{t+1})\Delta S_{t+1}+
\sum_{n=t+2}^T (\hat{\theta}_{n}-\phi_n)\Delta S_n-b ]^{\alpha_-}_-| \mathcal{F}_t\right).
\end{eqnarray}

\noindent By Lemma \ref{majdan} below, the event
\[
A:=\{ (\hat{\theta}_n-\phi_n)\Delta S_n\leq 0,\, n\geq t+2;\ (\theta_{t+1}-
\phi_{t+1})\Delta S_{t+1}\leq -\kappa_t \vert \theta_{t+1}-\phi_{t+1}\vert \}
\]
satisfies $P(A\vert\mathcal{F}_t)\geq \tilde{\pi}_t$ with $1/\tilde{\pi}_t\in \mathcal{W}^+_t$, hence considering
\begin{eqnarray}
\label{defF}
F:=\left\{\frac{\vert\theta_{t+1}-\phi_{t+1}\vert\kappa_t}{2}\geq \vert X_t\vert+\vert b\vert \right\}
\end{eqnarray}
we have (recall that $X_{t+1}=X_t+(\theta_{t+1}-\phi_{t+1})\Delta S_{t+1}$),
\begin{eqnarray}\nonumber
1_F E\left([ X_{t+1}+ \sum_{n=t+2}^T (\hat{\theta}_{n}-\phi_n)\Delta S_n-b ]^{\alpha_-}_-| \mathcal{F}_t\right) & \geq &
1_F E\left( 1_A \left( \frac{|\theta_{t+1}-\phi_{t+1}|\kappa_t}{2}\right)^{\alpha_-}| \mathcal{F}_t\right) \\
& \geq &
\left(\frac{|\theta_{t+1}-\phi_{t+1}|\kappa_t}{2}\right)^{\alpha_-}\tilde{\pi}_t 1_F.\label{babar}
\end{eqnarray}
\noindent As a little digression we estimate
\begin{eqnarray}\nonumber
\tilde{V}_t(X_t;\phi_{t+1},\ldots,\phi_T) & = & E\left(\tilde{k}_+
(1+|X_t|^{\lambda \alpha_+ })-\tilde{k}_-[X_t-b]^{\alpha_-}_- | \mathcal{F}_t\right) +\tilde{k}_-\\
\label{barba}
& \geq &
-\tilde{k}_-\vert X_t\vert ^{\alpha_-}-\tilde{k}_-\vert b\vert^{\alpha_-}.
\end{eqnarray}
So on $F$, by \eqref{figula}, \eqref{babar} and \eqref{barba}, using $\vert
X_t\vert^{\lambda \alpha_+ }\leq \vert X_t\vert^{\alpha_-}+1$, we obtain that
\begin{eqnarray*}
\tilde{V}_t(X_t;\theta_{t+1},\ldots,\theta_T) -
\tilde{V}_t(X_t;\phi_{t+1},\ldots,\phi_T)
& \leq & \tilde{k}_-+
H_t\left(2+\vert X_t\vert^{\alpha_-}+\vert \theta_{t+1}-
\phi_{t+1}\vert^{\lambda \alpha_+ }\right)- \\
 & &- \tilde{k}_- \tilde{\pi}_t\left(\frac{|\theta_{t+1}-\phi_{t+1}|\kappa_t}{2}\right)^{\alpha_-}
+\tilde{k}_-\vert X_t\vert ^{\alpha_-}+\tilde{k}_-\vert b\vert^{\alpha_-}\\
 &= & (\tilde{k}_- + H_t)\vert X_t\vert^{\alpha_-} - \frac{\tilde{\pi}_t\tilde{k}_-}{3}\left(\frac{\vert \theta_{t+1}-\phi_{t+1}\vert\kappa_t}{2}\right)^{\alpha_-}
 \\
 & & + 2H_t+\tilde{k}_-\vert b\vert^{ \alpha_-}+ \tilde{k}_- -\frac{\tilde{\pi}_t\tilde{k}_-}{3}\left(\frac{\vert \theta_{t+1}-\phi_{t+1}\vert\kappa_t}{2}\right)^{\alpha_-}
 \\
 & & + H_t\vert \theta_{t+1}-\phi_{t+1}\vert^{\lambda \alpha_+ }-\frac{\tilde{\pi}_t\tilde{k}_-}{3}\left(\frac{\vert \theta_{t+1}-\phi_{t+1}\vert\kappa_t}{2}\right)^{\alpha_-}.
\end{eqnarray*}
\noindent Let us now choose the $\mathcal{F}_t$-measurable random variable $C_t^t$ so large that on the event
\[
\tilde{F}:=\{\vert\theta_{t+1}-\phi_{t+1}\vert > C_t^t[\vert X_t\vert+1] \}
\]
we have
\begin{eqnarray*}
\frac{\vert \theta_{t+1}-\phi_{t+1}\vert\kappa_t}{2} & \geq & \vert X_t\vert+\vert b\vert\quad
\mbox{ (that is,}\ \tilde{F}\subset F\mbox{ holds)}\\
\frac{\tilde{\pi}_t\tilde{k}_-}{3}\left(\frac{\vert \theta_{t+1}-\phi_{t+1}\vert\kappa_t}{2}\right)^{\alpha_-}
 & \geq &
(\tilde{k}_- + H_t)\vert X_t\vert^{\alpha_-},\\
\frac{\tilde{\pi}_t\tilde{k}_-}{3}\left(\frac{\vert \theta_{t+1}-\phi_{t+1}\vert\kappa_t}{2}\right)^{\alpha_-}
& \geq & 2H_t+\tilde{k}_-\vert b\vert^{ \alpha_-}+ \tilde{k}_-\\
\frac{\tilde{\pi}_t\tilde{k}_-}{3}\left(\frac{\vert \theta_{t+1}-\phi_{t+1}\vert\kappa_t}{2}\right)^{\alpha_-}
 & \geq &
H_t\vert \theta_{t+1}-\phi_{t+1}\vert^{\lambda \alpha_+ }.
\end{eqnarray*}
One can easily check that such a $C_t^t$ exists because in order to have the
four preceding inequalities satisfied,
it is sufficient that:
\begin{eqnarray*}
\vert \theta_{t+1}-\phi_{t+1} \vert & \geq & \frac{2}{\kappa_t} \left(
 (\vert X_t\vert+\vert b\vert) +
 \vert X_t \vert \left(\frac{3}{\tilde{\pi}_t\tilde{k}_-}(\tilde{k}_- + H_t)\right)^{1/\alpha_-} +
\left(\frac{3}{\tilde{\pi}_t\tilde{k}_-}
(2H_t+\tilde{k}_-\vert b\vert^{ \alpha_-}+ \tilde{k}_-\right)^{1/\alpha_-}\right)\\
& &
+ \left(\frac{3 \times 2^{\alpha_-}H_t}{\tilde{\pi}_t\tilde{k}_-\kappa_t^{\alpha_-}}\right)^{\frac1{\alpha_--\lambda \alpha_+ }},
\end{eqnarray*}
hence one can clearly find $C_t^t\in\mathcal{W}_t^+$ such that
$C_t^t [\vert X_t\vert+1]$ is greater than the right-hand side of
the above inequality.
So on $\tilde{F}$ we have,
\begin{eqnarray}
\label{ineqvtildethetaphi}
\tilde{V}_t(X_t;\theta_{t+1},\ldots,\theta_T) -
\tilde{V}_t(X_t;\phi_{t+1},\ldots,\phi_T) \leq 0
\end{eqnarray}
Consequently, defining
\begin{eqnarray*}
\tilde{\theta}_{t+1} & := & \phi_{t+1}1_{\tilde{F}}+\theta_{t+1}1_{\tilde{F}^c},\\
\tilde{\theta}_{n} & :=& \phi_{n}1_{\tilde{F}}+\hat{\theta}_{n}1_{\tilde{F}^c},\quad n=t+2, \ldots,  T,
\end{eqnarray*}
we have, using \eqref{ineqvtilde} and \eqref{ineqvtildethetaphi},
\[
\tilde{V}_t(X_t;\theta_{t+1},\ldots,\theta_T)\leq \tilde{V}_t(X_t;\tilde{\theta}_{t+1},
\ldots,\tilde{\theta}_T)\mbox{ a.s.}.
\]
By construction,
\[
\vert\tilde{\theta}_{t+1}-\phi_{t+1}\vert\leq C_t^t[\vert X_t\vert+1],
\]
and, for $n\geq t+2$,
\begin{eqnarray*}
\vert\tilde{\theta}_{n}-\phi_n\vert & = &
1_{\tilde{F}^c} \vert\hat{\theta}_{n}-\phi_n\vert
\leq {1_{\tilde{F}^c}  C^{t+1}_{n-1} [\vert X_t+({\theta}_{t+1}-\phi_{t+1})
\Delta S_{t+1}\vert+1]}\\
& \leq & {1_{\tilde{F}^c}  C^{t+1}_{n-1} [\vert X_t+(\tilde{\theta}_{t+1}-\phi_{t+1})
\Delta S_{t+1}\vert+1]} \\
& \leq &
C^{t+1}_{n-1} [\vert X_t\vert + C_t^t(\vert X_t\vert+1)\vert \Delta S_{t+1}\vert +1]=C^t_{n-1}(\vert X_t\vert+1),
\end{eqnarray*}
where $
C^t_{n-1}:=
C^{t+1}_{n-1} (C_t^t\vert \Delta S_{t+1}\vert +1)
$
for $n\geq t+2$. Clearly, $C_{n-1}^t\in\mathcal{W}^+_{n-1}$. To
conclude the proof it remains to check that
$(\tilde{\theta}_{t+1},\ldots,
\tilde{\theta}_{T})\in\tilde{\mathcal{A}}_t(X_t)$. As by
hypothesis
$({\theta}_{t+1},\ldots,{\theta}_T)\in\tilde{\mathcal{A}}_{t}(X_t)$, we get
from \eqref{ineqvtilde} that $\tilde{V}^-_{t}(X_{t};\theta_{t+1},
\hat{\theta}_{t+2},\ldots,\hat{\theta}_T)< \infty$.
Finally,
$$
\tilde{V}_t^-(X_t;\tilde{\theta}_{t+1},\ldots,\tilde{\theta}_T) =
1_{\tilde{F}} \tilde{k}_- \left((X_t-b)^{\alpha_-}_- -\tilde{k}_-\right) +
1_{\tilde{F}^c} \tilde{V}^-_{t}(X_{t};\theta_{t+1},
\hat{\theta}_{t+2},\ldots,\hat{\theta}_T)
<\infty\quad\mbox{a.s.}
$$

In the course of this proof we relied on Lemma \ref{majdan} below.

\begin{lemma}\label{majdan} Assume that Assumption \ref{marche}  holds true.
Then there exists $\tilde{\pi}_t>0$ with $1/\tilde{\pi}_t\in\mathcal{W}^+_t$
such that
\[
P(
(\theta_{t+1}-
\phi_{t+1})\Delta S_{t+1}\leq -\kappa_t \vert \theta_{t+1}-\phi_{t+1}\vert,\
(\hat{\theta}_n-\phi_n)\Delta S_n\leq 0,\, n=t+2, \ldots, T| \mathcal{F}_t)\geq\tilde{\pi}_t.
\]
\end{lemma}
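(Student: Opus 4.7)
The plan is to bound the joint conditional probability of the events $A_{t+1}:=\{(\theta_{t+1}-\phi_{t+1})\Delta S_{t+1}\leq -\kappa_t|\theta_{t+1}-\phi_{t+1}|\}$ and $A_n:=\{(\hat\theta_n-\phi_n)\Delta S_n\leq 0\}$ for $n=t+2,\ldots,T$ by combining Assumption \ref{marche} (which handles each $A_n$ one level at a time) with a backwards induction over $n$. Observe first that $\theta_{t+1}-\phi_{t+1}\in\Xi_t^d$ and $\hat\theta_n-\phi_n\in\Xi_{n-1}^d$, so \eqref{rrr} yields $P(A_{t+1}|\mathcal{F}_t)\geq \pi_t$ and, since $\{\xi\Delta S_n\leq -\kappa_{n-1}|\xi|\}\subset\{\xi\Delta S_n\leq 0\}$, also $P(A_n|\mathcal{F}_{n-1})\geq \pi_{n-1}$ for $n\geq t+2$.

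The induction statement is: for each $m=T-1,T-2,\ldots,t$ there exists an $\mathcal{F}_m$-measurable $\tilde\pi_m>0$ with $1/\tilde\pi_m\in\mathcal{W}^+_m$ such that
\[
P\bigl(\textstyle\bigcap_{n=m+1}^T A_n\,\big|\,\mathcal{F}_m\bigr)\geq \tilde\pi_m\quad\text{a.s.}
\]
The base case $m=T-1$ is just $\tilde\pi_{T-1}:=\pi_{T-1}$, which is admissible by Assumption \ref{marche}. For the inductive step, using the tower property,
\[
P\bigl(\textstyle\bigcap_{n=m}^T A_n\,\big|\,\mathcal{F}_{m-1}\bigr)=E\!\left[1_{A_m}\,P\bigl(\textstyle\bigcap_{n=m+1}^T A_n\,\big|\,\mathcal{F}_m\bigr)\big|\,\mathcal{F}_{m-1}\right]\geq E[1_{A_m}\tilde\pi_m\,|\,\mathcal{F}_{m-1}].
\]
I would then invoke conditional Cauchy--Schwarz with weight $\tilde\pi_m$:
\[
P(A_m|\mathcal{F}_{m-1})^2=E[1_{A_m}\tilde\pi_m^{1/2}\tilde\pi_m^{-1/2}\,|\,\mathcal{F}_{m-1}]^2\leq E[1_{A_m}\tilde\pi_m\,|\,\mathcal{F}_{m-1}]\,E[1/\tilde\pi_m\,|\,\mathcal{F}_{m-1}],
\]
so that setting
\[
\tilde\pi_{m-1}:=\frac{\pi_{m-1}^2}{E[1/\tilde\pi_m\,|\,\mathcal{F}_{m-1}]}
\]
delivers the required lower bound. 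Iterating this down to $m-1=t$ gives the desired $\tilde\pi_t$.

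The remaining task is to check that $1/\tilde\pi_{m-1}\in\mathcal{W}^+_{m-1}$. Since $1/\tilde\pi_{m-1}=E[1/\tilde\pi_m|\mathcal{F}_{m-1}]\cdot\pi_{m-1}^{-2}$, for any $p\geq 1$ a Cauchy--Schwarz splitting yields
\[
E\bigl(1/\tilde\pi_{m-1}\bigr)^p\leq E\bigl(E[1/\tilde\pi_m|\mathcal{F}_{m-1}]^{2p}\bigr)^{1/2}E\bigl(1/\pi_{m-1}^{4p}\bigr)^{1/2}\leq E\bigl(1/\tilde\pi_m\bigr)^{2p,\,1/2}\cdot E\bigl(1/\pi_{m-1}^{4p}\bigr)^{1/2},
\]
where conditional Jensen (applied to $x\mapsto x^{2p}$, legitimate as $2p\geq 1$) handles the first factor. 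Both quantities are finite by the induction hypothesis ($1/\tilde\pi_m\in\mathcal{W}^+_m$) and Assumption \ref{marche} ($1/\pi_{m-1}\in\mathcal{W}^+_{m-1}$). Thus $1/\tilde\pi_{m-1}\in\mathcal{W}^+_{m-1}$, closing the induction.

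The only mildly delicate step is the last one, the moment bookkeeping: one must transfer the $\mathcal{W}^+$-property through the conditional expectation $E[1/\tilde\pi_m|\mathcal{F}_{m-1}]$ and its product with $\pi_{m-1}^{-2}$ without losing integrability. Everything else is purely a matter of applying the tower law and the strong no-arbitrage bound \eqref{rrr} one period at a time, with conditional Cauchy--Schwarz doing the work of linking consecutive levels of the filtration.
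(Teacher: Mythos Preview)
Your proof is correct and follows essentially the same route as the paper: both arguments use the tower property together with the conditional Cauchy--Schwarz inequality to propagate the lower bound from \eqref{rrr} through the intersection of the events $A_n$, with the $\mathcal{W}^+$-property preserved because $\mathcal{W}$ is closed under products and conditional expectations. The only cosmetic difference is that the paper inducts forward in $m$ (always conditioning on $\mathcal{F}_t$, so each $\tilde{\pi}_t(m)$ is $\mathcal{F}_t$-measurable from the outset), whereas you induct backward with $\tilde{\pi}_m$ being $\mathcal{F}_m$-measurable and then carry out an extra Jensen step in the moment check; both work equally well.
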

\begin{proof}
Define the events
\begin{eqnarray*}
A_{t+1} & :=& \{ (\theta_{t+1}-
\phi_{t+1})\Delta S_{t+1}\leq -\kappa_t \vert \theta_{t+1}-\phi_{t+1}\vert\},\\
A_n& := & \{(\hat{\theta}_n-\phi_n)\Delta S_n\leq 0\},\quad t+2\leq n\leq T.
\end{eqnarray*}
We prove, by induction, that for $m\geq t+1$,
\begin{eqnarray}\label{hypo}
E(1_{A_{t+1}}\ldots 1_{A_m}| \mathcal{F}_t)\geq \tilde{\pi}_t(m)
\end{eqnarray}
for some $\tilde{\pi}_t(m)$ with $1/\tilde{\pi}_t(m)\in\mathcal{W}_t^+$.
For $m=t+1$ this is just \eqref{rrr}.
Let us assume that \eqref{hypo} has been shown for $m-1$, we will establish it for $m$.
\begin{eqnarray*}
E(1_{A_m}\ldots 1_{A_{t+1}}| \mathcal{F}_t) & = & E(E(1_{A_m}| \mathcal{F}_{m-1})
1_{A_{m-1}}\ldots 1_{A_{t+1}}| \mathcal{F}_t)\\
& \geq & E(\pi_{m-1} 1_{A_{m-1}}\ldots 1_{A_{t+1}}| \mathcal{F}_t) \\
& \geq &
\frac{E^2(1_{A_{m-1}}\ldots 1_{A_{t+1}}| \mathcal{F}_t)}{E(1/\pi_{m-1}| \mathcal{F}_t)}
\geq
\frac{\tilde{\pi}^2_t(m-1)}{E(1/\pi_{m-1}| \mathcal{F}_t)}:=\tilde{\pi}_t(m-1)
\end{eqnarray*}
by the (conditional) Cauchy inequality.
Here $1/\tilde{\pi}_t(m-1)\in\mathcal{W}^+_t$ by the induction hypothesis,
$E(1/\pi_{m-1}\vert \mathcal{F}_t)\in\mathcal{W}^+_t$ (since $1/\pi_{m-1}\in\mathcal{W}^+$)
and the statement follows.
\end{proof}

\subsubsection{Proof of Lemma \ref{oslo}}
\label{appendix oslo}
Fix $c\in\mathbb{R}$ and $\chi,\iota ,o$ satisfying $\lambda\alpha_+<\chi<\iota
<o<\alpha_-$. Let $X_t \in \Xi_t^1$ with $E\vert
X_t\vert^{o}<\infty$ and  $(\theta_{t+1}, \ldots,
\theta_{T})\in\tilde{\mathcal{A}}_t(X_t)$
such that
\[
E\tilde{V}_t(X_t;\theta_{t+1},\ldots,\theta_T)\geq c.
\]
Let $X_{t+1}:=X_t+(\theta_{t+1}-\phi_{t+1})\Delta S_{t+1}$. By Lemma \ref{crux}, there exists
$C_{n}^{t+1} \in {\cal W}_n^+$, $t+1 \leq n \leq T-1$, and
$(\hat{\theta}_{t+2},\ldots,\hat{\theta}_{T})
\in\tilde{\mathcal{A}}_{t+1}(X_{t+1})$ such that
$$
\vert\hat{\theta}_n-\phi_n\vert\leq C^{t+1}_{n-1} [\vert X_{t+1}\vert+1],
$$
for $n=t+2,\ldots, T$ and $$
\tilde{V}_{t+1}(X_{t+1};\theta_{t+2},\ldots,\theta_T)\leq \tilde{V}_{t+1}(X_{t+1};\hat{\theta}_{t+2},
\ldots,\hat{\theta}_T).
$$
We can obtain equations \eqref{ineqvtilde} and \eqref{figula} just like in the proof of Lemma  \ref{crux}.
Furthermore, using \eqref{babar}, we get (recall \eqref{defF} for the definition of $F$) :
\begin{eqnarray}
\nonumber
E\tilde{V}_t(X_t;\theta_{t+1},\ldots,\theta_T) & \leq & E(H_t(1+\vert X_t\vert^{\lambda \alpha_+ }+
\vert \theta_{t+1}-\phi_{t+1}\vert^{\lambda \alpha_+ }))\\ & &
\label{cars}
-\tilde{k}_-
E\left(1_F \left(\frac{\vert \theta_{t+1}-\phi_{t+1}\vert\kappa_t}{2}\right)^{\alpha_-}\tilde{\pi}_t\right)
+\tilde{k}_-.
\end{eqnarray}
We now push further estimations in this last equation.

We may estimate, using the H\"older inequality for $p=\alpha_-/o$
and its conjugate $q$,
\begin{eqnarray*}
E\left(1_F \left(\frac{\vert \theta_{t+1}-\phi_{t+1}\vert\kappa_t}{2}\right)^{\alpha_-}\tilde{\pi}_t\right)
\geq
\frac{E^{p}
\left(1_F \left(\frac{\vert \theta_{t+1}-\phi_{t+1}\vert\kappa_t}{2}\right)^{o}\tilde{\pi}^{1/p}_t
\frac1{\tilde{\pi}^{1/p}_t}\right)}
{E^{p/q}\left(\frac{1}{\tilde{\pi}^{q/p}_t}\right)}.
\end{eqnarray*}
The denominator here will be denoted $C$ in the sequel. By Lemma \ref{majdan},
$C<\infty$.

Now let us note the trivial fact that for random variables $X,Y\geq
0$ such that $EY^{o}\geq 2 EX^{o}$ one has $E[1_{\{Y\geq X\}}
Y^{o}]\geq \frac12 EY^{o}$.

It follows that if
\begin{equation}\label{feltetel1}
E\left(\frac{\vert \theta_{t+1}-\phi_{t+1}\vert\kappa_t}{2}\right)^{o}\geq
2 E(\vert X_t\vert+\vert b\vert)^{o}
\end{equation}
holds true then, applying the trivial $x\leq x^p+1,\, x\geq 0$,
\begin{eqnarray*}
\frac{E^{p}
\left(1_F \left(\frac{\vert \theta_{t+1}-\phi_{t+1}\vert\kappa_t}{2}\right)^{o}\right)}
{C} & \geq &
\frac{E^{p}
\left(\left(\frac{\vert \theta_{t+1}-\phi_{t+1}\vert\kappa_t}{2}\right)^{o}\right)}
{2^pC}\\
& \geq & \frac{E\left(\frac{\vert \theta_{t+1}-\phi_{t+1}\vert\kappa_t}{2}\right)^{o}-1}
{2^pC}=
c_1 E({\vert \theta_{t+1}-\phi_{t+1}\vert\kappa_t})^{o}-c_2
\end{eqnarray*}
with suitable $c_1,c_2>0$. Using again H\"older's inequality with
$p=o/\iota $ and its conjugate $q$,
\begin{eqnarray}
\label{togo}
E({\vert \theta_{t+1}-\phi_{t+1}\vert\kappa_t})^{o} & \geq &
\frac{E^{p}{\vert \theta_{t+1}-\phi_{t+1}\vert}^{\iota }}
{E^{p/q}\left(\frac{1}{\kappa_t^{\iota q }}\right)}\geq
\frac{E{\vert \theta_{t+1}-\phi_{t+1}\vert}^{\iota }-1}
{E^{p/q}\left(\frac{1}{\kappa_t^{\iota q}}\right)}.
\end{eqnarray}
With suitable $c_1',c_2'>0$, we get, whenever \eqref{feltetel1} holds, that
\begin{eqnarray}\label{togo2}
E\left(1_F \left(\frac{\vert \theta_{t+1}-\phi_{t+1}\vert\kappa_t}{2}\right)^{\alpha_-}\tilde{\pi}_t\right)
& \geq &
 c_1' E{\vert \theta_{t+1}-\phi_{t+1}\vert}^{\iota }-c_2'.
\end{eqnarray}

\noindent Estimate also, with $p:=\chi/(\lambda \alpha_+ )$,
\begin{eqnarray}\nonumber
E\left(H_t(1+\vert X_t\vert^{\lambda\alpha_+}+\vert\theta_{t+1}-\phi_{t+1}\vert^{\lambda\alpha_+})\right)
& \leq &
E^{1/q}[H_t^q] [1+E^{1/p}\vert X_t\vert^{\chi}+
E^{1/p}\vert\theta_{t+1}-\phi_{t+1}\vert^{\chi}]\\
\nonumber & \leq &  E^{1/q}[H_t^q] [3+E\vert X_t\vert^{\chi}+
E\vert\theta_{t+1}-\phi_{t+1}\vert^{\chi}]\\
\label{fontoslesz} & \leq & \tilde{c} [1+E\vert X_t\vert^{o}+
E\vert\theta_{t+1}-\phi_{t+1}\vert^{\chi}],
\end{eqnarray}
with some $\tilde{c}>0$, using that $x^{\chi}\leq x^{o}+1$,
$x^{1/p}\leq x+1$, for $x\geq 0$. Furthermore, H\"older's inequality
with $p=\iota /\chi$ gives
\begin{eqnarray*}
E\vert\theta_{t+1}-\phi_{t+1}\vert^{\chi}\leq
E^{\chi/\iota }\vert\theta_{t+1}-\phi_{t+1}\vert^{\iota }.
\end{eqnarray*}

\noindent It follows that whenever
\begin{equation}\label{qqq}
\left(E{\vert \theta_{t+1}-\phi_{t+1}\vert}^{\iota }\right)^{1-\chi/\iota }\geq
\frac{2\tilde{c}}{c_1' \tilde{k}_-},
\end{equation}
one also has
\begin{equation}\label{benin}
\tilde{c}E\vert\theta_{t+1}-\phi_{t+1}\vert^{\chi}\leq \frac{c_1'\tilde{k}_-}{2}E{\vert \theta_{t+1}-\phi_{t+1}\vert}^{\iota }.
\end{equation}

\noindent Finally consider the condition
\begin{equation}\label{ppp}\frac{c_1'\tilde{k}_-}{2}
E{\vert \theta_{t+1}-\phi_{t+1}\vert}^{\iota }\geq \tilde{c}[1+E\vert X_t\vert^{o}]
+(c_2'\tilde{k}_--c+1)+\tilde{k}_-.
\end{equation}
It is easy to see that we can find some $K_t$, large enough, such
that $E\vert\theta_{t+1}-\phi_{t+1}\vert^{\iota }\geq K_t[E\vert
X_t\vert^{o}+1]$ implies that \eqref{feltetel1} (recall
\eqref{togo}), \eqref{qqq}, \eqref{ppp} all hold true.
So in this case we have, from \eqref{cars}, \eqref{fontoslesz}, \eqref{benin}, \eqref{togo2} and
\eqref{ppp},
\begin{eqnarray*}
E\tilde{V}_t(X_t;\theta_{t+1},\ldots,\theta_T) & \leq & \tilde{c}[1+E\vert X_t\vert^{o}]
+\frac{c_1'\tilde{k}_-}{2}
E{\vert \theta_{t+1}-\phi_{t+1}\vert}^{\iota } \\
& & -c_1'\tilde{k}_-
E{\vert \theta_{t+1}-\phi_{t+1}\vert}^{\iota }+c_2'\tilde{k}_-+\tilde{k}_- \\
& \leq &
-(c_2'\tilde{k}_--c+1)
+ c_2'\tilde{k}_-
< c.
\end{eqnarray*}
From this the statement of Lemma \ref{oslo} follows.

\subsubsection{Proof of Proposition \ref{kh}}
\label{appendixkh} Assumptions \ref{marche}, \ref{as1}, \ref{as2}
are clearly met (with $\phi\equiv 0$). Theorem \ref{ajtothm} implies
that $M_n<\infty$ for all $n$.
{Let $c=V(0;0)$, from Lemma \ref{eel} we get that
$\tilde{V}(0;0)=E\tilde{V}_0(0;0) \geq c$. Now fix some
constant $\iota>0$ such that $\lambda \alpha_- < \iota < \alpha_+$.
Looking at the end of the proof of
Lemma \ref{oslo} and remarking that
$\tilde{\mathcal{A}}_0(0)=\mathcal{A}_n$, we get that
there exists constant
$K \geq 0$ such that if
$\theta\in\mathcal{A}_n$ with $E\vert\theta\vert^{\iota} >K$ then
$E\tilde{V}_0(0;\theta)<c$}. From Lemma \ref{eel} again,
$$V(0;\theta)\leq \tilde{V}(0;\theta)=E\tilde{V}_0(0;\theta)<c=V(0;0)$$
and hence $\theta$ is
suboptimal.

It follows from the above argument that the optimization can be constrained to the smaller
domains $D_n:=\{\theta\in\mathcal{A}_n: E\vert\theta\vert^{\iota}\leq K\}$
for each $n$. As the probability space is finite, the space
of $\mathcal{H}_n$-measurable random variables (equipped with the topology
of convergence in probability) can be identified with a finite-dimensional
Euclidean space where $D_n$ is a compact set. Since the objective function $V(0;\cdot)$ is easily
seen to be continuous the supremum $M_n$ is attained by some
strategy $\theta^*_n$, $n\geq 0$.

Let $\Lambda$ be the (finite) range of the random variable
$|\theta_n^*|$. By Lemma \ref{ret}, $\Lambda$ contains a nonzero
element. Let $a$ denote the smallest such element and $b$ the
largest one, we get that {either $\Lambda=\{0,a_0,\ldots,a_n\}$ or
$\Lambda=\{a_0,\ldots,a_n\}$}, with $a=a_0<a_1<\ldots<a_n=b$. Let us
introduce the notations $A_+:=\{\theta_n^*=a\}$,
$A_-:=\{\theta_n^*=-a\}$, $A:=A_+\cup A_-=\{|\theta_n^*|=a\}$.


For each $\delta\geq 0$ we will define a $\mathcal{H}_{n+1}$-measurable
strategy $\Theta_{n+1}(\delta)$ which has a strictly better performance
than $\theta_n^*$ for a suitable choice of $\delta$, $i.e.$
$M_n=V(0;\theta_n^*)<V(0;\Theta_{n+1}(\delta))\leq M_{n+1}$.

Let
$\Theta_{n+1}(\delta)=a+\delta$ on $A_+\cap \{\epsilon_{n+1}=1\}$,
$\Theta_{n+1}(\delta)=-a-\delta$ on $A_-\cap \{\epsilon_{n+1}=1\}$,
$\Theta_{n+1}(\delta)=
a-\delta$ on $A_+\cap \{\epsilon_{n+1}=-1\}$,
$\Theta_{n+1}(\delta)=
-a+\delta$ on $A_-\cap \{\epsilon_{n+1}=-1\}$, $\Theta_{n+1}(\delta)=\theta_n^*$
outside $A$. In particular, $\theta_n^*=\Theta_{n+1}(0)$. This definition implies that $|\Theta_{n+1}(\delta)|=a + \delta \epsilon_{n+1}$ on $A$ and $|\Theta_{n+1}(\delta)|=|\theta_n^*|$ outside $A$.
So from \eqref{saphir} and using independence of  $\epsilon_{n+1}$ and $\theta_n^*$, one gets
\begin{eqnarray}\nonumber
V^-(0;\Theta_{n+1}(\delta)) & = &  \frac12 \left(E(1_A(|\theta_n^*|+ \delta \epsilon_{n+1}))+E1_{A^c}|\theta_n^*| \right)\\
\nonumber & = & \frac12 \left(E(1_A|\theta_n^*|)+ \delta P(A)E\epsilon_{n+1} +E1_{A^c}|\theta_n^*| \right) = \frac12E |\theta_n^*|\\
\label{fd} & = & V^-(0;\theta_n^*).
\end{eqnarray}


Now we are looking at $V^+(0;\Theta_{n+1}(\delta))$.
First let us consider the case where $a=b$, then $A=\{|\theta_n^*|=a\}$ and $A^c=\{|\theta_n^*|=0\}$. Take $0\leq \delta<a$. Note that in this case $|\Theta_{n+1}(\delta)|$
may take only the values $0,a-\delta,a+\delta$. So it follows that from \eqref{rubis} and using independence of  $\epsilon_{n+1}$ and $\theta_n^*$,
\begin{eqnarray*}
V^+(0;\Theta_{n+1}(\delta)) & = &
\sqrt{\frac12}\int_0^{\infty}\sqrt{E1_A 1_{(a + \delta \epsilon_{n+1})^{1/4}\geq y}+E1_{A^c}1_{0 \geq y}}dy \\
& = &
\sqrt{\frac12}\int_0^{\infty}\sqrt{P(A) \left(\frac12 1_{(a + \delta)^{1/4}\geq y}+\frac12 1_{(a -\delta)^{1/4}\geq y}\right)}dy \\
& = &
\sqrt{\frac12}\left(\int_0^{(a -\delta)^{1/4}}\sqrt{P(A)} dy + \int_{(a -\delta)^{1/4}}^{(a +\delta)^{1/4}}\sqrt{\frac12P(A)}\right)\\
& = &
\sqrt{\frac12}\left((a -\delta)^{1/4}\sqrt{P(A)} + \left((a +\delta)^{1/4}-(a -\delta)^{1/4}\right)\sqrt{\frac12P(A)}\right).
\end{eqnarray*}
We have that $P(A)=P(|\theta_n^*|=a)>0$ by the choice of $a$ and
{ $P(A)$ does not depend on $\delta$}. So one can directly check that
$V^+(0;\Theta_{n+1}(\delta))$ is continuously differentiable in
$\delta$ (in a neighborhood of $0$) and
\[
\frac{\partial}{\partial \delta}V^+(0;\Theta_{n+1}(\delta))\vert_{\delta=0}=(\sqrt2 -1)\frac{\sqrt2}8a^{-3/4}\sqrt{P(A)}>0.
\]
Hence, for $\delta>0$ small enough,
\begin{equation}\label{rag}
V^+(0;\Theta_{n+1}(\delta))>V^+(0;\theta_n^*)=V^+(0;\Theta_{n+1}(0)).
\end{equation}

Now let us turn to the case where $a<b$. Then $A=\{|\theta_n^*|=a\}$
and $A^c=\{|\theta_n^*|\in\{0,a_1,\ldots,a_n\}\}$. We may write (for
$\delta$ small enough such that $a-\delta>0$ and $a+\delta<a_1$),
\begin{eqnarray*}
V^+(0;\Theta_{n+1}(\delta)) & = & \sqrt{\frac12}\int_0^{\infty}\sqrt{E1_A 1_{(a + \delta \epsilon_{n+1})^{1/4}\geq y}+E1_{A^c}1_{|\theta_n^*|^{1/4} \geq y}}dy \\
& = &
\sqrt{\frac12}\int_0^{\infty}\sqrt{P(A) \left(\frac12 1_{(a + \delta)^{1/4}\geq y}+\frac12 1_{(a -\delta)^{1/4}\geq y}\right)
+\sum_{i=1}^nE1_{|\theta_n^*|=a_i}1_{a_i^{1/4} \geq y}}dy \\
& = &
\sqrt{\frac12}\left(
\int_0^{(a -\delta)^{1/4}}\sqrt{P(A)+P(|\theta_n^*|\geq a_{1})} dy + \int_{(a -\delta)^{1/4}}^{(a +\delta)^{1/4}}\sqrt{\frac12P(A)+P(|\theta_n^*|\geq a_{1})} dy \right.+ \\
& & \left.
\int_{(a +\delta)^{1/4}}^{a_1^{1/4}} \sqrt{P(|\theta_n^*|\geq a_{1})} dy + \sum_{i=1}^{n-1} \int_{a_i^{1/4}}^{a_{i+1}^{1/4}}\sqrt{P(|\theta_n^*|\geq a_{i+1})}\right) dy\\
& = &
\sqrt{\frac12}\left(
(a -\delta)^{1/4} \sqrt{P(A)+P(|\theta_n^*|\geq a_{1})}+ \left((a +\delta)^{1/4}-(a -\delta)^{1/4}\right)\sqrt{\frac12P(A)+P(|\theta_n^*|\geq a_{1})}  \right.+ \\
& & \left.
\left(a_1^{1/4}-(a +\delta)^{1/4}\right) \sqrt{P(|\theta_n^*|\geq a_{1})}  + \sum_{i=1}^{n-1}\left(a_{i+1}^{1/4}-a_i^{1/4}\right)\sqrt{P(|\theta_n^*|\geq a_{i+1})}\right).
\end{eqnarray*}
Note that $P(A)$, $P(|\theta_n^*|\geq a_{1})$, $P(|\theta_n^*|\geq a_{2})$,\ldots, $P(|\theta_n^*|\geq a_{n})=P(|\theta_n^*|=b)$ do not depend on $\delta$ and that $P(A)>0$ by the choice of $a$.
Again, one can directly check that $V^+(0;\Theta_{n+1}(\delta))$ is continuously
differentiable in $\delta$ (in a neighborhood of $0$) and
\[
\frac{\partial}{\partial \delta}V^+(0;\Theta_{n+1}(\delta))\vert_{\delta=0}
=\frac{\sqrt2}8 a^{-3/4} \left(-\sqrt{P(A)+P(|\theta_n^*|\geq a_{1})}+\sqrt2\sqrt{P(A)+2P(|\theta_n^*|\geq a_{1})}-\sqrt{P(|\theta_n^*|\geq a_{1})}\right).
\]

By direct computation, as $P(A)>0$, one get that $\frac{\partial}{\partial \delta}V^+(0;\Theta_{n+1}(\delta))\vert_{\delta=0}>0$ and for $\delta$ small enough, \eqref{rag} holds true. Fix such a $\delta$, recall from \eqref{fd} that  $V^-(0;\Theta_{n+1}(\delta))=V^-(0;\theta_{n+1}^*)$ so $M_n=V(0;\theta_n^*)<V(0;\Theta_{n+1}(\delta))\leq M_{n+1}$ and Proposition \ref{kh} is proved.

\subsubsection{Proof of Lemma \ref{porto}}
\label{appendix porto}

Take $\tau:=\alpha_{T}<\alpha_{T-1}<\ldots<\alpha_1<\alpha_0:=\alpha_-$.
We first prove, by induction on $t$, that $X_t:=X_0+\sum_{j=1}^t
(\theta_j-\phi_j)\Delta S_{j}$, $t\geq 0$ satisfy
\[
E\vert X_t\vert^{\alpha_t}\leq C_t [E\vert X_0\vert^{\alpha_-}+1],
\]
for suitable $C_t>0$. For $t=0$ this is trivial. Assuming it for $t$
we will show it for $t+1$. We first remark that
\[
E\tilde{V}_t(X_t;\theta_{t+1},\ldots,\theta_T)= E\tilde{V}_0(X_0;\theta_{1},\ldots,\theta_T)\geq c
\]
and that by the induction hypothesis $E\vert
X_t\vert^{\alpha_t}<\infty$ holds. As
$\theta\in\tilde{\mathcal{A}}(X_0)\subset\tilde{\mathcal{A}}_0(X_0)$,
$(\theta_{t+1}, \ldots, \theta_{T})\in\tilde{\mathcal{A}}_t(X_t)$
(see Lemma \ref{tavasz}). Thus Lemma \ref{oslo} applies with the
choice $\iota :=(\alpha_{t+1}+\alpha_t)/2$ and $o:=\alpha_t$, and we
can estimate, using H\"older's inequality with $p:=\iota
/\alpha_{t+1}$ (and its conjugate number $q$), plugging in the
induction hypothesis:
\begin{eqnarray*}
E\vert X_{t+1}\vert^{\alpha_{t+1}} & = &
E\vert X_t+(\theta_{t+1}-\phi_{t+1})\Delta S_{t+1}\vert^{\alpha_{t+1}}\\
 & \leq & E\vert X_t\vert^{\alpha_{t+1}}+E\vert (\theta_{t+1}-\phi_{t+1})\Delta S_{t+1}\vert^{\alpha_{t+1}}
\\
 & \leq & E\vert X_t\vert^{\alpha_t}+1+
E^{1/p}\vert (\theta_{t+1}-\phi_{t+1})\vert^{\iota } E^{1/q}\vert\Delta S_{t+1}\vert^{q\alpha_{t+1}}\\
& \leq & E\vert X_t\vert^{\alpha_t}+1+ C \left(E\vert (\theta_{t+1}-\phi_{t+1})\vert^{\iota } +1 \right)\\
& \leq &  E\vert X_t\vert^{\alpha_t}+1+
C \left(K_t(E\vert X_t\vert^{\alpha_t}+1) +1 \right)\\
& \leq & (1+CK_t)C_t\left(E\vert X_0\vert^{\alpha_-}+1\right)+1+C+CK_t
\end{eqnarray*}
with $C:=E^{1/q}\vert\Delta S_{t+1}\vert^{q\alpha_{t+1}}$, this
proves the induction hypothesis for $t+1$.

Now let us observe that, by Lemma \ref{oslo} (with $\iota
=\alpha_{t+1},o=\alpha_t$),
\begin{eqnarray*}
E\vert \theta_{t+1}-\phi_{t+1}\vert^{\tau}
& \leq  &
E\vert \theta_{t+1}-\phi_{t+1}\vert^{\alpha_{t+1}} +1 \\
& \leq  & K_t[E\vert X_t\vert^{\alpha_t}+1]+1\leq K_t [C_t(E\vert X_0\vert^{\alpha_-}+1)+1]+1,
\end{eqnarray*}
concluding the proof.

\subsection{Auxiliary results}\label{het}
We start with simple observations.
\begin{lemma}\label{ratatouille}
Let $(X_n)_n$ be a tight sequence of random variables in $\mathbb{R}^N$. Then, for any random variable $X$ in $\mathbb{R}^N$\\
(i) $(X_n+X)_n$ is  a tight sequence of random variables in $\mathbb{R}^N$.\\
(ii) $(X_n,X)_n$ is  a tight sequence of random variables in
$\mathbb{R}^{2N}$.
\end{lemma}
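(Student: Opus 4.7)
The plan is to use the defining property of tightness directly, namely that for every $\varepsilon>0$ one can confine the relevant laws to a compact set with probability at least $1-\varepsilon$, and then combine such sets by taking sums (for part (i)) or Cartesian products (for part (ii)).

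First I would fix $\varepsilon>0$. By tightness of $(X_n)_n$, there is a compact $K_1\subset\mathbb{R}^N$ with $P(X_n\in K_1)\geq 1-\varepsilon/2$ for all $n$. A single random variable $X$ is automatically tight (its law is a probability measure on $\mathbb{R}^N$, hence inner regular), so there is a compact $K_2\subset\mathbb{R}^N$ with $P(X\in K_2)\geq 1-\varepsilon/2$. From the elementary inequality $P(A\cap B)\geq 1-P(A^c)-P(B^c)$, we get $P(X_n\in K_1,\ X\in K_2)\geq 1-\varepsilon$ for every $n$.

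For part (ii), note that $K_1\times K_2$ is a compact subset of $\mathbb{R}^{2N}$ and $\{(X_n,X)\in K_1\times K_2\}=\{X_n\in K_1\}\cap\{X\in K_2\}$, so the previous bound gives $P((X_n,X)\in K_1\times K_2)\geq 1-\varepsilon$ for all $n$, proving tightness of $((X_n,X))_n$. For part (i), the Minkowski sum $K:=K_1+K_2=\{x+y:x\in K_1,y\in K_2\}$ is the continuous image of the compact set $K_1\times K_2$ under addition, hence compact in $\mathbb{R}^N$; and on the event $\{X_n\in K_1\}\cap\{X\in K_2\}$ we have $X_n+X\in K$, so $P(X_n+X\in K)\geq 1-\varepsilon$ for all $n$.

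There is no real obstacle here; the only point worth being careful about is recalling that (a) a single random variable is tight (so that $K_2$ exists), and (b) the sum of two compact sets in $\mathbb{R}^N$ is compact (so that $K$ in part (i) is a legitimate witness of tightness). Both facts are standard, which is presumably why the authors relegate the statement to the appendix of auxiliary results.
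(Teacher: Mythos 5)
Your proof is correct and rests on the same elementary ingredients as the paper's: tightness of a single random variable, the union bound $P(A\cap B)\geq P(A)+P(B)-1$, and stability of compactness under sums and products. The only cosmetic difference is the order: the paper proves (i) directly (with balls $\{|X_n|\leq k_0\}$ rather than general compacta) and then obtains (ii) by applying (i) to $(X_n,0)+(0,X)$ in $\mathbb{R}^{2N}$, whereas you prove (ii) first and deduce (i) via the Minkowski sum $K_1+K_2$.
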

\begin{proof}
Fix some $\eta>0$, there exists some $k_0>0$ such that $P(|X_n|>k_0)
< \eta/2$, for each $n$. As $\cap_m \{|X| >m\}= \emptyset$, there
exists $k_1$ such that $P(|X|>k_1) < \eta/2$. Thus, we obtain that
$$P(|X_n+X| \leq k_0+k_1) \geq P(|X_n| \leq k_0) +P(|X| \leq k_1) -1 > 1-\eta,$$
showing (i). It is clear that $(X_n,0)_n$ is a tight sequence of
random variables in $\mathbb{R}^{2N}$, so from (i), we deduce (ii).
\end{proof}

\begin{lemma}\label{magi}
If $Y\in\mathcal{W}^+$ then
\[
\int_0^{\infty} P^{\delta}(Y\geq y)dy<\infty,
\]
for all $\delta>0$.
\end{lemma}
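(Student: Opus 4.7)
The plan is to control the tail probability via Markov's inequality, exploiting the fact that $Y\in\mathcal{W}^+$ has finite moments of \emph{every} positive order. Concretely, for any $p>0$, Markov's inequality yields $P(Y\geq y)\leq EY^p/y^p$ for $y>0$, so
\[
P^{\delta}(Y\geq y)\leq \frac{(EY^p)^{\delta}}{y^{p\delta}}.
\]

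I would then split $\int_0^{\infty}P^{\delta}(Y\geq y)\,dy = \int_0^1 + \int_1^{\infty}$. The first piece is bounded by $1$ because $P^{\delta}(Y\geq y)\leq 1$ trivially. For the second piece, I choose $p$ large enough that $p\delta>1$ (e.g.\ $p:=2/\delta$); since $Y\in\mathcal{W}^+$ we have $EY^p<\infty$, and hence
\[
\int_1^{\infty}P^{\delta}(Y\geq y)\,dy\leq (EY^p)^{\delta}\int_1^{\infty}\frac{dy}{y^{p\delta}}<\infty.
\]
Adding the two bounds gives the conclusion.

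There is no real obstacle here: the only point to notice is that $\delta$ can be arbitrarily small, which is precisely why we need moments of \emph{arbitrarily large} order $p$ — and this is exactly what membership in $\mathcal{W}^+$ provides.
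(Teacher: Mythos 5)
Your proof is correct and follows essentially the same route as the paper: apply Markov's (Chebyshev's) inequality to get a polynomial tail bound, then pick the exponent large enough so that $\delta$ times it exceeds $1$. The only difference is that you make the split at $y=1$ explicit (to handle the trivial bound near zero), which the paper leaves implicit.
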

\begin{proof}
As by Chebishev's inequality and $Y\in\mathcal{W}^+$,
\[
P(Y\geq y)\leq M(N)y^{-N},\quad y>0,
\]
for all $N>0$, for a constant $M(N):=EY^N$, we can choose $N$ so
large to have $N\delta>1$, showing that the integral in question is
finite.
\end{proof}

The following Lemmata should be fairly standard. We nonetheless
included their proofs since we could not find an appropriate
reference.

\begin{lemma}\label{ode}
Let $E$ be uniformly distributed on $[0,1]$. Then for each $l\geq 1$
there are measurable $f_1,\ldots, f_l:[0,1]\to [0,1]$ such that
$f_1(E),\ldots,f_l(E)$ are independent and uniform on $[0,1]$.
\end{lemma}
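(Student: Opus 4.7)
The plan is to construct the $f_k$ via the binary expansion of $E$, using the classical fact that the binary digits of a uniform random variable on $[0,1]$ form an i.i.d. Bernoulli$(1/2)$ sequence, and then splitting the digits into $l$ disjoint infinite ``streams''.

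First, I would define, for each integer $n\geq 1$, the measurable function $\xi_n:[0,1]\to\{0,1\}$ extracting the $n$-th binary digit: $\xi_n(x) := \lfloor 2^n x\rfloor \mod 2$. (At dyadic rationals one picks one convention; this ambiguity is on a countable, hence Lebesgue-null, set and is therefore harmless.) Since $E$ is uniform on $[0,1]$, a standard computation shows that $(\xi_n(E))_{n\geq 1}$ is a sequence of i.i.d. Bernoulli$(1/2)$ random variables: indeed, for any $m\geq 1$ and any $(a_1,\ldots,a_m)\in\{0,1\}^m$ the event $\{\xi_1=a_1,\ldots,\xi_m=a_m\}$ is a dyadic interval of length $2^{-m}$, whose probability is $2^{-m}$.

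Next, I would fix a partition $\mathbb{N}=I_1\sqcup\cdots\sqcup I_l$ into $l$ pairwise disjoint infinite subsets (for instance, sort integers by their residue modulo $l$), and for each $k=1,\ldots,l$ list $I_k$ in increasing order as $I_k=\{n^{(k)}_1<n^{(k)}_2<\cdots\}$. Define
\[
f_k(x) := \sum_{j=1}^{\infty} 2^{-j}\,\xi_{n^{(k)}_j}(x),\qquad x\in[0,1].
\]
Each $f_k$ is a pointwise limit of measurable functions and hence measurable, with values in $[0,1]$.

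Finally, I would verify that $(f_1(E),\ldots,f_l(E))$ are independent and uniform on $[0,1]$. Since $(\xi_n(E))_n$ is i.i.d. and the index sets $I_1,\ldots,I_l$ are pairwise disjoint, the sub-sequences $(\xi_{n^{(k)}_j}(E))_{j\geq 1}$, $k=1,\ldots,l$, are mutually independent; this forces independence of the functions $f_k(E)$, since each $f_k(E)$ is a measurable function of the $k$-th sub-sequence only. Moreover, by the same computation as in the first step applied within each stream, $(\xi_{n^{(k)}_j}(E))_{j\geq 1}$ is itself i.i.d. Bernoulli$(1/2)$, which makes $f_k(E)$ uniform on $[0,1]$ (this is the usual characterization of the uniform law via its binary expansion). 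The only subtle point is the measure-zero dyadic-rational issue, which is dealt with by fixing a single convention for $\xi_n$ and noting that the set where $f_k(E)$ fails to have the claimed distribution is contained in a countable union of Lebesgue-null sets.
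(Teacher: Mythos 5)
Your proof is correct, but it follows a genuinely different route from the paper. You use the classical binary-expansion argument: extract the binary digits of $E$ (i.i.d.\ Bernoulli$(1/2)$), partition the digit positions into $l$ disjoint infinite streams, and reassemble each stream into a new uniform variable; independence comes from using disjoint sets of digits, and uniformity from the fact that each stream is again i.i.d.\ Bernoulli$(1/2)$. The paper instead fixes a Borel isomorphism $\psi:\mathbb{R}\to[0,1]^l$ (Kuratowski's theorem), pushes the $l$-dimensional Lebesgue measure back to a law $\kappa$ on $\mathbb{R}$, takes the quantile transform $F^-$ so that $F^-(E)$ has law $\kappa$, and sets $(f_1,\ldots,f_l):=\psi\circ F^-$. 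Your construction is more elementary and fully explicit, avoiding the isomorphism theorem; the paper's approach is less hands-on but fits naturally with the Borel-isomorphism machinery it already uses elsewhere (e.g.\ in Lemma~\ref{preserve}) and generalizes immediately to producing a uniform variable on any uncountable Polish space, not just $[0,1]^l$. Both yield the same conclusion, and your handling of the dyadic-rational ambiguity on a null set is the right way to close the only potential gap in the digit-extraction route.
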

\begin{proof} We first recall that if $\mathcal{Y}_1,\mathcal{Y}_2$ are uncountable
Polish spaces then they are Borel isomorphic, i.e. there is a
bijection $\psi:\mathcal{Y}_1\to\mathcal{Y}_2$ such that
$\psi,\psi^{-1}$ are measurable (with respect to the respective
Borel fields); see e.g. page 159 of \cite{dm}.

Fix a Borel-isomorphism $\psi:\mathbb{R}\to [0,1]^l$ and define the
probability $\kappa(A):=\lambda_l(\psi(A))$, $A\in
\mathcal{B}(\mathbb{R})$, where $\lambda_l$ is the $l$-dimensional
Lebesgue-measure restricted to $[0,1]^l$. Denote by
$F(x):=\kappa((-\infty,x])$, $x\in\mathbb{R}$ the cumulative
distribution function (c.d.f.) corresponding to $\kappa$ and set
\[
F^{-}(u):=\inf\{ q\in\mathbb{Q}: F(q)\geq u\},\ u\in (0,1).
\]
This function is measurable and it is well-known that $F^{-}(E)$ has
law $\kappa$. Now clearly
$$
(f_1(u),\ldots,f_l(u)):=\psi(F^-(u))
$$
is such that $(f_1(E),\ldots, f_l(E))$ has law $\lambda_l$ and the
$f_i$ are measurable and we get the required result, remarking that
$\lambda_l$ is the uniform law on {$[0,1]^l$}.
\end{proof}

\begin{lemma}\label{preserve}
Let $\mu(dy,dz)=\nu(y,dz)\delta(dy)$ be a probability on
$\mathbb{R}^{N_2}\times \mathbb{R}^{N_1}$ such that $\delta(dy)$ is
a probability on $\mathbb{R}^{N_2}$ and $\nu(y,dz)$ is a
probabilistic kernel. Assume that $Y$ has law $\delta(dy)$ and $E$
is independent of $Y$ and uniformly distributed on $[0,1]$. Then
there is a measurable function $G:\mathbb{R}^{N_2}\times
[0,1]\to\mathbb{R}^{N_1}$ such that $(Y, G(Y,E))$ has law
$\mu(dy,dz)$.
\end{lemma}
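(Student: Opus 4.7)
The plan is to reduce to the one-dimensional case via a Borel isomorphism and then use the classical conditional quantile construction. By Kuratowski's theorem (cf.\ p.\ 159 of \cite{dm}), since $\mathbb{R}^{N_1}$ and $[0,1]$ are uncountable Polish spaces, there exists a Borel isomorphism $\psi:\mathbb{R}^{N_1}\to [0,1]$. For each $y\in\mathbb{R}^{N_2}$, push forward the kernel $\nu(y,\cdot)$ through $\psi$ to obtain a probabilistic kernel $\tilde{\nu}(y,\cdot):=\nu(y,\psi^{-1}(\cdot))$ on $[0,1]$. Measurability of $\tilde{\nu}(\cdot,A)$ in $y$ for each Borel $A\subset[0,1]$ follows from that of $\nu(\cdot,\psi^{-1}(A))$.

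The next step is to define the conditional cumulative distribution function $F(y,u):=\tilde{\nu}(y,[0,u])$ for $u\in[0,1]$ and its generalised inverse
\[
F^{-}(y,t):=\inf\{q\in\mathbb{Q}\cap[0,1]:\ F(y,q)\geq t\},\qquad t\in(0,1).
\]
For each fixed $q$, the map $y\mapsto F(y,q)$ is measurable (it is $\tilde{\nu}(y,[0,q])$), so for each fixed $t\in(0,1)$, $\{(y,t):F^{-}(y,t)\leq s\}=\bigcap_{q\in\mathbb{Q}\cap[0,1],q>s}\{(y,t):F(y,q)<t\}^c$ (up to standard adjustments) is a Borel set in $\mathbb{R}^{N_2}\times(0,1)$; hence $F^{-}$ is jointly measurable. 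I then set $G(y,t):=\psi^{-1}(F^{-}(y,t))$ (with any fixed value assigned on the null set $t\in\{0,1\}$), which is measurable from $\mathbb{R}^{N_2}\times[0,1]$ to $\mathbb{R}^{N_1}$.

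It remains to check that $(Y,G(Y,E))$ has law $\mu$. Since $E$ is independent of $Y$ and uniform on $[0,1]$, the classical identity $P(F^{-}(y,E)\leq s)=F(y,s)$ shows that, conditionally on $Y=y$, $F^{-}(y,E)$ has law $\tilde{\nu}(y,\cdot)$, and so $G(Y,E)=\psi^{-1}(F^{-}(Y,E))$ has conditional law $\nu(Y,\cdot)$. Therefore for any Borel rectangle $A\times B\subset\mathbb{R}^{N_2}\times\mathbb{R}^{N_1}$,
\[
P(Y\in A,\ G(Y,E)\in B)=\int_{A}\nu(y,B)\,\delta(dy)=\mu(A\times B),
\]
which by a monotone class argument extends to all Borel sets, as required.

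The only delicate point is the joint measurability of $F^{-}$; this is the standard issue with conditional quantile transforms, and is handled as above by writing the level sets $\{F^{-}\leq s\}$ in terms of countably many rational levels of $F$. Everything else (the Borel isomorphism, the quantile representation, and the monotone class extension) is routine.
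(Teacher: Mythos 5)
Your proposal is correct and follows essentially the same route as the paper: reduce to a one-dimensional target via a Borel isomorphism (the paper uses $\psi:\mathbb{R}\to\mathbb{R}^{N_1}$, you use $\psi:\mathbb{R}^{N_1}\to[0,1]$, a cosmetic difference), then apply the conditional quantile transform $F^{-}(y,\cdot)$ to the uniform variable $E$ and compose back with the isomorphism, with joint measurability of $F^{-}$ obtained by expressing its sublevel sets through countably many rational levels of $F$.
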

\begin{proof}
Just like in the previous proof, fix a Borel isomorphism
$\psi:\mathbb{R}\to\mathbb{R}^{N_1}$. Consider the measure on
$\mathbb{R}\times\mathbb{R}^{N_2}$ defined by $\tilde{\mu}(A\times
B):=\int_A\nu(y,\psi(B))\delta(dy)$,
$A\in\mathcal{B}(\mathbb{R}^{N_2})$, $B\in\mathcal{B}(\mathbb{R})$.
For $\delta$-almost every $y$, $\nu(y,\psi(\cdot))$ is a probability
measure on $\mathbb{R}$. Let $F(y,z):=\nu(y,\psi((-\infty,z])))$
denote its cumulative distribution function and define
\[
F^{-}(y,u):=\inf\{ q\in\mathbb{Q}: F(y,q)\geq u\},\ u\in (0,1),
\]
this is easily seen to be $\mathcal{B}(\mathbb{R}^{N_2}) \otimes
\mathcal{B}([0,1])$-measurable. Then, for $\delta$-almost every $y$,
$F^-(y,E)$ has law $\nu(y,\psi(\cdot))$. Hence $(Y,F^-(Y,E))$ has
law $\tilde{\mu}$. Consequently, $(Y,\psi(F^-(Y,E)))$ has law $\mu$
and we may conclude setting $G(y,u):=\psi(F^-(y,u))$. The technique
of this proof is well-known, see e.g. page 228 of \cite{bw}.
\end{proof}

The following Lemmata are used in section \ref{exaoa}. Lemma
\ref{trf} is standard and its proof is omitted.

\begin{lemma}\label{trf}
Let $X$ be a real-valued random variable with atomless law. Let
$F(x):=P(X\leq x)$ denote its cumulative distribution function. Then
$F(X)$ has uniform law on $[0,1]$.
\end{lemma}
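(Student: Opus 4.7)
The plan is to exploit the fact that an atomless law forces $F$ to be continuous, which will then let me invert it in a controlled way even on any flat plateaus of $F$. First I would note that the edge values $u=0$ and $u=1$ are immediate since $F(X)\in[0,1]$ a.s., so I fix $u\in(0,1)$ and aim to show $P(F(X)\leq u)=u$. The atomlessness gives $P(X=x)=0$ for every $x$, i.e.\ $F$ has no jumps, so $F$ is continuous on $\mathbb{R}$.

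The key step is to introduce the right-endpoint pseudo-inverse
\[
q(u):=\sup\{y\in\mathbb{R}:F(y)\leq u\},
\]
and to prove $F(q(u))=u$. For any sequence $y_n\uparrow q(u)$ with $F(y_n)\leq u$, continuity gives $F(q(u))\leq u$; conversely every $y>q(u)$ satisfies $F(y)>u$ by definition of the sup, so letting $y\downarrow q(u)$ and using continuity gives $F(q(u))\geq u$. Because $F(y)\to 1$ as $y\to\infty$, finiteness of $q(u)$ is automatic for $u<1$.

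Next I would establish the set identity
\[
\{F(X)\leq u\}=\{X\leq q(u)\}
\]
up to a $P$-null set. The inclusion $\supseteq$ uses monotonicity: $X\leq q(u)\Rightarrow F(X)\leq F(q(u))=u$. The inclusion $\subseteq$ uses the definition of $q(u)$ directly: if $X>q(u)$ then $X$ lies strictly above every point at which $F\leq u$, so $F(X)>u$. Combining these gives $P(F(X)\leq u)=P(X\leq q(u))=F(q(u))=u$, which is exactly the CDF of the uniform law on $[0,1]$, completing the proof.

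The only mildly delicate point I expect is bookkeeping around continuity of $F$ at $q(u)$, which is why the atomlessness hypothesis is essential: without it the identity $F(q(u))=u$ may fail (e.g.\ for a discrete $X$ the supremum can land at a jump, producing $F(q(u))>u$ and breaking the whole argument). Once continuity is in hand, the computation is essentially a one-line identity.
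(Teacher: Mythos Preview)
Your argument is correct. The paper itself omits the proof of this lemma, stating only that it ``is standard and its proof is omitted,'' so there is nothing to compare against; your write-up supplies exactly the standard details. One minor remark: the set identity $\{F(X)\leq u\}=\{X\leq q(u)\}$ actually holds everywhere, not merely up to a null set, since $F(X)\leq u$ means $X$ belongs to the set whose supremum is $q(u)$, hence $X\leq q(u)$, and the reverse inclusion you already have from monotonicity and $F(q(u))=u$.
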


\begin{lemma}\label{kella}
Let $(X,W)$ be an $(n+m)$-dimensional random variable such that the
conditional law of $X$ w.r.t. $\sigma(W)$ is a.s. atomless. Then
there is a measurable $G:\mathbb{R}^{n+m}\to\mathbb{R}^n$ such that
$G(X,W)$ is independent of $W$ with uniform law on $[0,1]$.
\end{lemma}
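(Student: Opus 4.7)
My plan is to reduce to the one-dimensional case by a Borel isomorphism and then apply a conditional version of the classical cumulative-distribution-function trick already used in Lemma \ref{trf}. First, since $\mathbb{R}^n$ and $\mathbb{R}$ are both uncountable Polish spaces, I would fix a Borel isomorphism $\psi:\mathbb{R}^n\to\mathbb{R}$ (cf.\ p.~159 of \cite{dm}) and set $\tilde X:=\psi(X)$. Because $\psi$ is a Borel bijection, the pushforward under $\psi$ of any atomless probability on $\mathbb{R}^n$ is atomless; hence the hypothesis that the conditional law of $X$ given $\sigma(W)$ is a.s.\ atomless transfers verbatim to $\tilde X$.

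Next, I would use the Polishness of $\mathbb{R}$ to get a regular conditional distribution $\nu(w,dz)$ of $\tilde X$ given $W$, and set $F(z,w):=\nu(w,(-\infty,z])$. Standard arguments give that $F$ is jointly Borel-measurable on $\mathbb{R}\times\mathbb{R}^m$. Define
$$
G(x,w):=F(\psi(x),w),\qquad (x,w)\in\mathbb{R}^n\times\mathbb{R}^m,
$$
which is a measurable map into $[0,1]$.

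The key step is to check that $G(X,W)$ is uniform on $[0,1]$ and independent of $W$. Using the tower property and the defining property of the regular conditional distribution, for Borel sets $A\subset[0,1]$ and $B\subset\mathbb{R}^m$,
$$
P(G(X,W)\in A,\,W\in B)=\int_B\nu\bigl(w,\{z\in\mathbb{R}:F(z,w)\in A\}\bigr)\,P_W(dw).
$$
For $P_W$-a.e.\ $w$, $\nu(w,\cdot)$ is atomless, so $F(\cdot,w)$ is continuous on $\mathbb{R}$; Lemma \ref{trf} applied to a real-valued random variable with law $\nu(w,\cdot)$ then tells us that the pushforward of $\nu(w,\cdot)$ under $F(\cdot,w)$ is the Lebesgue measure $\lambda$ on $[0,1]$. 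Consequently the inner integrand equals $\lambda(A)$ for $P_W$-a.e.\ $w$, and we obtain
$$
P(G(X,W)\in A,\,W\in B)=\lambda(A)\,P(W\in B),
$$
which simultaneously yields the uniform law and the independence of $W$.

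The main obstacle I anticipate is purely technical: producing the regular conditional kernel $\nu(w,\cdot)$ and verifying the joint measurability of the conditional c.d.f.\ $F$, together with the fact that atomlessness of the conditional laws of $X$ transfers to those of $\tilde X$. These are classical facts in the Polish setting, and once they are in place the lemma reduces cleanly to Lemma \ref{trf} applied $\omega$-wise.
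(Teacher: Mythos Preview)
Your proposal is correct and follows essentially the same route as the paper's proof: both reduce to the scalar case via a Borel isomorphism $\psi:\mathbb{R}^n\to\mathbb{R}$, form the conditional c.d.f.\ of $\psi(X)$ given $W$ using a regular conditional distribution, and then invoke Lemma~\ref{trf} fibrewise to conclude that the resulting $G(X,W)$ has uniform conditional law given $W$ (hence is independent of $W$). The paper handles joint measurability of the conditional c.d.f.\ by citing p.~70 of \cite{castaing-valadier}, while you sketch the same point as a ``standard argument''; otherwise the two proofs are identical in structure.
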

\begin{proof} Let us fix a Borel-isomorphism
$\psi:\mathbb{R}^n\to\mathbb{R}$. Note that $\psi(X)$ also has an
a.s. atomless conditional law w.r.t. $\sigma(W)$.  Define (using a
regular version of the conditional law),
\begin{eqnarray*}
H(x,w):=P(\psi(X)\leq x|W=w),\ (x,w)\in\mathbb{R}\times\mathbb{R}^{m},
\end{eqnarray*}
this is $\mathcal{B}(\mathbb{R})\otimes
\mathcal{B}(\mathbb{R}^m)$-measurable (using p. 70 of
\cite{castaing-valadier} and the fact that $H$ is continuous in $x$
a.s. by hypothesis and measurable for each fixed $w$ since we took a
regular version of the conditional law). It follows by Lemma
\ref{trf} that the conditional law of $H(\psi(X),W)$ w.r.t.
$\sigma(W)$ is a.s. uniform on $[0,1]$ which means that it is
independent of $W$.
Hence we may define $G(x,w):=H(\psi(x),w)$, which is measurable
since $H$ and $\psi$ are measurable.
\end{proof}

\subsection{On a sufficient condition for Assumption \ref{as3}}\label{discussion}



\subsubsection{Proof of Proposition \ref{+++}}
\label{+++appendix} Apply Corollary \ref{bank} with the choice
$k:=TN$ and $\tilde{W}_{(t-1)N+l}:=\tilde{Z}_t^l$ for $l=1,\ldots,N$
and $t=1,\ldots, T$. By the construction of Corollary \ref{bank}, {taking $Z_t^l:=W_{(t-1)N+l}$},  one has
$(\tilde{Z}_1,\ldots,\tilde{Z}_t)=g_{tN}(TN)(Z_1,\ldots,Z_t)$ hence
indeed $\mathcal{G}_t=\sigma(Z_1,\ldots,Z_t)$ for $t=1,\ldots,T$. It
is also clear that ${\Delta}S_t$ is a continuous function of
$Z_1,\ldots,Z_t$ as well.

\subsubsection{Statement and proof of Corollary \ref{bank}}
In the proof of Proposition \ref{+++}, we need Lemma \ref{end} and
its Corollary \ref{bank} below.\footnote{We thank Walter
Schachermayer for his valuable suggestions concerning Lemma
\ref{end}.}

\begin{lemma}\label{end}
Let $(\tilde{W},W)$ be a $\mathbb{R}\times\mathbb{R}^k$-valued
random variable with continuous everywhere positive density
$f(x^1,\ldots,x^{k+1})$ (with respect to the $k+1$-dimensional
Lebesgue measure) such that the function
\begin{equation}\label{tiri}
x^1\to \sup_{x^2,\ldots,x^k} f(x^1,\ldots,x^{k+1})
\end{equation}
is integrable on $\mathbb{R}$. Then there is a homeomorphism
$H:\mathbb{R}^{k+1}\to [0,1] \times \mathbb{R}^{k}$ such that
$H^i(x^1,\ldots,x^{k+1})=x^i$ for $i=2,\ldots,k+1$ and
$Z:=H^1(\tilde{W},W)$ is uniform on $[0,1]$, independent of $W$.
\end{lemma}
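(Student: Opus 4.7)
The plan is to build $H$ explicitly from the conditional distribution function of $\tilde{W}$ given $W$. Writing $w=(x^2,\ldots,x^{k+1})$, I will set
\[
H(x^1,w):=(F(x^1\mid w),\, w),\qquad F(x^1\mid w):=\int_{-\infty}^{x^1} f(y\mid w)\,dy,
\]
where $f(\cdot\mid w)$ is the conditional density of $\tilde{W}$ given $W=w$; then $Z=H^1(\tilde W,W)=F(\tilde W\mid W)$ will be uniform by the probability integral transform, and conditional on $W=w$ for every $w$, hence independent of $W$.

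The first step is to control the marginal $f_W(w)=\int_{\mathbb R}f(x^1,w)\,dx^1$. The integrability hypothesis in \eqref{torto} supplies an $x^1$-integrable dominating function that does not depend on $w$, so dominated convergence gives that $f_W$ is continuous; it is strictly positive because $f$ is. The conditional density $f(x^1\mid w):=f(x^1,w)/f_W(w)$ is thus jointly continuous and strictly positive; applying dominated convergence once more (with the same envelope) shows that $F(x^1\mid w)$ is jointly continuous in $(x^1,w)$. Since $f(\cdot\mid w)>0$, $x^1\mapsto F(x^1\mid w)$ is strictly increasing and maps $\mathbb R$ onto $(0,1)$ for every $w$.

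From this, $H$ is a continuous bijection from $\mathbb R^{k+1}$ onto $(0,1)\times\mathbb R^k$. Its inverse is $H^{-1}(z,w)=(F^{-1}(z\mid w),w)$ where $F^{-1}(\cdot\mid w)$ is the inverse of $F(\cdot\mid w)$ in the first argument. To check continuity of $H^{-1}$ it is enough to check continuity of $(z,w)\mapsto F^{-1}(z\mid w)$: if $(z_n,w_n)\to(z,w)$ in $(0,1)\times\mathbb R^k$, monotonicity and the fact that $F(\cdot\mid w)$ hits every level in $(0,1)$ prevent $F^{-1}(z_n\mid w_n)$ from escaping to infinity, and any limit point $y^*$ of the sequence satisfies $F(y^*\mid w)=z$ by joint continuity of $F$, so $y^*=F^{-1}(z\mid w)$. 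Thus $H$ is a homeomorphism onto $(0,1)\times\mathbb R^k\subset[0,1]\times\mathbb R^k$, which suffices for the statement since $Z\in(0,1)$ a.s.

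Finally, for measurable $A\subset[0,1]$ and $B\subset\mathbb R^k$,
\[
P(Z\in A,\,W\in B)=\int_B\!\int_{\{x^1:F(x^1\mid w)\in A\}} f(x^1\mid w)\,dx^1\,f_W(w)\,dw=\lambda(A)\,P(W\in B),
\]
using the change of variable $u=F(x^1\mid w)$ in the inner integral (its Jacobian is exactly $f(x^1\mid w)$). This gives that $Z$ is uniform on $[0,1]$ and independent of $W$, completing the proof. The main obstacle is ensuring that $H^{-1}$ is continuous jointly in $(z,w)$; this is handled by the monotonicity-plus-continuity argument above, which avoids invoking any differentiability beyond what positivity and continuity of $f$ already provide.
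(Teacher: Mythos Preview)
Your proof is correct and follows the same construction as the paper: define $H^1$ to be the conditional c.d.f.\ of $\tilde W$ given $W$, leave the remaining coordinates unchanged, and invoke the probability integral transform. The only notable difference is in how the homeomorphism property is established: the paper observes that $H$ is a continuous bijection and then appeals to invariance of domain (Theorem~4.3 in \cite{d}) to conclude it is a homeomorphism, whereas you verify continuity of $H^{-1}$ directly via a monotonicity-and-compactness argument. Your route is more self-contained; the paper's is shorter but relies on a nontrivial topological result. Both are valid, and your observation that the image is actually $(0,1)\times\mathbb{R}^k$ rather than $[0,1]\times\mathbb{R}^k$ is a fair point that the paper glosses over.
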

\begin{proof}
The conditional distribution function of $\tilde{W}$ knowing
$W=(x^2,\ldots,x^{k+1})$,
\[
F(x^1,\ldots,x^{k+1}):=
\frac{\int_{-\infty}^{x^1}f(z,x^2,\ldots,x^{k+1})dz}{\int_{-\infty}^{\infty}
f(z,x^2,\ldots,x^{k+1})dz},
\]
is continuous (due to the integrability of \eqref{tiri} and
Lebesgue's theorem). By everywhere positivity of $f$, $F$ is also
strictly increasing in $x^1$. It follows that the function
\[
H:\, (x^1,\ldots,x^{k+1})\to (F(x^1,\ldots,x^{k+1}),x^2,\ldots,x^{k+1})
\]
is a bijection and hence a homeomorhpism by Theorem 4.3 in \cite{d}.
{By Lemma \ref{trf}} the conditional law $P(H^1(\tilde{W},W)\in\cdot\,\vert
W=(x^2,\ldots,x^{k+1}))$ is uniform on $[0,1]$ for Lebesgue-almost
all $(x^2,\ldots,x^{k+1})$, which shows that $H^1(\tilde{W},W)$ is
independent of $W$ with uniform law on $[0,1]$.
\end{proof}

\begin{corollary}\label{bank}
Let $(\tilde{W}_1,\ldots,\tilde{W}_k)$ be an $\mathbb{R}^k$-valued
random variable with continuous and everywhere positive density
(w.r.t. the $k$-dimensional Lebesgue measure) such that for all
$i=1,\ldots,k$, the function
\begin{equation}\label{morrhi}
z\to \sup_{x^1,\ldots,x^{i-1}} f_i(x^1,\ldots,x^{i-1},z)
\end{equation}
is integrable on $\mathbb{R}$, where $f_i$ is the density of
$(\tilde{W}_1,\ldots, \tilde{W}_i)$ {for $i \geq 2$}. There are independent random
variables $W_1,\ldots,W_k$ and homeomorphisms
$g_l(k):\mathbb{R}^l\to\mathbb{R}^l$, $1\leq l\leq k$ such that
$(\tilde{W}_1,\ldots,\tilde{W}_l)=g_l(k)(W_1,\ldots,W_l)$.
\end{corollary}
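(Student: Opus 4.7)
The plan is to proceed by induction on $k$, peeling off one coordinate at a time using Lemma \ref{end}. For the base case $k=1$, I take $W_1:=\tilde{W}_1$ and $g_1(1):=\mathrm{id}_{\mathbb{R}}$; the only required integrability condition is $z\mapsto f_1(z)$ being integrable, which is automatic since $f_1$ is a probability density.

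For the inductive step, assume the statement holds for $k-1$. First I need to verify that $(\tilde{W}_1,\ldots,\tilde{W}_{k-1})$ satisfies the hypotheses of the corollary with parameter $k-1$. Continuity of $f_{k-1}(x^1,\ldots,x^{k-1})=\int_{\mathbb{R}}f_k(x^1,\ldots,x^{k-1},z)\,dz$ follows from dominated convergence using the integrability bound \eqref{morrhi} with $i=k$; everywhere positivity is inherited from $f_k$. The integrability conditions for $i=1,\ldots,k-1$ involve only $f_1,\ldots,f_{k-1}$ and therefore transfer verbatim. The inductive hypothesis then produces independent random variables $W_1,\ldots,W_{k-1}$ and homeomorphisms $g_l(k-1):\mathbb{R}^l\to\mathbb{R}^l$, $l=1,\ldots,k-1$, with $(\tilde{W}_1,\ldots,\tilde{W}_l)=g_l(k-1)(W_1,\ldots,W_l)$.

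Next I apply Lemma \ref{end} to the reshuffled vector $(\tilde{W}_k,\tilde{W}_1,\ldots,\tilde{W}_{k-1})$, whose joint density $(z,x^1,\ldots,x^{k-1})\mapsto f_k(x^1,\ldots,x^{k-1},z)$ is continuous, everywhere positive, and whose integrability condition is precisely \eqref{morrhi} for $i=k$. This yields a homeomorphism $H_k:\mathbb{R}^k\to[0,1]\times\mathbb{R}^{k-1}$ (acting as the identity on the last $k-1$ coordinates) such that $Z_k:=H_k^1(\tilde{W}_k,\tilde{W}_1,\ldots,\tilde{W}_{k-1})$ is uniform on $[0,1]$ and independent of $(\tilde{W}_1,\ldots,\tilde{W}_{k-1})$. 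Since the conditional density is everywhere positive, the conditional c.d.f.\ is strictly increasing with range $(0,1)$, so in fact $Z_k\in(0,1)$ and $H_k$ restricts to a homeomorphism $\mathbb{R}^k\to(0,1)\times\mathbb{R}^{k-1}$.

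Finally, fix once and for all a homeomorphism $\varphi:(0,1)\to\mathbb{R}$ (extended measurably to $[0,1]$) and set $W_k:=\varphi(Z_k)$, which is $\mathbb{R}$-valued and independent of $(\tilde{W}_1,\ldots,\tilde{W}_{k-1})$, hence independent of $(W_1,\ldots,W_{k-1})$ (which are deterministic functions of the latter); so $W_1,\ldots,W_k$ are jointly independent. For $l<k$, take $g_l(k):=g_l(k-1)$. For $l=k$, define $g_k(k):\mathbb{R}^k\to\mathbb{R}^k$ by first computing $(\tilde{w}_1,\ldots,\tilde{w}_{k-1}):=g_{k-1}(k-1)(w_1,\ldots,w_{k-1})$, then $\tilde{w}_k:=\bigl[H_k^{-1}\bigl(\varphi^{-1}(w_k),\tilde{w}_1,\ldots,\tilde{w}_{k-1}\bigr)\bigr]^1$, and outputting $(\tilde{w}_1,\ldots,\tilde{w}_k)$. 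This is a composition of homeomorphisms and hence a homeomorphism of $\mathbb{R}^k$; by construction it sends $(W_1,\ldots,W_k)$ to $(\tilde{W}_1,\ldots,\tilde{W}_k)$, closing the induction. The only technical subtlety is the codomain mismatch between Lemma \ref{end} (which produces a $[0,1]$-valued variable) and the requirement $g_l(k):\mathbb{R}^l\to\mathbb{R}^l$; this is resolved by $\varphi$ together with the observation that $Z_k$ actually lands in $(0,1)$.
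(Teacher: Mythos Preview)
Your proof is correct and follows essentially the same inductive strategy as the paper: peel off the last coordinate using Lemma \ref{end}, set $g_l(k):=g_l(k-1)$ for $l<k$, and build $g_k(k)$ by composing the homeomorphism from the lemma with $g_{k-1}(k-1)$ on the first $k-1$ coordinates. You are in fact more careful than the paper on two points: you explicitly verify (via dominated convergence from \eqref{morrhi} with $i=k$) that the marginal density $f_{k-1}$ is continuous and everywhere positive so that the inductive hypothesis applies, and you address the codomain mismatch between Lemma \ref{end} (which lands in $[0,1]\times\mathbb{R}^{k-1}$) and the required $g_k(k):\mathbb{R}^k\to\mathbb{R}^k$ by composing with a homeomorphism $\varphi:(0,1)\to\mathbb{R}$. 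The paper simply writes ``Lemma \ref{end} provides a homeomorphism $s:\mathbb{R}^{k+1}\to\mathbb{R}^{k+1}$'' without comment, so your treatment is actually cleaner on this point.
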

\begin{proof} The case $k=1$ is vacuous.
Assume that the statement is true for $k\geq 1$, let us prove it for
$k+1$. We may set $g_l(k+1):=g_l(k)$, $1\leq l\leq k$, it remains to
construct $g_{k+1}(k+1)$ and $W_{k+1}$.

We wish to apply Lemma \ref{end} in this induction step with
the choice $\tilde{W}:=\tilde{W}_{k+1}$ and $W:=(\tilde{W}_1,\ldots,\tilde{W}_k)$.
In order to do this we need that
$$
z\to \sup_{x^1,\ldots,x^k} f_{k+1}(x^1,\ldots,x^k,z)
$$
is integrable, where $f_{k+1}$ is the joint density of $(\tilde{W}_1,
\ldots,\tilde{W}_{k+1})$, but this
is guaranteed by \eqref{morrhi}.

Thus Lemma \ref{end} provides a homeomorphism
$s:\mathbb{R}^{k+1}\to\mathbb{R}^{k+1}$ such that
$s^m(x^1,\ldots,x^{k+1})=x^m$, $1\leq m\leq k$ and
$W_{k+1}:=s^{k+1}(\tilde{W}_1,\ldots,\tilde{W}_{k+1})$ is
independent of $(\tilde{W}_1,\ldots,\tilde{W}_k)$ (and hence of
$(W_1,\ldots,W_k)=g_k(k)^{-1}(\tilde{W}_1,\ldots,\tilde{W}_k)$).
Define $a:\mathbb{R}^{k+1}\to\mathbb{R}^{k+1}$ by
\begin{eqnarray*}
a(x^1,\ldots,x^{k+1})& := & (g_k(k)^{-1}(x^1,\ldots,x^{k}),s^{k+1}(x^1,\ldots,x^{k+1}))\\
 & = &
s(g_k(k)^{-1}(x^1,\ldots,x^{k}),x^{k+1}),
\end{eqnarray*}
$a$ is a homeomorphism since it is the composition of two
homeomorphisms. Notice that $a(\tilde{W}_1,
\ldots,\tilde{W}_{k+1})=(W_1,\ldots,W_{k+1})$. Set
$g_{k+1}(k+1):=a^{-1}$. This finishes the proof of the induction
step and hence concludes the proof.
\end{proof}


\end{document}